\documentclass[a4paper,USenglish,cleveref,autoref,thm-restate]{lipics-v2021}

\usepackage[utf8]{inputenc}
\usepackage{graphicx}
\usepackage{enumerate}
\usepackage{url} 
\usepackage{hyperref}
\usepackage{amsthm,amssymb,amsfonts,amsmath}
\usepackage[ruled,linesnumbered]{algorithm2e}
\usepackage{xpatch}
\usepackage{bbm}
\usepackage{mathtools}

\bibliographystyle{plainurl}% the recommnded bibstyle
\usepackage{doi}
\usepackage{hyperref}
\usepackage{xcolor}
\usepackage{cite}
\usepackage{mathtools}

\graphicspath{{pictures/}{../pictures/}}

%%% Own Commands
\newcommand{\GG}{\ensuremath{\mathcal{G}}\xspace}
\newcommand{\HH}{\ensuremath{\mathcal{H}}\xspace}
\newcommand{\KK}{\ensuremath{\mathcal{K}}\xspace}
\newcommand{\eps}{\varepsilon}

\DeclareMathOperator{\diam}{diam}
\DeclareMathOperator{\ecc}{ecc}
\DeclareMathOperator{\mean}{mean}
\DeclareMathOperator{\sumdist}{sumdist}
\DeclareMathOperator{\extreme}{endp}

\DeclareMathOperator{\CDF}{CDF}

\DeclareMathOperator{\Type}{Type}

\DeclareMathOperator{\area}{area}

% -----
%  Data structure operations
% -----

\newcommand{\Create}{\ensuremath{\textsc{Create}}}
\newcommand{\Cut}{\ensuremath{\textsc{Cut}}}
\newcommand{\Link}{\ensuremath{\textsc{Link}}}
\newcommand{\GetVertexValue}{\ensuremath{\textsc{GetVertexValue}}}
\newcommand{\GetEdgeValue}{\ensuremath{\textsc{GetEdgeValue}}}
\newcommand{\AddTree}{\ensuremath{\textsc{AddTree}}}
\newcommand{\AddLeftPath}{\ensuremath{\textsc{AddLeftPath}}}
\newcommand{\MaxTree}{\ensuremath{\textsc{MaxTree}}}
\newcommand{\SumTree}{\ensuremath{\textsc{SumTree}}}
\DeclareMathOperator*{\Left}{\mathsf{left}}
\DeclareMathOperator*{\Right}{\mathsf{right}}

\def\sstart{\ensuremath{\mathsf{start}}}
\def\eend{\ensuremath{\mathsf{end}}}
\def\maxWeight{\ensuremath{\mathsf{maxWeight}}}
\def\extra{\ensuremath{\mathsf{extra}}}
\def\maxWeightPath{\ensuremath{\mathsf{maxWeightPath}}}
\def\extraPath{\ensuremath{\mathsf{extraPath}}} 
\def\maxWeightHanging{\ensuremath{\mathsf{maxWeightHanging}}} 
\def\extraHanging{\ensuremath{\mathsf{extraHanging}}}
\def\maxWeightLeft{\ensuremath{\mathsf{maxWeightLeft}}} 
\def\extraLeft{\ensuremath{\mathsf{extraLeft}}}
\def\maxWeightRight{\ensuremath{\mathsf{maxWeightRight}}} \def\extraRight{\ensuremath{\mathsf{extraRight}}}
% -----

\definecolor{KITblue}{rgb}{0.274 0.392 0.666}
\definecolor{KITblue50}{rgb}{0.637 0.696 0.833}
\definecolor{KITlilac}{rgb}{0.627 0 0.47}
\definecolor{KITlilac50}{rgb}{0.813 0.5 0.735}
\definecolor{KITgreen}{rgb}{0.509 0.745 0.235}

\definecolor{defblue}{rgb}{0.1,0.4,0.6} % bit darker
\let\emph\relax\DeclareTextFontCommand{\emph}{\color{defblue}\em}

%%% Uncomment to remove all comments
%\renewcommand{\niceremarkcolor}[4]{}

%%%%%%%%%%%%%%%%%%%%%%%%%%%%%%%%%%%%%%%%%%%%%%%%%%%%%%%%%%%%%%%%%%%

%%%%%%%%%%%%%%%%%%%%%%%%%%%%%%%%%%%%%%%%%%%%%%%%%%%%%%%%%%%%%%%%%%%

\title{\texorpdfstring{Algorithms for Distance Problems\\ in Continuous Graphs}{Algorithms for Distance Problems in Continuous Graphs}}
\titlerunning{Algorithms for distance problems in continuous graphs}
% \author{Sergio Cabello, Delia Garijo, Antonia Kalb, Fabian Klute, \\ Irene Parada, Rodrigo Silveira}

% Author metadata::begin %%%%%%%%%%%%%%%%%%%%%%%%%%%%%%%%%%%%%%%%%%%%%%%%
\author{Sergio Cabello}{Faculty of Mathematics and Physics, University of Ljubljana, Ljubljana, Slovenia \and Institute~of~Mathematics, Physics and Mechanics, Ljubljana, Slovenia}{sergio.cabello@fmf.uni-lj.si}{0000-0002-3183-4126}{}

\author{Delia Garijo}{University of Seville, Spain}{dgarijo@us.es}{https://orcid.org/0000-0002-0493-4754}{}

\author{Antonia Kalb}{Technical University of Dortmund, Germany}{antonia.kalb@tu-dortmund.de}{https://orcid.org/0009-0009-0895-8153}{}

\author{Fabian Klute}{Universitat Politècnica de Catalunya, Barcelona, Spain}{fabian.klute@upc.edu}{0000-0002-7791-3604}{}

\author{Irene Parada}{Universitat Politècnica de Catalunya, Barcelona, Spain}{irene.parada@upc.edu}{0000-0003-3147-0083}{}

\author{Rodrigo I. Silveira}{Universitat Politècnica de Catalunya, Barcelona, Spain}{rodrigo.silveira@upc.edu}{https://orcid.org/0000-0003-0202-4543}{}

\authorrunning{S.\ Cabello, D.\ Garijo, A.\ Kalb, F.\ Klute, I.\ Parada, and R.\ I.\ Silveira} 
\Copyright{Sergio Cabello, Delia Garijo, Antonia Kalb, Fabian Klute, Irene Parada, and Rodrigo I.\ Silveira}

\keywords{diameter, mean distance, continuous graph, treewidth, planar graph}
\ccsdesc[500]{ Theory of computation ~ Design and analysis of algorithms}

%\category{} %optional, e.g. invited paper

\relatedversion{A preliminary version of the paper appeared in 19th International Symposium on Algorithms and Data Structures (WADS 2025) \doi{10.4230/LIPIcs.WADS.2025.13}.} %optional, e.g. full version hosted on arXiv, HAL, or other respository/website
%\relatedversiondetails[linktext={opt. text shown instead of the URL}, cite=DBLP:books/mk/GrayR93]{Classification (e.g. Full Version, Extended Version, Previous Version}{URL to related version} %linktext and cite are optional

%\supplement{}%optional, e.g. related research data, source code, ... hosted on a repository like zenodo, figshare, GitHub, ...
%\supplementdetails[linktext={opt. text shown instead of the URL}, cite=DBLP:books/mk/GrayR93, subcategory={Description, Subcategory}, swhid={Software Heritage Identifier}]{General Classification (e.g. Software, Dataset, Model, ...)}{URL to related version} %linktext, cite, and subcategory are optional

\funding{Delia Garijo, Fabian Klute, Irene Parada and Rodrigo I. Silveira are partially supported by grant PID2023-150725NB-I00 funded by MICIU/AEI/ 10.13039/501100011033. Delia Garijo is also partially supported by grants RED2024-153572-T and SOL2024-31596 (funded by PPIT-FEDER Andalucía 2021-2027). Sergio Cabello is funded in part by the Slovenian Research and Innovation Agency (P1-0297, N1-0218, N1-0285, J1-70045) and in part by the European Union (ERC, KARST, project number 101071836). Views and opinions expressed are however those of the authors only and do not necessarily reflect those of the European Union or the European Research Council. Neither the European Union nor the granting authority can be held responsible for them.}%optional, to capture a funding statement, which applies to all authors. Please enter author specific funding statements as fifth argument of the \author macro.

%\acknowledgements{I want to thank \dots}%optional

\nolinenumbers %uncomment to disable line numbering

\hideLIPIcs

\begin{document}

\maketitle

\begin{abstract}
    We study the problem of computing the diameter and the mean distance of a continuous graph, i.e., a  connected graph where all points along the edges, instead of only the vertices, must be taken into account. 
    It is known that for continuous graphs with $m$ edges these values can be computed in roughly $O(m^2)$ time. 
    In this paper, we use geometric techniques to obtain subquadratic time algorithms to compute the diameter and the mean distance of a continuous graph for two well-established classes of sparse graphs.
    We show that the diameter and the mean distance of a continuous graph whose treewidth is bounded by a constant $k$ can be computed in $O(n\log^{O(k)} n)$ time, where $n$ is the number of vertices in the graph.
    We also show that computing the diameter and the mean distance of a continuous planar graph with $n$ vertices and $F$ faces takes $O(n F \log n)$ time.
\end{abstract}

\section{Introduction}

Graph parameters dealing with distances provide fundamental information on the graph. 
The \emph{diameter}, defined as the maximum distance between any two vertices of a graph, and the \emph{mean distance}, which gives the average of all those distances, are natural concepts of great importance in real-world applications. While the  diameter gives the maximum eccentricity in the graph, the mean distance provides a measure of its compactness, and is closely related to the \emph{sum of the pairwise distances} of the graph and the well-known \emph{Wiener index}.\footnote{The \emph{sum of the pairwise distances} of a graph is the sum of distances between all ordered pairs of vertices and, for unweighted graphs, half of this value is the \emph{Wiener index}. This topological index has been studied extensively with thousands of publications.}

Computing the diameter and the sum of the pairwise distances of a given graph $G$ is a central problem in algorithmic graph theory.
A straightforward algorithm is to perform Dijkstra's algorithm from each vertex, allowing to compute both parameters in $O(nm+n^2\log n)$ time, where $n$ and $m$ are the number of vertices and edges of $G$, respectively. 
Given the high computational cost of this approach, considerable effort has been invested in developing faster algorithms, especially for sparse graphs (i.e., graphs with $m=O(n)$ edges).
It turns out that the general problem is notably difficult.
In 2013, Roditty and Vassilevska Williams showed that, for $\eps>0$ there is no $O(n^{2-\eps})$-time algorithm to compute the diameter of an arbitrary sparse graph unless the Strong Exponential Time Hypothesis (SETH) fails~\cite{RodittyW13}. Indeed, assuming the SETH, their proof shows that no $O(n^{2-\eps})$-time algorithm can distinguish between diameter $2$ or larger for unweighted sparse graphs. In addition,  one can deduce the same conditional lower bound for
computing the sum of the pairwise distances of the graph (see also~\cite{Cabello19}). 
This justifies the vast amount of ongoing research on identifying classes of sparse graphs for which these parameters can actually be computed in subquadratic time. 
Currently, such classes include graphs of bounded treewidth~\cite{Abboudetal2016,BHM20,CABELLO2009815}, graphs of bounded distance VC dimension~\cite{Ducoffe2022,Le2024}, median graphs~\cite{BergeDH24,BergeDH25}, and planar graphs~\cite{Cabello19,GawrychowskiKMS21}.

In this work, we tackle the challenge of subquadratic diameter and mean distance computation for \emph{continuous graphs} (these objects are also called  \textit{metric graphs} in other areas closer to analysis~\cite{BFb0086338,friedlander2005genericity}).
Our main motivation arises from \emph{geometric graphs}. A geometric graph is an undirected graph where each vertex is a two-dimensional point, and each edge is a straight line segment between the corresponding two points. 
These graphs have been widely studied not only for their theoretical interest but also for their numerous applications; they naturally arise in applications involving network design and geographic information, such as road or river networks; see, for instance,~\cite{PHILLIPS2015147,viana2013simplicity}.

The class of continuous graphs, formally defined in \cref{sec:preliminaries}, is actually more general than geometric graphs. Informally, the \emph{continuous graph} $\GG$ defined by an edge-weighted graph $G$ is the infinite set of points determined by the vertices and edges of $G$, where each point on an edge is considered part of the graph. Therefore, the graph $\GG$ can be considered an infinite set of points.  Every geometric graph can be seen as a continuous graph, but not the other way around. For example, a complete graph on four vertices can be seen as a continuous graph but it is not always realizable as a geometric graph (simply take all edge-lengths equal to $1$.)

The concept of distance extends from graphs  to continuous graphs in a natural way, as well as the notions of diameter and mean distance. However, there is a fundamental difference between the two settings: in the continuous case both parameters involve considering an infinite number of distances, those between any two points along the edges of the graph (rather than the distances only between the vertices, as in the discrete case.) 

\subparagraph{Related work.}

Distances in continuous graphs, especially the diameter,  have received a lot of interest recently, mainly in the context of  augmentation problems~\cite{ChenGar82,CaceresGHMPR18,DBLP:conf/swat/CarufelMS16,CGSS-17,GarijoMRS19,bae2019shortcuts,MeanDist23, GY23, GW22}.
See also~\cite{BKM24} for results on the mean distance in the context of geometric analysis.
Another well-known related problem about distances in graphs, also with a continuous aspect, is the computation of the absolute center of a graph, originally proposed by Hakimi~\cite{hakimi1964optimum}.

The diameter and the mean distance of a continuous graph $\GG$ with $n$ vertices and $m$ edges can be computed in $O(m^2+A(n,m))$ time~\cite{CaceresGHMPR18,ChenGar82,MeanDist23}\footnote{The algorithm in~\cite{CaceresGHMPR18} to compute the diameter is for plane geometric graphs, but it can be used also for arbitrary continuous graphs.}, where $A(n,m)$ is the  time\footnote{The currently best algorithm to compute all-pairs shortest paths for a graph with real weights has running time $A(n,m)=O(nm \log \alpha(m, n))$~\cite{pr-sparwug-05}, where $\alpha(m,n)$ is the extremely slowly growing inverse of the Ackermann function.
For some special graph classes faster algorithms are known, such as planar graphs with non-negative edge weights (where $A(m,n)=O(n^2)$~\cite{HENZINGER1997}), or graphs with integer non-negative edge weights (for which $A(n,m)=O(nm)$~\cite{T-99}).} required to compute all vertex-to-vertex distances in $\GG$.
For the diameter, this follows from the fact that, in a continuous graph, there always exists a \emph{diametral pair} (that is, a pair of points whose distance equals the diameter) that consists of either: (i) two vertices, (ii) two points on distinct non-pendant edges,\footnote{An edge $uv\in E(G)$ is {\em pendant} if either $u$ or $v$ is a {\em pendant} vertex (i.e., has degree 1).} or (iii) a pendant vertex and a point on a non-pendant edge~\cite[Lemma 6]{CaceresGHMPR18} (see \cref{fig:cases}). Regarding the mean distance, one can show that it is given by a weighted sum of the mean distances of all ordered pairs
of edges, which can be
obtained in constant time, once the distance matrix of the vertices of the graph has been
computed~\cite{MeanDist23}.

However, for sparse graphs, one hits again a quadratic running time barrier. 
Algorithms for diameter in discrete graphs do not carry over to continuous graphs, except in few situations (e.g., if there are only $O(1)$ different edge weights, then $O(1)$ Steiner points can be added to each edge, so that the diameter coincides with that of the continuous graph), and the same conditional lower bound of the discrete setting holds for the continuous case (one can reduce the continuous case to the discrete case by simply adding enough long paths to the graph.)

\begin{figure}
\centering
	\includegraphics{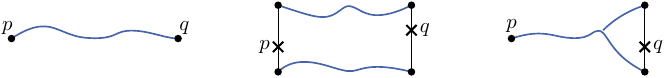}
     \caption{Types of diametral pairs of a continuous graph.}
	\label{fig:cases}
\end{figure}

The main challenge for continuous graphs is that the techniques that have successfully worked to speed-up the computation of the diameter and the sum of the pairwise distances for discrete graphs do not seem to easily extend.
The most similar setting to ours is perhaps that of planar graphs, for which recently the first subquadratic algorithms were discovered~\cite{Cabello19,GawrychowskiKMS21}.
These works use Voronoi diagrams in planar graphs to compute those values in the discrete setting. 
However, it is not clear whether they can be adapted to the continuous setting. More precisely, for a fixed source vertex and a fixed subgraph $H$ of a graph $G$, they compute the Voronoi diagram of $H$ using some additive weights. As the source moves, the additive weights defining the Voronoi diagram change, and the Voronoi diagrams change. Tracing those changes efficiently seems difficult, especially because the combinatorial structure of the Voronoi diagram may undergo important changes. Moreover, such changes can happen for several different movements of the sources. Thus, to achieve a subquadratic algorithm for planar continuous graphs, it seems that one should be able to treat those parallel changes in groups. The current technology for planar graphs does not seem ready for this.

\subparagraph{Contributions.}

In this work, we present subquadratic algorithms to compute the diameter and the mean distance for two
classes of sparse continuous graphs.
In fact, we consider the slightly more general framework of computing the diameter $\diam(\HH,\GG)$ and the mean distance $\mean(\HH,\GG)$ of a continuous subgraph $\HH\subseteq \GG$ \textit{with respect to the distance in} $\GG$;
precise definitions are given in \cref{subsec:parameters}.  
When $\HH=\GG$, we recover the usual diameter and mean distance of $\GG$ that we have been discussing until now.
This more general framework appears naturally in our algorithms for graphs of bounded treewidth during the recursion, but it seems interesting in its own right.
One may think of $\GG$ as the ambient space that defines the distances and of $\HH$ as the relevant subset of points of the space that we have to consider.

First, we study continuous graphs of bounded treewidth and
show how to compute their diameter and their mean distance in near-linear time.
The next two results distinguish whether the treewidth is assumed to be constant, as done in~\cite{CABELLO2009815},
or a parameter, as done in~\cite{BHM20}.  

\begin{theorem}[Combining \cref{thm:diameter-treewidth-1,thm:mean-treewidth-1}]
\label{thm:diameter-mean-treewidth-1}
	Let $k\ge 2$ be an integer constant, and let $\GG$ be the continuous graph defined by a graph $G$ with $n$ vertices, treewidth at most $k$, and
    nonnegative edge-lengths.
    Let $H$ be a subgraph of $G$ and let $\HH\subseteq \GG$
    be the corresponding continuous subgraph.
	The diameter $\diam(\HH,\GG)$ and the mean distance $\mean(\HH,\GG)$ can be computed in 
    $O(n \log^{4k-2} n)$ time.
\end{theorem}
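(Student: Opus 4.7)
The plan is divide-and-conquer on a balanced tree decomposition of $G$ of width at most $k$, of depth $O(\log n)$, which exists and can be computed in linear time for constant $k$. At each recursion node, the bag $S$ of size at most $k+1$ is a balanced separator of the remaining subgraph into two sides $A$ and $B$. Pairs of points contained entirely in one side are deferred to the recursive calls, so at the current node we only need to account for pairs $(p,q)$ whose shortest $\GG$-path crosses $S$. Since the recursion has depth $O(\log n)$, it suffices to handle this cross-separator contribution in $O(n \log^{4k-3} n)$ time per level, for both the diameter $\diam(\HH,\GG)$ and the mean distance $\mean(\HH,\GG)$ contributions.

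For a point $p$ on edge $e=uv$ at arclength parameter $t\in[0,\ell_e]$ and any $s_i\in S$, the distance $d_\GG(s_i,p)=\min(d_\GG(s_i,u)+t,\,d_\GG(s_i,v)+\ell_e-t)$ is a concave piecewise-linear function of $t$ with at most two pieces. After inserting a constant number of Steiner points per edge (only inflating $n$ by a factor depending on $k$), we may assume each $f_i^e(t):=d_\GG(s_i,p_t)$ is linear in $t$ on every edge. Consequently, for $p\in e_1\subseteq A$ and $q\in e_2\subseteq B$,
\[
d_\GG(p,q)=\min_{i=1}^{k+1}\bigl(f_i^{e_1}(t_1)+f_i^{e_2}(t_2)\bigr).
\]

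To avoid enumerating the quadratically many pairs of edges, we split the problem according to which separator vertex $s^*\in S$ achieves the minimum. Inside the region where $s^*$ wins, the distance equals $f_{s^*}^{e_1}(t_1)+f_{s^*}^{e_2}(t_2)$ subject to the $k$ linear inequalities $(f_{s^*}^{e_1}(t_1)-f_{s_j}^{e_1}(t_1))+(f_{s^*}^{e_2}(t_2)-f_{s_j}^{e_2}(t_2))\le 0$ for $s_j\ne s^*$. Mapping each edge on side $A$, parameterized by $t_1$, to a line segment in $\mathbb{R}^k$ with coordinates $f_{s^*}^{e_1}(t_1)-f_{s_j}^{e_1}(t_1)$ and weight $f_{s^*}^{e_1}(t_1)$, and doing the analogous transformation on $B$ with the sign flipped, the diameter problem becomes: find two segments whose componentwise coordinate sum is $\le 0$ and whose weight sum is maximum; the mean-distance problem becomes: integrate the weight sum over the feasible parameter region. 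We build multi-level range trees over these $k$-dimensional segments, with one level per coordinate of the quadrant-constraint, one level to search over the parameter along each edge, and an outer level for the piecewise-linear handling of the weight maximization (or integration). Performing $O(n)$ queries of cost $O(\log^{2k-1} n)$ each, repeated for the $O(k)$ choices of $s^*$ and augmented by the polylogarithmic overhead arising from the segment (rather than point) nature of the input, yields the per-level bound $O(n\log^{4k-3} n)$.

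The main obstacle I anticipate is making the $\min$-over-$S$ constraint play well with the range-tree framework: this is resolved by the above enumeration over the minimizing vertex $s^*$, which converts the min into an additive quantity plus $k$ linear constraints amenable to orthogonal-range searching, at the cost of a factor $k$. A secondary difficulty is the mean distance: the same range-tree skeleton must aggregate integrals of linear functions over cells carved by the separator inequalities inside each pair of edges, which is handled by replacing the max aggregator at each node of the range tree with a precomputed sum of cell-integrals, without increasing the asymptotic complexity.
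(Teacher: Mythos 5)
Your global architecture (balanced separator with $O(k)$ portals, $O(\log n)$-depth recursion, enumeration of the portal used by a shortest path, orthogonal range searching with one dimension per portal comparison) matches the paper's, but the reduction to range searching has a genuine gap at its core. For a fixed pair of edges $e_1=aa'\subseteq A$ and $e_2=bb'\subseteq B$, the function $(t_1,t_2)\mapsto d_{\GG}(p_{t_1},q_{t_2})$ is a concave piecewise-linear roof whose pieces all have gradient $(\pm1,\pm1)$; its maximum (and the integral needed for the mean) is generically attained at an \emph{interior} apex where two routes are tied, and the location of that apex depends jointly and non-separably on both edges. Your plan to fix a single winning $s^*$ and maximize the linear function $f_{s^*}^{e_1}(t_1)+f_{s^*}^{e_2}(t_2)$ over the polygon cut out by the $k$ inequalities runs into exactly this: the optimum of that LP sits at a polygon vertex determined by two tight constraints, i.e., by a $2\times2$ linear system whose coefficients mix data from $e_1$ and $e_2$. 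Checking the remaining constraints at that vertex and evaluating the objective there is not of the form ``a point built from $e_1$ lies in a box built from $e_2$,'' so the multi-level range tree you invoke (``one level to search over the parameter along each edge'') does not apply as stated; the ``polylogarithmic overhead arising from the segment nature of the input'' is precisely where the argument is missing. A related issue is that a single $s^*$ cannot be fixed for the whole pair: the four endpoint routes $a_\alpha\to b_\beta$ may each use a different portal, which is why the paper enumerates a $4$-tuple $\kappa\in[k]^4$ of portals rather than one.

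The paper closes this gap with structural lemmas that eliminate the continuous parameters \emph{before} range searching. \cref{lem:closed_walks} and \cref{lem:lengthW} show that $\max_{p\in e_1,q\in e_2} d_{\GG}(p,q)$ equals $\tfrac12\big(\ell(e_1)+\ell(e_2)+\min\{d_G(a,b)+d_G(a',b'),\ d_G(a,b')+d_G(a',b)\}\big)$, and \cref{lem:polynomial} shows the corresponding integral is one of two fixed cubic polynomials in the same six vertex-level quantities; in both cases the case split is a single linear inequality in vertex distances, which becomes the one extra coordinate of the range space (hence dimension $4k-3$). You would need to prove analogues of these facts for your scheme to go through. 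Two smaller omissions: the recursive calls compute distances in $\GG$ (rather than in the induced subgraphs) only if you add shortcut edges of length $d_G(s,s')$ between all portals and argue that this preserves the treewidth bound; and the breakpoints of $t\mapsto d_{\GG}(s_i,p_t)$ give $O(k)$, not $O(1)$, Steiner points per edge (harmless for constant $k$, but worth stating).
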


\begin{theorem}[Combining \cref{thm:diameter-treewidth-2,thm:mean-treewidth-2}]
\label{thm:diameter-mean-treewidth-2}
	Let $\GG$ be the continuous graph defined by a graph $G$ with $n$ vertices, treewidth at most $k$, and
    nonnegative edge-lengths.
    Let $H$ be a subgraph of $G$ and let $\HH\subseteq \GG$
    be the corresponding continuous subgraph.
	The diameter 
	 $\diam(\HH,\GG)$ and the mean distance $\mean(\HH,\GG)$ can be computed in 
    $n^{1+\eps} 2^{O(k)}$ time, for any fixed $\eps>0$.
    \end{theorem}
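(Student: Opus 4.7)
The plan is to lift the bounded-treewidth distance algorithm of~\cite{BHM20} from the discrete to the continuous setting. First, compute a nice tree decomposition of $G$ of width $k$ and recursively pick a balanced separator $S\subseteq V(G)$ with $|S|\le k+1$ whose removal splits the remaining graph into parts of size at most $n/2$. At every level of this separator recursion, run a multi-source Dijkstra from $S$, obtaining $d_\GG(s,v)$ for every $s\in S$ and every $v\in V(G)$. Invoking the distance-matrix routines of~\cite{BHM20} to perform these computations across the $O(\log n)$ levels of the recursion amounts to a total cost of $n^{1+\eps}\,2^{O(k)}$.

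Next, I would extend from vertices to continuous points. For every edge $e=uv$ of $\GG$ and every source $s\in S$, the distance from $s$ to the point $p(t)\in e$ at arc-length $t\in[0,\ell_e]$ from $u$ is the piecewise-linear ``tent'' $\min\{d_\GG(s,u)+t,\;d_\GG(s,v)+\ell_e-t\}$, of constant description complexity. Both $\diam(\HH,\GG)$ and $\mean(\HH,\GG)$ split along the recursion into a contribution from pairs lying on the same side of $S$, handled recursively, and a contribution from pairs separated by $S$. For the latter the shortest path must traverse a vertex of $S$, so
\[
d_\GG(p_1,p_2)\;=\;\min_{s\in S}\bigl(d_\GG(p_1,s)+d_\GG(s,p_2)\bigr),
\]
and for a single pair of edges $(e_1,e_2)$ both the cross-maximum (for the diameter) and the cross-integral (for the mean distance) reduce to a min of $|S|$ affine expressions over the rectangle $[0,\ell_{e_1}]\times[0,\ell_{e_2}]$, which are computable in $2^{O(k)}$ time.

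The main obstacle is to aggregate these cross-contributions over all $\Theta(n^2)$ edge pairs without enumerating them. For each $s\in S$ and each combination of endpoints of $(e_1,e_2)$, I would reformulate the task as a semialgebraic range-searching problem: map each edge $e$ on one side of $S$ to a point in $\mathbb{R}^{O(k)}$ carrying its tent parameters $\bigl(d_\GG(s,u),\,d_\GG(s,v),\,\ell_e\bigr)_{s\in S}$, and for every edge on the other side search for the partner that maximizes the relevant sum (for the diameter) or contributes to the integral (for the mean distance) subject to the constraint that the chosen $s$ is not \emph{dominated} by any other $s'\in S$ on the cell of interest. Using standard partition-tree machinery with multi-level data structures, each query is answered in $n^{\eps}\,2^{O(k)}$ time, giving $n^{1+\eps}\,2^{O(k)}$ per recursion level, and the same bound overall, since the separator recursion has depth $O(\log n)$ and linear total size. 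The genuine difficulty sits in the diameter case, where the outer maximum interacts non-trivially with the inner $k$-way minimum and forces semialgebraic rather than purely combinatorial range-searching; the mean distance is easier because of the linearity of integration and reduces to weighted-sum range queries of the same complexity.
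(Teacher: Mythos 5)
Your high-level skeleton (balanced separator from a tree decomposition, distances from the separator vertices, tent functions on edges, recursion on the two sides plus a cross-term) matches the paper's. The gap is in the aggregation step, which is the heart of the proof. You propose to handle the cross-contributions by \emph{semialgebraic} range searching with partition trees in $\mathbb{R}^{O(k)}$ and claim $n^{\eps}2^{O(k)}$ query time. That complexity claim is false: partition-tree-based simplex/semialgebraic range searching in dimension $D$ has query time roughly $n^{1-1/D}$ (with near-linear space), so with $D=O(k)$ you get $n^{1-1/O(k)}$ per query and $n^{2-1/O(k)}$ total --- essentially quadratic for large $k$, not $n^{1+\eps}2^{O(k)}$. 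Only \emph{orthogonal} range searching achieves $2^{O(d)}B(n,d)=n^{\eps}2^{O(d)}$ query time in $d$ dimensions, and the paper's entire technical effort goes into showing that every predicate needed can be written as an axis-parallel box condition: the ``first portal $s_i$ on a shortest path from $a_\alpha$ to $b_\beta$'' predicates become inequalities of the form $d_G(a,s_i)-d_G(a,s_j)\le d_G(b,s_j)-d_G(b,s_i)$, i.e., a coordinate of a point built from the $A$-side edge lying in an interval determined by the $B$-side edge; and the interaction between the outer max/integral and the inner minimum collapses to a \emph{single} extra sign condition (which of the two endpoint pairings $d(a_0,b_0)+d(a_1,b_1)$ versus $d(a_0,b_1)+d(a_1,b_0)$ is smaller), giving one more orthogonal dimension, for a total of $4k-3$. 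Your own ``$s$ not dominated by $s'$'' constraints linearize the same way, so the semialgebraic machinery is not only too slow, it is unnecessary.

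Two further points you leave unaddressed and that are needed for correctness. First, once the portal $4$-tuple and the type are fixed, the quantity being maximized (half the length of the shortest closed walk through both edges) and the quantity being integrated (the volume under the ``roof,'' a cubic polynomial in six parameters) must \emph{decompose} into an $A$-side part and a $B$-side part so that a single precomputed value (a max, or a polynomial in coefficient form summed over a canonical subset) answers a whole group query; without this decomposition the canonical subsets of the range-searching structure are useless. Second, for the recursive calls on $\HH_A$ and $\HH_B$ you must add a clique on $S$ with edge lengths $d_G(s,s')$ to preserve distances, and since that clique can raise the treewidth you should compute the (approximate, width $O(k)$) tree decomposition once and pass it, trimmed, to the subproblems rather than recomputing it at each level.
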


Similarly to previous algorithms in the discrete setting to compute the diameter and the sum of the pairwise distances for graphs of bounded treewidth~\cite{Abboudetal2016,BHM20, CABELLO2009815}, the key technique that we use is orthogonal range searching; see also \cite{D22} for a novel application of this technique to compute the eccentricity and the distance-sum of any vertex of a directed weighted graph. However, compared to the discrete case, we need to use more dimensions in the orthogonal range searching because we do not need to handle only the interaction between vertices, but between pairs of edges.
For the mean distance, we also exploit that the mean distance between two edges can be interpreted as a sum of volumes of truncated prisms~\cite{MeanDist23}. In fact, we show that this mean distance can be expressed as a piecewise polynomial function of bounded degree whose variables encode the distance between the endpoints of the pair of edges. The evaluation of these functions can be done efficiently for all pairs of edges in bulks exploiting that data structures for orthogonal range searching give the answer to each query in a structured way.

Then, we turn our attention to planar graphs. 
By Euler's formula, the number $F$ of faces is the same for any embedding of a planar graph.
We obtain the following result about computing $\diam(\HH,\GG)$ and $\mean(\HH,\GG)$ 
of a continuous subgraph $\HH$ of $\GG$, when $\GG$ is planar.
Note that the time bound is subquadratic if $F = o(\frac{n}{\log n})$.

\begin{restatable}{theorem}{thmplan}
    \label{thm:planar}
    Let $\GG$ be the continuous graph defined by a planar graph $G$ with $n$ vertices, $F$ faces, and
    nonnegative edge-lengths.
	Let $H$ be a subgraph of $G$ and let $\HH\subseteq \GG$ be the corresponding continuous subgraph.
	The diameter $\diam(\HH,\GG)$ and the mean distance $\mean(\HH,\GG)$ can be computed in $O(nF\log n)$ time.
\end{restatable}

To obtain this result, we start fixing an embedding of an $n$-vertex planar graph $G$, obtaining a \emph{plane graph}.
For any face $f$ of $G$, let $\GG_f$ be the set of points on the boundary of $f$. 
We show how to represent in $O(n\log n)$ time the \emph{eccentricity} (i.e., the largest distance from a point) for all the points of $\GG_f$. 
Further, we show that the approach can be adapted to obtain, also in $O(n\log n)$ time, the mean distance from each point of $\GG_f$. By iterating over all faces, we obtain the claimed time bound of $O(nF\log n)$.

Previous techniques developed for the \textit{discrete} setting~\cite{CabelloCE13,DasKGW22,EricksonFL18,Klein05} can be easily adapted to compute in $O(n\log n)$ time, for each \textit{vertex} $v$ incident to a face $f$, the vertex of $G$ that is furthest from $v$. 
These algorithms dynamically maintain a shortest-path tree from the vertex $v$, as $v$ moves along $f$, and the vertices dynamically store the distance from the source as a label. Adding to this data structure a layer to report the vertex that is furthest from $v$, that is, the vertex with maximum label, the result for the discrete setting is obtained. A similar approach can be used for the mean distance
from a vertex incident to $f$; we just have to maintain the sum of the labels associated to all vertices.

To extend this idea to plane continuous graphs, we need to handle points on edges outside the shortest-path tree, because they may define the diameter,
and the source itself can be anywhere on the edges bounding $f$. For this extension, we cannot just use the approach of the discrete scenario in a black-box manner, but we have to change some of the low-level details.
As in some previous approaches, we start noting that the edges outside the shortest-path tree define a spanning tree of the dual graph
and the idea is to maintain distance information from a point $s\in \GG_f$, as we slide it along $\GG_f$.
To this end, we rely on two different dynamic graph data structures that support edge insertions and deletions.
The first data structure is for vertex-weighted forests, while the second one is for edge-weighted forests. 
The first one can be achieved with several existing data structures without modifications.
In contrast, for the second one we require an unusual operation not present in previous data structures for similar problems:
we need to maintain an embedded tree, instead of an abstract tree, and we have to be able to increase the weights for all the edges to the left of a given path.
We accomplish this by adapting \textit{top trees} to support this operation efficiently.

\subparagraph{Organization of the paper.}
The rest of the paper is organized as follows.
In \cref{sec:preliminaries} we define precisely continuous graphs, their diameter and their mean distance. 
We also explain some basic tools regarding treewidth and orthogonal range searching.  
In \cref{sec:diamtreewidth} we explain the computation of the diameter in graphs of bounded treewidth, while \cref{sec:meantreewidth} is devoted to the mean distance.
In \cref{sec:planar} we consider the computation of the diameter and the mean distance in planar graphs.
We conclude in \cref{sec:conclusion} with some remarks and open questions.

%%%%%%%%%%%%%%%%%%%%%%%%%%%%%%%%%%%%%%%%%%%%%%%%%%%%%%%%%%%%%%%%%%%%%%%%%%%%%%%%%%%%%%%%%%%%%%%%%%%%%%%%%%%%%%%%%%%%%%%%%%%%%%%%%%%%%%%%%%%%%%%%%%%%%%%%%%%%%%%%%%%%%%%%%%%%%%%%%%%%%%%%%%%%%%%%%%%%%%%%%%%%%%%%%%%%%%%%%%%%%%%%%%%%%%%%%%%%%%%%%%%%%%%%%%%%%%%%%%%%%%%%%%%%%%%%%

\section{Preliminaries}
\label{sec:preliminaries}

For a graph $G$ and a subset $X$ of its vertices, we use \emph{$G[X]$} to denote the subgraph of $G$ induced by $X$.
For each  positive integer $k$, we use the notation \emph{$[k]$} for $\{1,\dots,k\}$.

\subsection{Continuous graphs}
Let $G=(V(G),E(G),\ell)$ be an edge-weighted, connected graph,
where $\ell$ is a function that assigns a length $\ell(e)\ge 0$ to each edge $e \in E(G)$ (clearly, for our purposes we  assume that not all edge-lengths are 0). 
As mentioned before, the \emph{continuous graph} $\GG$ defined by $G$ is defined by the points along the edges of $G$. Since we do not regard $G$ as a graph drawn in the plane or in any other Euclidean space, we must define precisely what we mean by a point on an edge.

The idea is that a continuous graph $\GG$ can be thought of as a $1$-dimensional simplicial complex where each edge $uv$ is isometric to a segment of length $\ell(uv)$. Indeed, 
 each edge $uv$ of $G$ is associated with a closed segment  of length $\ell(uv)$, with the usual
metric and measure. This segment is denoted by $\GG(uv)$, and its endpoints are arbitrarily selected as  $\extreme(uv,u)$ and $\extreme(uv,v)$.
For each vertex $u$ of $G$, we glue (mathematically, we identify) 
all points $\extreme(uv,u)$ over all edges $uv$  incident to $u$;
the identified point is denoted by $\GG(u)$.
The continuous graph $\GG$ defined by $G=(V(G),E(G),\ell)$ is the resulting space, and
its \emph{total length $\ell(\GG)$} is given by $\sum_{uv\in E(G)} \ell(uv)$. 
In a continuous graph, each point has a small enough neighborhood 
that looks like a 1-dimensional interval or a star.

A point $p$ of $\GG$ can be specified by a triple 
$(uv,u,\lambda)\in E(G)\times V(G)\times [0,\ell(uv)]$, which represents the point of 
the segment $\GG(uv)$ at distance $\lambda$ (along $\GG(uv)$) from the endpoint $\extreme(uv,u)$; we use the notation $p\in \GG$.
The triples $(uv,u,\lambda)$ and $(uv,v,\ell(uv)-\lambda)$ define the same point of $\GG$.
Similarly, for any two incident edges $uv,uv'$ of $G$, the triples $(uv,u,0)$, $(uv,v,\ell(uv))$, 
$(uv',u,0)$ and $(uv',v',\ell(uv'))$ define the same point, namely $\GG(u)$.

We have already set the notation $\GG(uv)$ for an edge $uv$ and $\GG(u)$ for a vertex $u$.
With a slight abuse of notation, we do not distinguish $uv$ from $\GG(uv)$ and $u$ from $\GG(u)$. We extend this to subsets of edges $E'\subseteq E(G)$ where we do not necessarily distinguish the set $E'$ from $\GG(E')=\cup_{uv\in E'} \GG(uv)$.

Any subgraph $H$ of $G$ defines a continuous graph $\HH$, which can be seen as a subgraph of the continuous graph $\GG$ defined by $G$.

A \emph{walk} between two points $p,q$ of $ \GG$, or \emph{$pq$-walk}, is a sequence $pv_1, v_1v_2, \ldots, v_{k-1}v_k,v_kq$ where $v_1v_2, \ldots, v_{k-1}v_k\in E(G)$, the point $p$ lies on an edge incident to $v_1$, and the point $q$ lies on an edge incident to $v_k$. If the points $p,q$ are in the interior of the same edge, a $p$-to-$q$ walk might be given by a sequence without vertices. Notice that the difference with the discrete case is that walks in $\GG$ can contain fragments of edges (when $p$ or $q$ are points in the interior of the edges.)
If the first and the last point of the walk coincide, it is \emph{closed}. 
The \emph{length} of a walk is the sum of the length of its pieces, counted with multiplicity. 
For any two points $p,q\in \GG$, the \emph{distance} $d_{\GG}(p,q)$ is the minimum length 
over all $pq$-walks.
A \emph{shortest $pq$-path}  is 
a $pq$-walk $\pi(p,q)$ in $\GG$ such that $\ell(\pi(p,q))= d_{\GG}(p,q)$.
Such a walk is simple, meaning that it does not visit the same point of $\GG$ twice.

We note that the identification of edges of a continuous graph $\GG$ as segments of a certain length tells us that the distance between the two endpoints of a segment is not necessarily given by the length of the segment. However, the infinite set of points $\GG$ together with the distance function $d$ give rise to a metric space $(\GG,d)$.

It is easy to see that, for a point $p\in \GG$ represented by the triple $(uu',u,\lambda)$ 
and a point $p'\in \GG$ represented by $(vv',v,\lambda')$, it holds that
\begin{align*}
    d_{\GG}(p,p') ~=~ \min \{ &d_G(u,v)+\lambda+\lambda', ~ d_G(u,v')+\lambda+\ell(vv')- \lambda', \\
                               &d_G(u',v)+ \ell(uu')-\lambda+ \lambda', ~ d_G(u',v')+ \ell(uu')-\lambda+ \ell(vv')-\lambda'
                             \}.   
\end{align*}
We can also regard $d_\GG(p,q)$ as the discrete graph-theoretic distance in a graph obtained by subdividing $uu'$ with $p$ as a new vertex, subdividing $vv'$ with $q$ as a new vertex, and setting $\ell(up)=\lambda$, $\ell(pu')=\ell(uu')-\lambda$, $\ell(vq)=\lambda'$, and $\ell(qv')=\ell(vv')-\lambda'$.

\subsection{Distance-related descriptors}
\label{subsec:parameters}

Let $\GG$ be a continuous graph defined by a graph $G$, and let  $\HH$ be a continuous subgraph of $\GG$
defined by a subgraph $H\subseteq G$.
We are interested in distances between points of $\HH$ using the metric given by $\GG$.
One may think of $\GG$ as the ambient space and of $\HH$ as the relevant subset of the ambient space that we want to analyze.

The \emph{eccentricity} of a point $p\in \GG$ with respect to $\HH$ is 
$\ecc(p,\HH,\GG)=\max_{q \in \HH} d_{\GG}(p,q)$.
The \emph{diameter} of $\HH$ with respect to $\GG$ is defined as
$$\diam(\HH,\GG) =\max_{p,q\in \HH} d_{\GG}(p,q).$$ When $\HH=\GG$, we just 
talk about the eccentricity of $p$ in $\GG$ and the diameter of $\GG$, and write $\ecc(p,\GG)$ and $\diam(\GG)$ for $\ecc(p,\HH,\GG)$ and $\diam(\GG,\GG)$, respectively.
Clearly, $\diam(\HH,\GG)=\max_{p\in \HH}\ecc(p,\HH,\GG)$.

The \emph{sum of distances} of $\HH$ in $\GG$ is the sum of the pairwise distances in $\HH$, using the metric from $\GG$, that is, 
$$\sumdist(\HH,\GG)= \iint_{p,q\in \HH} d_{\GG}(p,q) \,dp \,dq.$$
(We use the term \textit{sum} because of its analogy to the discrete case, but it is of course an integral.)
The \emph{mean distance} of $\HH$ in $\GG$ is
$$\mean(\HH,\GG)= \frac{1}{\ell(\HH)^2}{\sumdist(\HH,\GG)}.$$
It is easy to see that 
$\mean(\HH,\GG)=\frac{1}{\ell(\HH)} \int_{p\in \HH} \mean(p,\HH,\GG) \,dp$,
where $$\mean(p,\HH,\GG)= \frac{1}{\ell(\HH)} \int_{q\in \HH} d_{\GG}(p,q) \,dq$$
is the mean distance from the point $p\in \HH$ with respect to $\HH$ in $\GG$.
When $\HH=\GG$, we just
write $\mean(\GG)$  and 
$\mean(p,\GG)$ for  $\mean(\GG,\GG)$ and $\mean(p,\GG,\GG)$, respectively.

\subsection{Treewidth}

The \emph{treewidth} of a graph is an important parameter in algorithmic graph theory which, roughly speaking, measures how far the graph is from being a tree. 
To define it, we first need to define tree decompositions.

A \emph{tree decomposition} of a graph $G$ is a pair $(X,T)$ where $X=\{X_i\subseteq V(G) \, | \, i\in I\}$ is a collection of subsets of $V(G)$, and $T$ is a tree with vertex set $I$ such that:
\begin{enumerate}
\item[(i)] $V(G)=\cup_{i\in I} X_i$;
\item[(ii)] for every edge $uv\in E(G)$ there is some $X_i$ such that $u,v\in X_i$;
\item[(iii)] for every $u\in V(G)$, the set of vertices $\{i\in I \, | \, u\in X_i\}$ induces a connected subtree of $T$.
\end{enumerate}

\noindent The \emph{width} of a tree decomposition is $\max_{i\in I} |X_i|-1$, and the \emph{treewidth} of $G$ is the minimum width among all tree decompositions of $G$.

Graphs with $n$ vertices and treewidth $k$ have $O(kn)$ edges~\cite{Bodlaender98}, and so graphs of bounded treewidth have $O(n)$ edges. Moreover, any graph $G$ with this property has a \emph{separation}, which is a triple $(A,B,S)$ such that $A,B,S\subset V(G)$, $A\cup B = V(G)$, $S=A\cap B$, and there is no edge incident to both $A\setminus B$ and $B\setminus A$.
The elements of $S$ in separation $(A,B,S)$ are called \emph{portals}.
Note that each path in $G$ from $A$ to $B$ must pass through a portal. See \cref{fig:separation}.

\begin{figure}[ht]
    \centering
    \includegraphics[page=10]{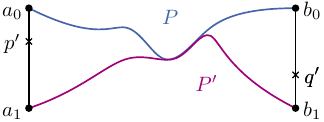}
    \caption{Visualization of a separation $(A,B,S)$;  portals of $S$ can be adjacent. Paths connecting any  vertex $a\in A$ with any vertex $b\in B$ (in blue) must pass through a portal.}
    \label{fig:separation}
\end{figure}

Informally, we are interested in separations where $A$ and $B$ have a constant fraction of the vertices and $S$ is small. 
Such separations of at most $k$ portals exist in graphs of treewidth $k$ and can be computed efficiently~\cite{BHM20, CABELLO2009815}.

\begin{lemma}[Cabello and Knauer~\cite{CABELLO2009815}]\label{theo:portals}
  Let $k > 1$ be a constant. Given a graph G with $n > k + 1$ vertices and 
treewidth at most $k$, we can find in linear time a separation $(A,B,S)$ in $G$ such that:
  \begin{enumerate}
    \item[\textup(i\textup)]  $A$ has between $\frac{n}{k+1}$ and $\frac{nk}{k+1}$ vertices;
     \item[\textup(ii\textup)]  $S$ has at most $k$ portals;
     \item[\textup(iii\textup)]  adding edges between the portals in $S$ does not change the treewidth of $G$.
  \end{enumerate}
\end{lemma}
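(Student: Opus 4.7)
The plan is to invoke Bodlaender's linear-time algorithm to produce a tree decomposition of width at most $k$ and then to extract a balanced separator from the tree structure via a weighted centroid argument, using the bag as the separator itself.

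First, I would compute a tree decomposition $(X,T)$ of $G$ with width at most $k$ in $O(n)$ time, so that every bag satisfies $|X_i|\le k+1$. Root $T$ at an arbitrary node. Since the bags containing any fixed vertex $v\in V(G)$ form a connected subtree of $T$, there is a unique topmost bag containing $v$; assign $v$ to it. This gives node weights $w(i)\in \{0,1,\dots,k+1\}$ with $\sum_i w(i)=n$. I would then run the standard tree-centroid procedure on $(T,w)$, which in linear time locates a node $i^*$ such that every subtree hanging off $i^*$ has total weight at most $\tfrac{nk}{k+1}$. Greedily grouping these hanging subtrees (each of weight $\le \tfrac{nk}{k+1}$, and collectively of weight $\ge n - (k+1)$) into a set $A$, I can guarantee that the number of vertices in $A$ lies in $[\tfrac{n}{k+1},\tfrac{nk}{k+1}]$; the remaining vertices form $B$, and $S:=X_{i^*}\cap(A\cap B)$ serves as the separator. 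By construction no edge of $G$ has endpoints in different hanging subtrees of $T-i^*$ without passing through $X_{i^*}$, so $(A,B,S)$ is a valid separation. This establishes (i).

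For (ii), note that $S\subseteq X_{i^*}$, hence $|S|\le k+1$. To shave off one unit and obtain at most $k$ portals, I would be careful in how I distribute the vertices of $X_{i^*}$ when forming $A$ and $B$: a vertex $u\in X_{i^*}$ that appears only in bags of one component of $T-i^*$ (or only in $X_{i^*}$ itself) can be placed on a single side and therefore need not be counted as a portal, since the only neighbours of $u$ outside $X_{i^*}$ live in that one component. Since $|X_{i^*}|\le k+1$, I would show (by examining the tree structure, or by picking the bag $i^*$ to lie on an appropriate edge of the decomposition so that at least one vertex of $X_{i^*}$ belongs to only one side) that at least one such vertex $u$ exists, leaving at most $k$ portals. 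Property (iii) is then immediate: every added edge has both endpoints in $X_{i^*}$, so $(X,T)$ remains a valid tree decomposition of the augmented graph, and the width is unchanged.

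The main obstacle will be reconciling balancedness with the sharper portal bound of $k$: the centroid argument naturally produces a separator of size up to $k+1$, and carving off one vertex while still keeping the partition sizes inside $[\tfrac{n}{k+1},\tfrac{nk}{k+1}]$ requires the right choice of $i^*$ (and possibly a slight refinement to a smooth tree decomposition in which adjacent bags differ in exactly one vertex, so that the bag-intersection $X_{i^*}\cap X_j$ across a well-chosen tree edge has size at most $k$). Once that structural refinement is made, the rest is bookkeeping to verify the three conditions from the linear-time decomposition.
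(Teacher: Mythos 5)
First, note that the paper does not prove this lemma at all: it is imported verbatim from Cabello and Knauer~\cite{CABELLO2009815}, so the only meaningful comparison is with that known argument, whose skeleton (linear-time tree decomposition for constant $k$, a centroid-type argument on the decomposition tree, a bag or bag-intersection as separator, grouping of hanging subtrees for balance) your plan indeed reproduces. The genuine gap is in property (ii), which is exactly the step you defer. Your primary device for getting from $k+1$ down to $k$ portals is the claim that some vertex of the centroid bag $X_{i^*}$ appears only in bags of one component of $T-i^*$ (or only in $X_{i^*}$). That claim is false in general: for $k=2$ take the decomposition with central bag $\{a,b,c\}$ and three neighbouring bags $\{a,b,x\}$, $\{b,c,y\}$, $\{a,c,z\}$ (extendable to arbitrarily large examples by hanging subtrees below each leaf bag); every vertex of the central bag occurs in two distinct components. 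So your argument, as written, only certifies a separator of size $k+1$. Your fallback --- pass to a smooth decomposition and use a tree edge whose two bags intersect in exactly $k$ vertices --- is the right kind of idea, but it is incompatible with the node-centroid balance argument as you state it: a star-shaped decomposition tree has no single edge giving a balanced split, so you cannot simply ``pick $i^*$ to lie on an appropriate edge''. Reconciling the size-$k$ cut with the requirement $\frac{n}{k+1}\le |A|\le \frac{nk}{k+1}$ is precisely the non-trivial bookkeeping of the cited proof, and it is not done in your proposal; calling it bookkeeping does not discharge it.

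Two smaller repairs are also needed. With $A$ and $B$ defined as disjoint vertex classes, your $S:=X_{i^*}\cap(A\cap B)$ is empty; the separation the paper actually uses has $A\cap B=S$ (the proof of Theorem~\ref{thm:diameter-treewidth-1} relies on $|A|+|B|=n+k$), so the separator vertices must be placed on both sides, and then you must verify that the size bounds in (i) still hold after this adjustment (the lemma only assumes $n>k+1$, so additive slack of $k$ is not automatically negligible). Finally, the greedy grouping needs the standard two-case argument rather than a one-line claim: if some hanging subtree already has weight at least $\frac{n}{k+1}$, take that subtree alone (its weight is at most $\frac{n}{2}\le\frac{nk}{k+1}$); otherwise all weights are below $\frac{n}{k+1}$ and the greedy sum stops below $\frac{2n}{k+1}\le\frac{nk}{k+1}$, which uses $k\ge 2$. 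As stated, your grouping step does not justify the upper bound, e.g.\ for $k=2$.
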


For simplicity, when using \cref{theo:portals}, we will assume that $S$ contains exactly $k$ portals; it may happen that it has fewer. In addition, 
throughout this work, we will consider two different settings, depending on whether the treewidth is assumed to be constant, as done in~\cite{CABELLO2009815}, or a parameter, as done in~\cite{BHM20}. 

\subsection{Orthogonal range searching}\label{sec:ors}

A \emph{rectangle} in $\mathbb{R}^d$ is the Cartesian product of $d$  intervals (whose extremes can be included or not). Orthogonal range searching allows us to preprocess a point set $P$ from $\mathbb{R}^d$ in such a way that, for a query rectangle $R$, certain properties of $P\cap R$ can be efficiently reported. The properties may also take into account labels attached to the points.

We use the notation $B(t,d)=\binom{d+\lceil \log t\rceil}{d}$ to bound the performance of the data structures we will use, where $t$ refers to the number of points in $P$. First, we recall the following asymptotic bounds.

\begin{lemma}[Bringmann, Husfeldt, and Magnusson~\cite{BHM20}]
    \label{lem:Bnd}
    $B(t,d)=O(\log^d t)$ and $B(t,d)=t^{\eps}2^{O(d)}$ for each $\eps>0$.
\end{lemma}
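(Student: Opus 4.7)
The plan is to handle the two bounds separately, as they have rather different flavors. Throughout, I write $L = \lceil \log n \rceil$ so that $B(n,d) = \binom{d+L}{d}$.

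For the first bound $B(n,d) = O(\log^d n)$, I would simply expand the binomial coefficient as $\binom{d+L}{d} = (L+d)(L+d-1)\cdots(L+1)/d!$, a polynomial of degree $d$ in $L$ with leading coefficient $1/d!$. Bounding each factor in the numerator by $L+d$ yields $B(n,d)\le (L+d)^d/d!$. Treating $d$ as a fixed constant, this is $O(L^d)=O(\log^d n)$.

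For the second bound $B(n,d) = n^{\eps} 2^{O(d)}$, the hidden constant in $O(d)$ is allowed to depend on $\eps$, and exploiting this flexibility is what makes the bound work for every $\eps>0$. I would split into two cases depending on the size of $d$ relative to $L$. When $d \ge \eps L$, the crude estimate $\binom{d+L}{d} \le 2^{d+L}$ combined with $L \le d/\eps$ gives $B(n,d) \le 2^{d(1+1/\eps)} = 2^{O(d)}$, which is certainly at most $n^{\eps} 2^{O(d)}$. When $d < \eps L$ (so in particular $d \le L$), I would apply the standard estimate $\binom{d+L}{d} \le \bigl(e(d+L)/d\bigr)^d \le (2eL/d)^d$, so that $\log_2 B(n,d) \le d \log_2(L/d) + O(d)$. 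The heart of the argument is then to show
\[ d \log_2(L/d) \;\le\; \eps L + c\,d \]
for some constant $c = c(\eps)$. Setting $t=L/d\ge 1$, this reduces to proving $\log_2 t - \eps t \le c$ for all $t \ge 1$; an elementary one-variable optimization bounds this function by a constant of order $\log(1/\eps)$, attained at $t=1/(\eps\ln 2)$. Any $c$ of this order suffices, and combining the two cases yields $B(n,d)\le n^{\eps} 2^{O(d)}$ with the $O(d)$ constant depending on $\eps$.

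The main, and fairly mild, subtlety is that with a single absolute constant in $2^{O(d)}$ one cannot do better than $B(n,d) \le n^{1/(e\ln 2)} 2^{O(d)}$, since $1/(e\ln 2) \approx 0.531$ is the global maximum of $d \log_2(L/d)/L$ over $d$. Letting the hidden constant grow like $\log(1/\eps)$ is exactly the trade-off that allows one to push the exponent of $n$ down to arbitrary $\eps > 0$.
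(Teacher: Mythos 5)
Your argument is correct, and it is worth noting that the paper itself offers no proof of this lemma at all: it is imported verbatim from Bringmann, Husfeldt, and Magnusson~\cite{BHM20}, so there is no in-paper argument to compare against. Your derivation is a clean, self-contained replacement. The first bound is immediate as you say, since $\binom{d+L}{d}\le (L+d)^d/d!$ is a degree-$d$ polynomial in $L=\lceil\log n\rceil$ for fixed $d$. For the second bound, the case split on $d\ge \eps L$ versus $d<\eps L$ is exactly the right move: the large-$d$ case is absorbed entirely into $2^{O(d)}$ via $2^{d+L}\le 2^{d(1+1/\eps)}$, and in the small-$d$ case the one-variable optimization of $\log_2 t-\eps t$ over $t=L/d\ge 1$ correctly yields a constant $c(\eps)=O(\log(1/\eps))$, giving $d\log_2(L/d)\le \eps L+c(\eps)d$ and hence $B(n,d)\le n^{\eps}2^{O_{\eps}(d)}$. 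Your closing remark is also accurate and genuinely illuminating: the maximum of $u\log_2(1/u)$ over $u\in(0,1]$ is $1/(e\ln 2)\approx 0.531$, so the dependence of the hidden constant on $\eps$ is not an artifact of the proof but is forced if one wants the exponent of $n$ to be an arbitrary $\eps>0$. This is precisely the trade-off the paper exploits in Theorems~\ref{thm:diameter-treewidth-2} and~\ref{thm:mean-treewidth-2}, where $\eps$ is fixed and the $2^{O(k)}$ factor silently absorbs the $\eps$-dependence.
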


The analysis of orthogonal range searching performed in~\cite[Section 3]{BHM20} (see also~\cite{M80}) 
leads to the data structure described in the following theorem. 
We use the version suggested by Cabello~\cite{Cabello22} because it adapts better to our needs.
(The notation $\bigsqcup$ indicates that the union is between pairwise disjoint sets.)

\begin{theorem}[Cabello~\cite{Cabello22}]
\label{thm:rangesearching}
	Given a set $P$ of $t$ points in $\mathbb{R}^d$, there is a family of 
	sets $\mathcal{P}=\{ P_j \mid j\in J\}$ and a data structure with
	the following properties: 
	\begin{itemize}
		\item $P_j\subseteq P$ for each $P_j\in \mathcal{P}$;
		\item all the sets of $\mathcal {P}$ together have $O(td\cdot B(t,d))$ points, 
			i.e., $\sum_{P_j\in \mathcal{P}} |P_j|=O(td\cdot B(t,d))$;
		\item for each query rectangle $R\subset \mathbb{R}^d$,
			the data structure finds in $O(2^d B(t,d))$ time indices $J_R\subset J$ such that 
			$|J_R|= O(2^d B(t,d))$
			and $P\cap R = \bigsqcup_{j\in J_R} P_j$;
		\item the family $\mathcal{P}$ and the data structure 
			can be computed in $O(td\cdot B(t,d))$ time.
	\end{itemize}
\end{theorem}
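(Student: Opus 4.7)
The plan is to construct a standard layered $d$-dimensional range tree on $P$ whose nodes carry the canonical subsets forming $\mathcal{P}$, and to analyze its query and space complexity with a careful inductive counting that yields the refined bound $B(n,d)$ rather than the cruder $O(\log^d n)$.

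First I would build a balanced binary search tree $T_1$ of depth $\lceil \log n \rceil$ over the points of $P$ sorted by their first coordinate, so its leaves are the points of $P$. For each node $v$ of $T_1$, associate the canonical set $Q_v\subseteq P$ of points lying in leaves of $v$'s subtree. If $d=1$, set $\mathcal{P}=\{Q_v : v\in T_1\}$; otherwise, at each node $v$ recursively build a $(d{-}1)$-dimensional structure of the same kind on $Q_v$ using coordinates $2,\dots,d$. The final family $\mathcal{P}$ collects all canonical subsets at the deepest recursion level, indexed by tuples $(v_1,\dots,v_d)$ where $v_i$ is a node of the $i$-th nested tree inside $(v_1,\dots,v_{i-1})$. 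By construction, the canonical subsets at a fixed level of any single tree are pairwise disjoint (they correspond to disjoint subtrees), so any sub-union picked by a query is automatically disjoint, giving the required $\bigsqcup$ form.

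To answer a query with rectangle $R=I_1\times\cdots\times I_d$, I would proceed level by level. In $T_1$, find the two root-to-leaf paths reaching the predecessor of the left endpoint of $I_1$ and the successor of its right endpoint; the subtrees rooted at nodes hanging strictly between these two paths give at most $2\lceil \log n\rceil$ pairwise disjoint $Q_v$ whose union equals $P\cap(I_1\times\mathbb{R}^{d-1})$. For each such $v$, recursively query its secondary structure with the remaining $(d{-}1)$-dimensional rectangle, and return the innermost canonical indices thus reached. Let $N(n,d)$ be the number of returned indices in the worst case. Then $N$ satisfies a recurrence that mirrors Pascal's identity $\binom{d+h}{d}=\binom{d+h-1}{d}+\binom{d-1+h}{d-1}$: a step deeper in the primary tree decreases $h$ by one, while descending into a secondary tree decreases $d$ by one. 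Induction on $d+h$ (with trivial base cases $d=0$ and $h=0$) then yields $N(n,d)=O(2^d B(n,d))$, the factor $2^d$ accounting for the two boundary paths encountered at each of the $d$ levels.

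For the space and preprocessing bounds, a point $p\in P$ belongs to $Q_v$ precisely when $v$ is an ancestor of the leaf storing $p$, so $p$ contributes to $\lceil\log n\rceil+1$ primary canonical subsets. The same inductive argument on $d$ shows $p$ appears in at most $B(n,d)$ innermost subsets of $\mathcal{P}$, hence $\sum_{P_j\in\mathcal{P}}|P_j|=O(n\cdot B(n,d))$, which is absorbed by the stated $O(nd\cdot B(n,d))$. After an initial $O(dn\log n)$ sort along each coordinate, each tree of the hierarchy can be built in time proportional to the sizes of the subsets it stores, so the total preprocessing matches the total space. The main obstacle is upgrading the textbook $O(\log^d n)$ query bound to the sharper $O(2^d B(n,d))$ stated here: the cruder bound suffices when $d$ is fixed, but to feed the bounded-treewidth applications where $d$ may grow with $n$, one must track the exact recurrence level by level via Pascal's identity rather than collapsing immediately to the asymptotic estimate of \cref{lem:Bnd}. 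Once that inductive identity is in place, the rest of the argument is routine range-tree bookkeeping.
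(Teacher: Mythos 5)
Your reconstruction is correct and follows essentially the same route as the source this paper cites for the statement (the paper itself imports Theorem~\ref{thm:rangesearching} from Cabello~\cite{Cabello22}, which in turn refines the analysis of~\cite{BHM20,M80}): a layered range tree with canonical subsets, where the query and storage counts are bounded by the Pascal-identity recurrence (height drops along the primary search paths, dimension drops when descending into secondary structures), yielding $2^d\binom{h+d}{d}=2^dB(n,d)$ reported indices and $O(n\,B(n,d))$ total canonical-set size. The only cosmetic slip is your justification of disjointness ("canonical subsets at a fixed level"): the canonical nodes answering an interval query lie at different depths, and disjointness instead follows because none of them is an ancestor of another, which does not affect the argument.
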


%%%%%%%%%%%%%%%%%%%%%%%%%%%%%%%%%%%%%%%%%%%%%%%%%%%%%%%%%%%%%%%%%%%%%%%%%%%%%%%%%%%%%%%%%%%%%%%%%%%%%%%%%%%%%%%%%%%%%%%%%%%%%%%%%%%%%%%%%%%%%%%%%%%%%%%%%%%%%%%%%%%%%%%%%%%%%%%%%%%%%%%%%%%%%%%%%%%%%%%%%%%%%%%%%%%%%%%%%%%%%%%%%%%%%%%%%%%%%%%%%%%%%%%%%%%%%%%%%%%%%%%%%%%%%%%%%
\section{Diameter in continuous graphs of bounded treewidth}
\label{sec:diamtreewidth}

In this section, we discuss the computation of the diameter of a continuous graph $\GG$ defined by a graph $G$ with $n$ vertices, $m$ edges and treewidth $k$.

Note that computing the diameter of continuous trees (i.e., treewidth $1$) can be reduced to the discrete setting, 
because in trees the diameter is always attained by two vertices. 
Thus, we restrict our attention to $k>1$.
As in~\cite{Abboudetal2016,BHM20,Cabello19,CABELLO2009815}, we use orthogonal range searching to work with distance-related problems in graphs of bounded treewidth. However, there is a fundamental difference: since our graphs are continuous graphs, we have to deal with pairs of edges instead of pairs of vertices, and the interaction between edges is much more complex. 

Sometimes we need to consider an orientation for each edge of $G$, to distinguish its vertices. The edges of $G$ are oriented arbitrarily, but we 
keep track of the orientation. When the orientation is not relevant, an edge is denoted by $uv$; otherwise, it is written as $(u,v)$ or $(v,u)$, depending on its orientation. In both cases, the edge set of $G$ is denoted by $E(G)$.

\subsection{Characterization of the diameter via walks}\label{subsec:walks}

We begin by characterizing the diameter in the continuous setting in terms of the lengths of certain walks (compare to \cref{fig:cases}, which shows types of diametral pairs according to their position on the edges.)

For each pair of edges $a_0a_1,b_0b_1 \in E(G)$, let \emph{$W(a_0a_1,b_0b_1)$}
be a shortest closed walk 
passing through all the interior points of $a_0a_1$ and $b_0b_1$.

\begin{lemma}\label{lem:closed_walks}
    Let $\GG$ be a continuous graph defined by a graph $G=(V(G),E(G),\ell)$. 
    For any two distinct edges $a_0a_1,b_0b_1 \in E(G)$, we have
    $$\max_{p\in a_0a_1, \, q\in b_0b_1} d_{\GG}(p,q)=\frac{\ell(W(a_0a_1,b_0b_1))}{2}.$$
\end{lemma}

\begin{proof} 
    Let $p'\in a_0a_1$ and $q'\in b_0b_1$ be two points such that $d_{\GG}(p',q')$ attains the maximum distance over all points $p\in a_0a_1$ and $ q\in b_0b_1$.
    Let $P=a_0u_1\ldots u_kb_0$ be a shortest $a_0b_0$-path. Refer to \cref{fig:closed_walks}(left).
    Exchanging the roles of $a_0$ and $a_1$, if needed, and the roles of $b_0$ and $b_1$, if needed, 
    we assume without loss of generality that $\ell(p'a_0)+\ell(P)+\ell(b_0q')=d_{\GG}(p',q')$. 
    This also covers the cases where $p'$ or $q'$ are vertices.
    Let $P'=a_1u'_1\ldots u'_{k'}b_1$ be a shortest $a_1b_1$-path. It holds that
    \[
       \ell(p'a_1)+\ell(P')+\ell(b_1q') ~=~ d_{\GG}(p',q') ~=~ \ell(p'a_0)+\ell(P)+\ell(b_0q').
    \]
    Indeed, if $\ell(p'a_1)+\ell(P')+\ell(b_1q')<d_{\GG}(p',q')$, then there would be a $p'q'$-path shorter than $d_{\GG}(p',q')$, which is not possible.
    If $\ell(p'a_1)+\ell(P')+\ell(b_1q')>d_{\GG}(p',q')$, then there would be two points on, respectively, $a_0a_1$ and $b_0b_1$ with larger distance than $d_{\GG}(p',q')$, again a contradiction with the fact that 
    $d_{\GG}(p',q')$ is the maximal distance over all points on  the edges.
    
    Now, the sequence $p'a_0u_1\ldots u_kb_0q'b_1u'_{k'}\ldots u'_1a_1p'$ defines a closed walk $W'$ with length $2 d_{\GG}(p',q')$ and passing through all the interior points of $a_0a_1$ and $b_0b_1$. Therefore, 
    \[
       d_{\GG}(p',q') ~=~ \frac{\ell(W')}{2} ~\geq~ \frac{\ell(W(a_0a_1,b_0b_1))}{2},
    \]
   since $W(a_0a_1,b_0b_1)$ is a shortest closed walk containing the two edges.
    On the other hand, any closed walk passing through $p'$ and $q'$ has length at least $2 d_{\GG}(p',q')$,
    and so $\ell(W(a_0a_1,b_0b_1))\geq 2 d_{\GG}(p',q')=\ell(W')$.
    Thus,   
    \[
       \max_{p\in a_0a_1, \, q\in b_0b_1} d_{\GG}(p,q)=d_{\GG}(p',q') ~=~ \frac{\ell(W')}{2} ~=~ \frac{\ell(W(a_0a_1,b_0b_1))}{2}.
    \qedhere
     \]
\end{proof}

\begin{figure}
    \begin{minipage}{.45\linewidth}
    \centering
       \includegraphics[page=1]{treewidth-figs.pdf}
    \end{minipage}\hfill
    \begin{minipage}{.45\linewidth}
    \centering
   \includegraphics[page=2]{treewidth-figs.pdf}
    \end{minipage}    
    \caption{Visualization of the setting in the proof of \cref{lem:closed_walks}.\\
    Left: By combining a shortest $a_0b_0$-path $P$, a shortest $a_1b_1$-path $P'$ and the edges  $a_0a_1$ and $b_0b_1$, one obtains a shortest closed walk passing through all the interior points of $a_0a_1$ and $b_0b_1$.
    Right: The distance $d_{\GG}(p',q')$ is given by the length of the path $p'a_1b_0q'$ (taking the diagonal edge), but the closed walk $W(a_0a_1,b_0b_1)$ is the outer cycle $a_0a_1b_1b_0$.}
    \label{fig:closed_walks}
\end{figure}

\cref{fig:closed_walks}(right) shows that taking the maximum in \cref{lem:closed_walks} is necessary: the distance between two points is not given, in general, by half the length of a shortest closed walk containing the two edges where the points are located.
Note that, in general, \cref{lem:closed_walks} is not true when $a_0a_1=b_0b_1$,
unless we assume that the shortest path between points of an edge is contained in the edge.

Noting that \cref{lem:closed_walks} is only for distinct edges, the 
following corollary is immediate.

\begin{corollary}\label{theo:diam-walks}
    If $H\subseteq G$ defines the continuous subgraph $\HH$ of $\GG$,
    then $\diam(\HH,\GG)$ is the maximum between
    \[\max_{\begin{minipage}{2.3cm}\scriptsize\centering
            $a_0a_1,b_0b_1\in E(H)$\\ 
            $a_0a_1\neq b_0b_1$\end{minipage}}
        \frac{\ell(W(a_0a_1,b_0b_1))}{2}
        ~~\text{ and }~~
        \max_{e\in E(H)} \max_{p,q\in e} d_{\GG}(p,q).
    \]
\end{corollary}

For each pair of distinct edges $a_0a_1, b_0b_1\in E(G)$,
\cref{lem:lengthW} below establishes that there are two candidate values for
$\ell(W(a_0a_1,b_0b_1))$.

\begin{lemma}\label{lem:lengthW}
    For each distinct $a_0a_1, b_0b_1\in E(G)$, it holds that
    \[
        \ell(W(a_0a_1,b_0b_1)) = \ell(a_0a_1)+\ell(b_0b_1) + 
            \min\{ d_G(a_0,b_0)+ d_G(a_1,b_1) , \, d_G(a_0,b_1)+ d_G(a_1,b_0)\}.
    \]
\end{lemma}

\begin{proof}
    Set 
    \begin{align*}       
        \ell_1(a_0a_1,b_0b_1) ~&=~ \ell(a_0a_1)+ \ell(b_0b_1) + d_G(a_0,b_0)+d_G(a_1,b_1)~~ \text{ and}\\
        \ell_2(a_0a_1,b_0b_1) ~&=~ \ell(a_0a_1)+\ell(b_0b_1)+ d_G(a_0,b_1) +d_G(a_1,b_0).
    \end{align*}
    Both quantities are the length of some closed walk passing through
    all the interior points of $a_0a_1$ and $b_0b_1$.
    Therefore, $$\ell(W(a_0a_1,b_0b_1)) \le \min \{ \ell_1(a_0a_1,b_0b_1),\ell_2(a_0a_1,b_0b_1)\}.$$

    Consider now the optimal walk $W(a_0a_1,b_0b_1)$ and orient it
    such that we go along $a_0a_1$ from $a_0$ to $a_1$. In this direction, 
    we may walk $b_0b_1$ from $b_0$ to $b_1$ or in the opposite direction. 
    In the first case, the length of $W(a_0a_1,b_0b_1)$ is at least $\ell_2(a_0a_1,b_0b_1)$, 
    while in the second case the length is at least $\ell_1(a_0a_1,b_0b_1)$.
    In either case, we have $\ell(W(a_0a_1,b_0b_1)) \ge \min\{ \ell_1(a_0a_1,b_0b_1),\ell_2(a_0a_1,b_0b_1)\}$.
\end{proof}

As in the proof of \cref{lem:lengthW}, for our computations, we might be interested in considering  distinct, oriented edges $(a_0,a_1),(b_0,b_1)\in E(G)$. 
We distinguish the two cases of \cref{lem:lengthW}  by saying that $(a_0,a_1,b_0,b_1)$ is of \emph{type $1$} if
\[
    \ell(W(a_0a_1,b_0b_1)) = \ell_1(a_0a_1,b_0b_1)= \ell(a_0a_1)+\ell(b_0b_1) + 
        d_G(a_0,b_0)+ d_G(a_1,b_1),
\]    
and of \emph{type $2$} otherwise (see \cref{fig:pathtypes}(left)).
For each oriented edge $(b_0,b_1)\in E(G)$ and type $\tau\in \{1,2\}$, we define 
\begin{align}\label{eq:type12}
    \Type_\tau(b_0,b_1) &= \{ (a_0,a_1)\in E(G)\mid (a_0,a_1,b_0,b_1) \text{ is of type $\tau$}\}.
\end{align}
Therefore, 
\[
    (a_0,a_1)\in \Type_1(b_0,b_1) ~\Longleftrightarrow~
    d_G(a_0,b_0)+ d_G(a_1,b_1) \le d_G(a_0,b_1)+ d_G(a_1,b_0).
\]

\begin{figure}[ht]
    \centering
    \includegraphics[page=4]{treewidth-figs.pdf}
    \caption{Left: The shape of closed walks $W(a_0a_1,b_0b_1)$ for $(a_0,a_1,b_0,b_1)$ of type 1 (in blue) and of type 2 (in orange). Right: Shortest paths determined by vertices $a_0, a_1, b_0,b_1$ and the portals involved; the paths between $a_0b_1$ (in green) and $a_1b_0$ (in brown) go through the same portal, $s_1$.}
    \label{fig:pathtypes}
\end{figure}

\subsection{Diameter across the portals}
\label{sec:diam-across}
Let $(A,B,S)$ be a separation in $G$ with $k$ portals, and fix an enumeration $s_1,\dots,s_k$ of them. 
Let $E_A\subseteq E(G[A])$ and $E_B\subseteq E(G[B])$, where $E(G[A])$ and $E(G[B])$ are the edge sets of the subgraphs of $G$ induced by, respectively, $A$ and $B$. 
We further assume that $E_A$ and $E_B$ are disjoint.

Any path in $G$ connecting a vertex  $a\in A$ with a vertex $b\in B$ must pass through a portal, but we are interested in the first portal 
in the enumeration that lies in a shortest $ab$-path (notice that $a$ or $b$ could already be some portal of $S$.) In order to codify this information, we next define the function $\varphi(i;a,b)$.

For each index $i\in [k]$, each vertex $a\in A$, and each vertex $b\in B$,
let $\varphi(i;a,b)$ be the logic predicate that holds whenever 
$s_i$ is the first portal in the enumeration that lies in some shortest path from $a$ to $b$.
Formally, 
\begin{align*}
    \varphi(i;a,b) ~=~ &\bigwedge_{j<i}
            \big[ d_G(a,b) < d_G(a,s_j) + d_G(s_j,b)\big] 
            ~\wedge~ \big[d_G(a,b) = d_G(a,s_i) + d_G(s_i,b)\big]. 
\end{align*}
It is easy to see that, for each $(a,b)\in A\times B$, 
there exists a unique index $i\in [k]$ where $\varphi(i;a,b)$ holds; in other words, $|\{ i\in [k] \mid \varphi(i;a,b) \}|=1$. As an example, in \cref{fig:separation}, there are three portals, which can be enumerated, say, by decreasing $y$-coordinate in the drawing; assuming that the blue path is a shortest $ab$-path and  there is no shortest path via $s_1$, we  have $\varphi(2;a,b)$.

We  extend this idea to the four shortest paths determined by any two 
vertices $a_0,a_1\in A$ and any two vertices $b_0,b_1\in B$. Refer to \cref{fig:pathtypes}(right). There are now at most four portals involved, and we are interested in the shortest way of connecting edges in $E_A$ with edges in $E_B$ via the portals, generating shortest closed walks of the form $W(a_0a_1,b_0b_1)$. 
This is codified in the following predicate that is defined for all
$\kappa=(i_{0,0},i_{0,1},i_{1,0},i_{1,1})\in [k]^4$, all $a_0,a_1\in A$, and all $b_0,b_1\in B$:
\begin{align*}\label{def:phi}
\Phi(\kappa; a_0,a_1,b_0,b_1) =
    \Phi((i_{0,0},i_{0,1},i_{1,0},i_{1,1}); a_0,a_1,b_0,b_1) =
    \bigwedge_{(\alpha,\beta)\in \{0,1\}^2}
    \varphi(i_{\alpha,\beta}; a_\alpha,b_\beta).
\end{align*}
Thus, $\Phi(\kappa; a_0,a_1,b_0,b_1)$ holds if and only if,
for each $\alpha,\beta \in \{ 0,1\}$, the index $i_{\alpha,\beta}$
is the smallest index $i$ with the property that $s_i$
lies in some shortest path from $a_\alpha$ to $b_\beta$.
As before, for each $(a_0,a_1,b_0,b_1)\in A^2\times B^2$, 
there exists a unique $4$-tuple
$\kappa\in [k]^4$ where $\Phi(\kappa; a_0,a_1,b_0,b_1)$ holds.

Now, \cref{lem:closed_walks} tells us that we must obtain the maximum length 
over all shortest closed walks of the form $W(a_0a_1,b_0b_1)$. 
Here it is relevant that the edge sets $E_A$ and $E_B$ are disjoint;
compare to \cref{theo:diam-walks}. Thus,
for each $\kappa\in [k]^4$, each oriented edge $(b_0,b_1)\in E_B$,
and each type $\tau\in \{1,2 \}$ 
we define
\begin{align*}
    \hspace{-3mm}\Lambda_\tau(\kappa;b_0,b_1)= \max \{ \ell(W(a_0a_1,b_0b_1)) \mid (a_0,a_1)\in E_A\cap \Type_\tau(b_0,b_1) \wedge \Phi(\kappa;a_0,a_1,b_0,b_1)  \}.
\end{align*}
This represents the maximum length over all edges $(a_0, a_1)$  of a type-$\tau$ closed shortest walk containing edges $(b_0, b_1)$ and $(a_0, a_1)$, and consistent with the portals in $\kappa$.

We next discuss how to compute efficiently $\Lambda_\tau(\kappa;b_0,b_1)$ 
for several edges $(b_0,b_1)\in E_B$ simultaneously. Recall the notation $B(t,d)=\binom{d+\lceil \log t\rceil}{d}$ introduced in \cref{sec:ors}.

\begin{lemma}\label{le:accross}
    Consider a fixed type $\tau\in \{1,2\}$ and indices $\kappa\in [k]^4$.
    In $O(m 2^{4k-3} B(m,4k-3))$ time we can compute
    the values $\Lambda_\tau(\kappa;b_0,b_1)$ for all $(b_0,b_1)\in E_B$.
\end{lemma}
\begin{proof}
    Consider a fixed $\kappa=(i_{0,0},i_{0,1},i_{1,0},i_{1,1})\in [k]^4$.
    When possible, we omit from the notation the dependency on $\kappa$. 

    For each vertex $a\in A$ and each $i\in [k]$, we define the point 
    $p(i;a) \in \mathbb{R}^k$ whose $j$-th coordinate is 
    $p_j(i;a) = d_G(a,s_i)-d_G(a,s_j)$.
    For each vertex $b\in B$ and each index $i\in [k]$, we define the rectangle 
    $R(i;b)=I_1(i;b)\times\dots\times I_k(i;b)\subset \mathbb{R}^k$,
    where $I_j(i;b)$ is the interval
    \[
    	I_j(i;b) ~=~ \begin{cases}
    			\bigl( -\infty,\, d_G(b,s_j)-d_G(b,s_i) \bigr) & \text{if $j<i$,}\\
    			\mathbb{R} & \text{if $j=i$,}\\
    			\bigl( -\infty,\, d_G(b,s_j)-d_G(b,s_i) \bigr] & \text{if $j>i$.}
    		\end{cases}
    \]
    In $p(i;a)$ and $I(i;b)$ we remove the $i$-th coordinate, which has value zero or does not provide any information, respectively. 
    We use the same notation
    for the resulting objects, now in $\mathbb{R}^{k-1}$.
    It has been shown~\cite{CABELLO2009815,BHM20,Cabello22} 
    that the point $p(i;a)$ lies in the rectangle $R(i;b)$ if and only if 
    $\varphi(i;a,b)$ holds. Now, we do something similar for $\Phi(\kappa; a_0,a_1,b_0,b_1)$.
    
    For any two vertices $a_0,a_1\in A$, we define the point $p(a_0,a_1)$
    by concatenating the coordinates of
    $p(i_{0,0};a_0)$, $p(i_{0,1};a_0)$, $p(i_{1,0};a_1)$, and $p(i_{1,1};a_1)$.
    Thus, the point $p(a_0,a_1)$ has $4k-4$ coordinates.
    In addition, for any two vertices $b_0,b_1\in B$ we make
    a rectangle $R(b_0,b_1)$ in $\mathbb{R}^{4k-4}$
    by taking the Cartesian product of
    $R(i_{0,0};b_0)$, $R(i_{0,1};b_1)$, $R(i_{1,0};b_0)$, and $R(i_{1,1};b_1)$. 
    It holds that
    \begin{align*}
        p(a_0,a_1)\in R(b_0,b_1) ~~\Longleftrightarrow~~
        & \forall \alpha,\beta \in \{ 0,1\}:~~ 
            p(i_{\alpha,\beta};a_\alpha)\in R(i_{\alpha,\beta};b_\beta)\\
        ~~\Longleftrightarrow~~
        & \forall \alpha,\beta \in \{ 0,1\}:~~ 
            \varphi(i_{\alpha,\beta};a_\alpha, b_\beta)  \\
        ~~\Longleftrightarrow~~
        & \Phi(\kappa; a_0,a_1,b_0,b_1).
    \end{align*}

We construct a shortest-path tree from each portal $s_i$ and 
    store the distance $d_G(v,s_i)$ to each vertex $v$ of $G$.
    This preprocessing takes $O(k(m+n\log n))$ time using standard
    shortest-path algorithms. This running time is bounded by 
    $O(m 2^{4k-3} B(m,4k-3))$ because $4k-3 \ge 5$ for $k\ge 2$.
    After this, any distance from any portal to any vertex can be
    accessed in constant time. This means that we can compute the $O(k)$ coordinates of $p(a_0,a_1)$ 
    for all $(a_0,a_1)\in E_A$ and the 
    $O(k)$ intervals of the rectangles $R(b_0,b_1)$ for all 
    $(b_0,b_1)\in E_B$ in $O(k|E_A|+ k|E_B|) = O(mk)$ time. 
    
Now, the edges $(a_0,a_1)$ considered in the definition of $\Lambda_\tau(\kappa;b_0,b_1)$, are determined, in particular, by their type $\tau$ with respect to $(b_0,b_1)$.  To capture this information, for $\tau=1$ (similar for $\tau=2$), we append to each point $p(a_0,a_1)$
    an extra coordinate given by
    \[
        p_{\rm extra}(a_0,a_1)=d_G(a_0,s_{i_{0,0}})+d_G(a_1,s_{i_{1,1}}) - d_G(a_0,s_{i_{0,1}}) - d_G(a_1,s_{i_{1,0}}).
    \]
    Let $p^+(a_0,a_1)$ be the resulting point in $4k-3$ dimensions.
    Similarly, we extend the rectangle $R(b_0,b_1)$ by  an
    extra dimension by taking the Cartesian product with the interval
    \[
        I_{\rm extra}(b_0,b_1) = \big(-\infty, ~d_G(s_{i_{0,1}}, b_1) + d_G(s_{i_{1,0}}, b_0) - d_G(s_{i_{0,0}}, b_0) - d_G(s_{i_{1,1}}, b_1) \big].
    \]
    Let $R^+(b_0,b_1)$ be the resulting rectangle in $4k-3$ dimensions.
    Computing these points and intervals takes additional $O(m)$ time.

    We have that $p^+(a_0,a_1) \in R^+(b_0,b_1)$ if and only if 
    $\Phi(\kappa;a_0,a_1,b_0,b_1)$  
    and $p_{\rm extra}(a_0,a_1) \in I_{\rm extra}(b_0,b_1)$ holds. 
    To explain the rationale behind the previous expressions, first note that the property $\Phi(\kappa;a_0,a_1,b_0,b_1)$
    translates into  
    \begin{quote}
        $\forall (\alpha,\beta)\in \{0,1\}^2$:~~ $i_{\alpha,\beta}$ is the smallest index $i$ such that $s_i$ lies in some shortest path
        from $a_\alpha$ to $b_\beta$. 
    \end{quote}
    In turn, the condition $p_{\rm extra}(a_0,a_1)\in I_{\rm extra}(b_0,b_1)$
    is equivalent to 
    \begin{align*}
        d_G(a_0,s_{i_{0,0}}) + &d_G(s_{i_{0,0}}, b_0) + 
     d_G(a_1,s_{i_{1,1}}) + d_G(s_{i_{1,1}}, b_1) \le \\
     &d_G(a_0,s_{i_{0,1}}) + d_G(s_{i_{0,1}}, b_1) +
     d_G(a_1,s_{i_{1,0}}) + d_G(s_{i_{1,0}}, b_0),
    \end{align*}
    which, under the assumption that 
    $\Phi(\kappa;a_0,a_1,b_0,b_1)$ holds, becomes
    \[ d_G(a_0,b_0)+ d_G(a_1,b_1) \le d_G(a_0,b_1)+ d_G(a_1,b_0). \]
    This is precisely the condition for $(a_0,a_1,b_0,b_1)$ to be of type 1.
    We conclude that
    \[
        p^+(a_0,a_1) \in R^+(b_0,b_1) ~~\Longleftrightarrow ~~
            (a_0,a_1)\in \Type_1(b_0,b_1) \text{ and }
            \Phi(\kappa;a_0,a_1,b_0,b_1).
    \]
    Therefore,
    \begin{equation}\label{eq:ranges}
        \forall (b_0,b_1)\in E_B:~~ \Lambda_1(\kappa;b_0,b_1) = 
        \max \{ \ell(W(a_0a_1,b_0b_1)) \mid p^+(a_0,a_1) \in R^+(b_0,b_1) \}.   
    \end{equation}
    
    We use orthogonal range searching to handle the right side 
    of \cref{eq:ranges}, as follows.
    
    Let
    $P_A=\{ p^+(a_0,a_1) \mid (a_0,a_1)\in E_A\}$. 
    These are $|E_A|=O(m)$ points in $\mathbb{R}^{4k-3}$.
    For $P_A$ we compute $\mathcal{P}=\{ P_j \mid j\in J\}$ 
    and the data structure of \cref{thm:rangesearching}. 
    This takes $O(|P_A| (4k-3) \cdot B(|E_A|,4k-3)) = 
    O(m k \cdot B(m,4k-3))$ time.
    For each $j\in J$, we compute and store the weight
    \[
        \omega_j=\max\{ d_G(a_0,s_{i_{0,0}}) + \ell(a_0a_1) + d_G(a_1,s_{i_{1,1}}) \mid 
        p^+(a_0,a_1) \in P_j\}.
    \]
    Computing these values $\omega_j$ over all $j\in J$ takes
    $O(\sum_j |P_j|)$ and does not affect the asymptotic running time.
    This finishes the description of the construction of the data structure.

    Consider now an edge $(b_0,b_1)\in E_B$.
    We query the data structure with the rectangle $R=R^+(b_0,b_1)$
    and get the indices $J_R$ such that 
    $P_A\cap R^+(b_0,b_1) = \bigsqcup_{j\in J_R} P_j$.
    The query takes $O(2^{4k-3} B(m,4k-3))$ time
    and $|J_R|=O(2^{4k-3} B(m,4k-3))$.
    Next, we compute in $O(2^{4k-3} B(m,4k-3))$ time
    \begin{align*}
        \omega(b_0,b_1) &= \max \{ \omega_j\mid j\in J_R \} \\
                        &= \max \{ d_G(s_{i_{0,0}},a_0) + \ell(a_0a_1) +
                        d_G(a_1,s_{i_{1,1}}) 
            \mid p^+(a_0,a_1)\in R^+(b_0,b_1)\}.
    \end{align*}
    Now, we can compute the value $\Lambda_1(\kappa;b_0,b_1)$
    using that
    \begin{align*}
        \Lambda_1(\kappa;b_0,b_1) &= \max \{ \ell(W(a_0a_1,b_0b_1)) \mid (a_0,a_1)\in \Type_1(b_0,b_1) \wedge \Phi(\kappa;a_0,a_1,b_0,b_1)  \}\\
        &= \max \{ \ell(a_0a_1)+\ell(b_0b_1) + d_G(a_0,b_0)+ d_G(a_1,b_1) \mid\\ &~~~~~~~~~~~~~~~~~~~~~~~~~~~~
            (a_0,a_1)\in \Type_1(b_0,b_1) \wedge \Phi(\kappa;a_0,a_1,b_0,b_1)  \}\\
        &= d_G(s_{i_{0,0}},b_0) + \ell(b_0b_1) + d_G(b_1,s_{i_{1,1}}) \\
        & ~~~~+ \max \{d_G(s_{i_{0,0}},a_0) + \ell(a_0a_1) + d_G(a_1,s_{i_{1,1}}) \mid
        \\
        &~~~~~~~~~~~~~~~~~~~~~~~~~~~~(a_0,a_1)\in \Type_1(b_0,b_1) \wedge \Phi(\kappa;a_0,a_1,b_0,b_1) \}\\
        &= d_G(s_{i_{0,0}},b_0) + \ell(b_0b_1) + d_G(b_1,s_{i_{1,1}}) \\
        & ~~~~+ \max \{d_G(s_{i_{0,0}},a_0) + \ell(a_0a_1) + d_G(a_1,s_{i_{1,1}}) \mid p^+(a_0,a_1)\in R^+(b_0,b_1)  \}\\
        &= d_G(s_{i_{0,0}},b_0) + \ell(b_0b_1) + d_G(b_1,s_{i_{1,1}}) +
            \omega(b_0,b_1).        
    \end{align*}
    This last step takes $O(1)$ time.
    In total, we spend $O(2^{4k-3} B(m,4k-3))$ time per edge 
    $(b_0,b_1)$ to compute $\Lambda_1(\kappa;b_0,b_1)$.

    The computation of $\Lambda_2(\kappa;b_0,b_1)$ is analogous to the one described for $\tau=1$, and therefore we focus on the differences between both types.   
    For $\tau=2$, we define the interval  
    \[
        \tilde I_{\rm extra}(b_0,b_1) = \big(d_G(s_{i_{0,1}}, b_1) + d_G(s_{i_{1,0}}, b_0) - d_G(s_{i_{0,0}}, b_0) - d_G(s_{i_{1,1}}, b_1), ~ +\infty \big),
    \]
    and define the rectangle $\tilde R^+(b_0,b_1) = R(b_0,b_1)\times \tilde I_{\rm extra}(b_0,b_1)$. 
    (Note that $I_{\rm extra}(b_0,b_1) \sqcup \tilde I_{\rm extra}(b_0,b_1)=\mathbb{R}$,
    which is expected because the two types are complementary.)
    
    We then have 
    \[
        p^+(a_0,a_1) \in \tilde R^+(b_0,b_1) ~~\Longleftrightarrow ~~
            (a_0,a_1)\in \Type_2(b_0,b_1) \text{ and }
            \Phi(\kappa;a_0,a_1,b_0,b_1).
    \]
    We keep using the same data structure for the set of points $P_A$.
    However, for each $j\in J$, we compute another weight,
    namely
    \[
        \tilde\omega_j=\max\{ d_G(a_0,s_{i_{0,1}}) + \ell(a_0a_1) + d_G(a_1,s_{i_{1,0}}) \mid 
        p^+(a_0,a_1) \in P_j\}.
    \]
    When considering the edge $(b_0,b_1)\in E_B$, we query the data
    structure with $\tilde R= \tilde R^+(b_0,b_1)$ and get a 
    set $J_{\tilde R}\subset J$
    of $O(2^{4k-3} B(m,4k-3))$ indices
    such that 
    $P_A\cap \tilde R^+(b_0,b_1)= \bigsqcup_{j\in J_{\tilde R}} P_j$.
    We then compute
    \begin{align*}
        \tilde\omega(b_0,b_1) &= \max \{ \tilde\omega_j\mid j\in J_{\tilde R} \}.
    \end{align*}
    and note that 
    $\Lambda_2(\kappa;b_0,b_1) = d_G(s_{i_{1,0}},b_0) + \ell(b_0b_1) + 
            d_G(b_1,s_{i_{0,1}}) + \tilde \omega(b_0,b_1)$.
\end{proof}

We are now ready to compute the maximum distance over all pairs of points $p,q$ lying, respectively, in the continuous graphs determined by the edge sets $E_A$ and $E_B$. Thus, the $pq$-paths must go through the portals (note that points $p,q$ could be located in the continuous graph $\GG_S$ defined by $G[S]$, since $S=A\cap B$.)   

\begin{lemma}
\label{lem:AB}
    If $E_A$ and $E_B$ are disjoint, then
    we can compute $\displaystyle\max_{p\in \GG(E_A), q\in \GG(E_B)} d_{\GG}(p,q)$ in 
    $O(m 2^{4k-3} k^4 B(m,4k-3))$ time.
\end{lemma}
\begin{proof}
    We use \cref{le:accross} to compute for each type $\tau\in \{1,2\}$ and indices $\kappa\in [k]^4$ the values $\Lambda_\tau(\kappa;b_0,b_1)$ for all $(b_0,b_1)\in E_B$.
    This requires applying \cref{le:accross} a total of $O(k^4)$ times, and therefore we spend 
    $O(m 2^{4k-3} k^4 B(m,4k-3))$ time.
    
    Because for each $(a_0,a_1)\in E_A$ and each $(b_0,b_1)\in E_B$,  there
    is precisely one $\kappa\in [k]^4$ such that $\Phi(\kappa;a_0,a_1,b_0,b_1)$,
    and because the types complement each other,
    we have 
    \[
        \forall (b_0,b_1)\in E_B:~~~ \max_{(a_0,a_1)\in E_A} \ell(W(a_0a_1,b_0b_1))  = \max \big\{ \Lambda_\tau(\kappa;b_0,b_1)\mid \kappa\in [k]^4, ~\tau\in \{1,2\} \big\}.
    \]
    Using \cref{lem:closed_walks,lem:lengthW}, we note that 
    \begin{align*}
    \max_{p\in \GG(E_A), q\in \GG(E_B) } d_{\GG}(p,q) 
    &= \max_{(a_0, a_1)\in E_A,~ (b_0, b_1)\in E_B}~~ \max_{p\in a_0a_1, q\in b_0b_1} d_{\GG}(p,q) \\
    &= \max_{(a_0, a_1)\in E_A,~ (b_0, b_1)\in E_B}~~\frac{\ell(W(a_0a_1,b_0b_1))}{2} \\
    &= \tfrac{1}{2} \cdot\max_{(b_0, b_1)\in E_B}~~ \max_{ (a_0,a_1)\in E_A}{\ell(Wa_0a_1,b_0b_1))} \\
    &= \tfrac{1}{2} \cdot \max_{(b_0, b_1)\in E_B}~~ \max \big\{ \Lambda_\tau(\kappa;b_0,b_1)\mid \kappa\in [k]^4, ~\tau\in \{1,2\} \big\}\\
     &= \tfrac{1}{2} \cdot \max \big\{ \Lambda_\tau(\kappa;b_0,b_1)\mid 
     (b_0,b_1)\in E_B,~\kappa\in [k]^4, ~\tau\in \{1,2\} \big\} .
    \end{align*}
    With this expression, it is clear we can obtain the result in
    $O(|E_B|\cdot k^4\cdot 2)$ additional time.
\end{proof}

\begin{remark*}
We are aware that some log factor can
be shaved off, and that for small $k$ one can improve the analysis slightly. 
However, we prefer to maintain this high-level structure to keep things simpler and parallel to the forthcoming computation of mean distance.
\end{remark*}

\subsection{Global diameter}
\label{sec:global-diameter}
Let $G$ be a graph with $n$ vertices and $m$ edges defining
the continuous graph $\GG$. Let $H$ be a subgraph of $G$
defining a continuous graph $\HH\subseteq \GG$.
We use a divide-and-conquer approach to compute $\diam(\HH,\GG)$. 

Let $(A,B,S)$ be a separation in $G$. 
For $X\in \{ A,B\}$, let $\GG_{X}$ be the continuous subgraph of $\GG$
defined by $G[X]$ and let $\HH_{X}=\HH\cap \GG_X$. 
Thus, $\HH_X$ is the continuous subgraph of $\HH$ defined by $H[X\cap V(H)]$.
We define $\HH_C$ as the continuous subgraph of $\HH_B$ obtained from $\HH_B$ after removing $\HH_A\setminus S$. Equivalently, $\HH_C$ is the continuous subgraph of $\HH$ defined by 
$H[B]-E(H[A])= H[B]-E(H[S])$. Thus, $\HH_C$ contains the edges of $\HH_B$ that have at least one vertex in
$B\setminus A=B\setminus S$, but it does not contain edges of $\HH_B$ between portals of $S$ (that is, with both endpoints in $S$); these edges are considered in $\HH_A$.
Observe that $\HH=\HH_A\cup \HH_C$ and $\HH_A\cap \HH_C\subseteq S$. We emphasize that $\HH_A$ and $\HH_C$ share only vertices of $S$ and no edges of $\HH$. This property is essential to apply Lemma~\ref{lem:AB} in this section, and to avoid double counting in the computation of the mean distance in Section~\ref{sec:global-mean}. 

We have
\begin{align*}
    \diam(\HH,\GG)&=\max \Big\{\max_{p,q\in \HH_A} d_{\GG}(p,q) ,~ \max_{p,q\in \HH_C} d_{\GG}(p,q), \displaystyle \max_{p\in \HH_A,\, q\in \HH_C} d_{\GG}(p,q) \Big\}.
\end{align*}

One must be careful because, for $X\in \{ A,B\}$, the shortest paths between two points $p,q\in \HH_X$ may not be contained in $\GG_X$: they may traverse the portals to the opposite side and then return. Thus, we might have $d_{\GG}(p,q)< d_{\GG_X}(p,q)$ for some $p,q\in \HH_X$. This is an issue to do divide-and-conquer, which requires treating each recursive subproblem independently. 
We handle this by adding extra edges between some portals, as follows.

Let $E'$ be the subset of pairs of portals $ss'\in \binom{S}{2}$ such that $d_G(s,s')< d_{G[A]}(s,s')$ or $d_G(s,s')< d_{G[B]}(s,s')$.
Let $G'$ be the graph obtained from $G$ by adding, for each pair $ss'\in E'$, an edge $ss'$ with length $d_G(s,s')$. If the edge $ss'$ already exists in $G$, we allow parallel edges in $G'$: the original edge $ss'\in E(G)$ whose length is $\ell(ss')$, and the new edge $ss'$ of length $d_G(s,s')$. Here, we slightly abuse the notation by not using $\ell(ss')$ to denote the length of this new ``artificial'' edge, and instead simply indicate its value. 
For $X\in \{A,B\}$, let $\GG'_X$  the continuous graph defined 
by $G'[X]$.
The edges $ss'\in E'$  guarantee that 
$d_{\GG}(p,q)=d_{\GG'_X}(p,q)$ for any two points $p, q\in \GG_{X}$ 
and $X\in\{A,B \}$.
Therefore, since $\HH_C\subseteq \GG_B$, we have (see \cref{fig:divide_and_conquer_diam})
\begin{align}
    \diam(\HH,\GG)&=\max \Big\{\max_{p,q\in \HH_A} d_{\GG}(p,q) ,~ \max_{p,q\in \HH_C} d_{\GG}(p,q), \displaystyle \max_{p\in \HH_A,\, q\in \HH_C} d_{\GG}(p,q) \Big\}\notag\\
    &=\max \Big\{\max_{p,q\in \HH_A} d_{\GG'_A}(p,q) ,~ \max_{p,q\in \HH_C} d_{\GG'_B}(p,q), \displaystyle \max_{p\in \HH_A, \, q\in \HH_C} d_{\GG}(p,q) \Big\}\notag\\
    &= \max \Big\{\diam(\HH_A,\GG'_A),~ \diam(\HH_C,\GG'_B), \displaystyle \max_{p\in \HH_A, \, q\in \HH_C} d_{\GG}(p,q)\Big\}.\label{for:dc}
\end{align}
The last term can be computed by \cref{lem:AB} because the edge sets defining $\HH_A$
and $\HH_C$ are disjoint, but the running time depends on the size of $S$. 
The first two terms can be computed recursively.
Note that through the recursion, each edge may be added artificially to keep the distances correct at most once; after adding $ss'\in E'$ with length $d_G(s,s')$, it will never have to be added again. Thus, the multiplicity of the parallel edges is at most two.

\begin{figure}[ht]
    \centering
    \includegraphics[page=7]{treewidth-figs.pdf}
    \caption{Visualization of the divide-and-conquer approach to compute $\diam(\GG)$ (see \cref{for:dc}).}
    \label{fig:divide_and_conquer_diam}
\end{figure}

Now we have two regimes depending on whether we want
to assume that the treewidth is constant, as done in~\cite{CABELLO2009815},
or whether we want to consider the treewidth a parameter, 
as done in~\cite{BHM20}.
The same distinction was made in~\cite{Cabello22}.
This difference affects the time to find a tree decomposition and
the number of portals in a balanced separation.
In both cases we use that a graph with $n$ vertices and treewidth $k$
has $O(kn)$ edges~\cite{Bodlaender98}. 

\begin{theorem}
\label{thm:diameter-treewidth-1}
	Let $k\ge 2$ be an integer constant, and let $\GG$ be the continuous graph defined by a graph $G$ with $n$ vertices, treewidth at most $k$, and
    nonnegative edge-lengths.
    Let $H$ be a subgraph of $G$ and let $\HH\subseteq \GG$
    be the corresponding continuous subgraph.
	The diameter $\diam(\HH,\GG)$ can be computed in 
    $O(n \log^{4k-2} n)$ time.
\end{theorem}

\begin{proof}
	If $G$ has fewer than $2k=O(1)$ vertices, 
    we compute $\diam(\HH,\GG)$ in $O(1)$ time. Otherwise, by \cref{theo:portals},
    we can find in linear time a separation $(A,B,S)$
    such that
    $|S|\le k$ and 
    $\frac{n}{k+1} \le |A| \le \frac{nk}{k+1}$.
    The pairs of portals $ss'\in E'$ to be added can be 
    computed using shortest-path trees in $G$, $G[A]$ and $G[B]$ from the vertices of $S$.
    This computation takes $O(k(m + n\log n)) =  O(n \log n)$ time.
    The edge sets defining $\HH_A$ and $\HH_C$ can be computed
    in $O(m)=O(n)$ time.
    
    Because the edge sets defining $\HH_A$ and $\HH_C$ are disjoint, 
    we can use \cref{lem:AB} to compute 
    the value $\max_{p\in \HH_A, q\in \HH_C} d_{\GG}(p,q)$ 
    in $O(m 2^{4k-3} k^4 B(m,4k-3))$ time.
    Using that $k$ is constant, $m=O(n)$, and~\cref{lem:Bnd},
    this time bound is $O(n \log^{4k-3}n)$.

    For $X\in\{A,B\}$, we construct the graphs $G'$ and $G'[X]$
    explicitly in $O(m)=O(n)$ time.
    Because adding the edges $ss'$  between all pairs of portals $ss'\in E'$ (with length $d_G(s,s')$) does not increase the treewidth (see \cref{theo:portals}) of the graph,
    the graphs $G'[A]$ and $G'[B]$ have treewidth at most $k$.
    The values $\diam(\HH_A,\GG'_A)$ and $\diam(\HH_C,\GG'_B)$ are 
    computed recursively, and we obtain $\diam(\HH,\GG)$ using
    \cref{for:dc}.

    Since $\frac{n}{k+1} \le |A| \le \frac{nk}{k+1}$ and $k$ is constant, 
    each side of the recursion has a constant fraction of the vertices 
    $|A|+|B|=n+k$,
    and the recursion depth is $O(\log n)$, leading to 
    a total running time of $O(n \log^{4k-2} n)$.
\end{proof}

\begin{theorem}
\label{thm:diameter-treewidth-2}
	Let $\GG$ be the continuous graph defined by a graph $G$ with $n$ vertices, treewidth at most $k$, and
    nonnegative edge-lengths.
    Let $H$ be a subgraph of $G$ and let $\HH\subseteq \GG$
    be the corresponding continuous subgraph.
	The diameter $\diam(\HH,\GG)$  can be computed in 
    $n^{1+\eps} 2^{O(k)}$ time, for any fixed $\eps>0$.
    \end{theorem}

\begin{proof}
	We use the same divide-and-conquer strategy as in \cref{thm:diameter-treewidth-1}. 
    The difference is in the properties of the separation.
    If $G$ has $O(k)$ vertices, we compute $\diam(\HH,\GG)$ in $O(k^2)$ time. 
    Otherwise, we proceed as follows.
    
	First, we note that, given a tree decomposition of $G$ of width $k'$, 
    we can obtain in linear time a separation 
    $(A,B,S)$ in $G$ with the following properties:
	the set $S$ of portals for $A$ has $k'+1$ portals;
    both $A$ and $B$ have $\Theta(n-k)$ vertices each;
    adding edges between the vertices of $S$ does not increase the treewidth
    \textit{of the tree decomposition}. 
    See for example~\cite[Theorem 19]{Bodlaender98};
    the set $S$ is a bag of the decomposition and thus the tree decomposition 
    keeps being valid with the addition of edges within $S$.

    It is shown in~\cite{BodlaenderDDFLP16} that,
	for graphs of treewidth at most $k$, one can find
	a tree decomposition of width $k'=3k+4$ in $2^{O(k)}n\log n$ time.
    From this we obtain the separation $(A,B,S)$ in $G$ mentioned above,
    where $|S|\le k'+1=3k+5$. 
    By \cref{lem:AB}, the value $\max_{p\in \HH_A, q\in \HH_C} d_{\GG}(p,q)$ 
    is computed in $O(m 2^{O(k)} (k'+1)^4 B(m,O(k))$ time.
    Using that $m=O(kn)$ and the estimate of \cref{lem:Bnd},
    this time bound becomes 
	\begin{align*}
		O((kn) 2^{O(k)} (k'+1)^4\cdot B(n,O(k)) =
		O(n2^{O(k)} \cdot n^{\eps}2^{O(k)}) =
		n^{1+\eps} 2^{O(k)},
	\end{align*}
	where $\eps>0$ can be chosen arbitrarily small.

    As done in the proof of \cref{thm:diameter-treewidth-1},
    we compute the edges between portals that are artificially added, that is, the set $E'$, as well as the values $d_G(s,s')$ for all $ss'\in E'$ in $O(k'(m+n\log n))=O(kn\log n)$ time.
    Then we compute $G'$, $G'[A]$, $G'[B]$, and the edge sets
    defining $\HH_A$ and $\HH_C$.
    
    To compute $\diam(\HH_X,\GG'_Y)$ recursively, where $(X,Y)$ is
    either $(A,A)$ or $(C,B)$,
    we pass to the subproblem the tree decomposition we have computed,
    trimmed to the vertices of $Y$ and adapted to keep the tree decompositions of size $O(|Y|)$.
    In this way, at any level of the recursion, we always have
    a set $S$ with $k'+1=3k+5$ portals. Thus, we compute the tree
    decomposition of width $k'$ only once, at the start, and then pass as part of the input a tree decomposition of width $k'$ to each subproblem.
	The recursive calls add a logarithmic factor to the total running time, 
    which is absorbed by the polynomial term $n^{1+\eps}$.

    (The reason for passing the tree decomposition to the subproblems
    is that adding the edges between the pairs of portals in $E'$ may give a clique of
    size $k'$, which increases the treewidth of $G'$. Computing an approximate
    tree decomposition of $G'$ anew would potentially increase the width of the decomposition
    at each level. However, if we pass the tree decomposition to the subproblems,
    we keep the width of the decomposition bounded by $k'=3k+4$ at all levels
    of the recursion.)
\end{proof}

%%%%%%%%%%%%%%%%%%%%%%%%%%%%%%%%%%%%%%%%%%%%%%%%%%%%%%%%%%%%%%%%%%%%%%%%%%%%%%%%%%%%%%%%%%%%%%%%%%%%%%%%%%%%%%%%%%%%%%%%%%%%%%%%%%%%%%%%%%%%%%%%%%%%%%%%%%%%%%%%%%%%%%%%%%%%%%%%%%%%%%%%%%%%%%%%%%%%%%%%%%%%%%%%%%%%%%%%%%%%%%%%%%%%%%%%%%%%%%%%%%%%%%%%%%%%%%%%%%%%%%%%%%%%%%%%%
\section{Mean distance in graphs with bounded treewidth}
\label{sec:meantreewidth}

In this section, we discuss the computation of the mean distance in continuous graphs with treewidth $k$.
The case of $k=1$ corresponds to continuous trees, one of the few graph classes for which linear-time algorithms are known~\cite{MeanDist23}.
Thus, we focus on $k>1$.
All our efforts will be on the computation of $\sumdist(\HH,\GG)$, from which $\mean(\HH,\GG)$ can be easily computed (both parameters are defined in \cref{subsec:parameters}.)
Again, we use orthogonal range searching, but now we need to efficiently retrieve sums of distances. To do this, we compute the sum of distances between pairs of edges as volumes of collections of triangular prisms (see also~\cite{MeanDist23}), and  represent them in a compact way. These representations can then be plugged in into the range searching machinery in order to obtain an algorithm as efficient as for the diameter.

\subsection{Mean distance between edges as a volume}\label{sebsec:mean edges}

A \emph{triangular prism} is a Cartesian product $\triangle\times [z_0,z_1]\subset \mathbb{R}^3$, where $\triangle$ is a triangle and $[z_0,z_1]$ is an interval; the lowest copy of the triangle, $\triangle\times \{z_0\}$, is its \emph{base}.
A \emph{truncated triangular prism} is the portion of a triangular prism $\triangle\times [z_0,z_1]$ in a lower, closed halfspace whose boundary separates $\triangle\times \{z_0\}$ from $\triangle\times \{z_1\}$. 
The \emph{heights} of a truncated triangular prism are the lengths
of the three edges orthogonal to the base.
The \emph{volume} of a truncated triangular prism with base $\triangle$ and
heights $h_1,h_2,h_3$ is $\tfrac{1}{3}\area(\triangle)(h_1+h_2+h_3)$.

Let $K=K(y,z,x_{0,0},x_{0,1},x_{1,0},x_{1,1})$ be a complete graph with 
vertex set $\{ a_0,a_1,b_0,b_1\}$ and variable edge lengths, as follows: 
\begin{itemize}
    \item[(i)] the edge $a_0a_1$ has variable length $y>0$,
    \item[(ii)] the edge $b_0b_1$ has variable length $z>0$,
    \item[(iii)] for all $\alpha,\beta\in \{ 0,1\}$, the edge 
        $a_\alpha b_\beta$  has length $x_{\alpha,\beta}\ge 0$.
\end{itemize}
Let $\mathcal{K}=\KK(y,z,x_{0,0},x_{0,1},x_{1,0},x_{1,1})$ denote the continuous graph defined by $K$.
See \cref{fig:K}.

We say that the $6$-tuple $(y,z,x_{0,0},x_{0,1},x_{1,0},x_{1,1})$
is \emph{compliant} if, for all $\alpha,\beta\in \{0,1\}$
it holds $x_{\alpha,\beta}= d_K(a_\alpha,b_\beta)$. Notice that the distance is taken in $K$.
In our setting we only need to consider compliant cases, which is enforced by conditions on the values that the variables can take.

 \begin{figure}
     \centering
     \includegraphics[page=1]{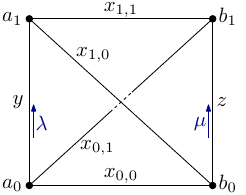}
     \caption{The graph $K(y,z,x_{0,0},x_{0,1},x_{1,0},x_{1,1})$.}
     \label{fig:K}
 \end{figure}

We want to understand how the total sum of distances between points on $a_0a_1$ and $b_0b_1$,
\[
    \xi(y,z,x_{0,0},x_{0,1},x_{1,0},x_{1,1}) = \iint_{p\in a_0a_1, q\in b_0b_1} d_{\KK}(p,q) \,dp \,dq,
\]
looks like.
For each $\lambda\in [0,y]$, let $p(\lambda)$ be the point 
specified by the triple $(a_0a_1,a_0,\lambda)$,
and, for each $\mu\in [0,z]$, let $q(\mu)$ be the point 
specified by the triple $(b_0b_1,b_0,\mu)$.
Then 
\[
    \xi(y,z,x_{0,0},x_{0,1},x_{1,0},x_{1,1}) = 
    \iint_{(\lambda,\mu)\in [0,y]\times [0,z]} d_{\KK}(p(\lambda),q(\mu)) 
        \,d\lambda \,d\mu .
\]
The function $(\lambda,\mu)\mapsto d_{\KK}(p(\lambda),q(\mu))$,
defined in $[0,y]\times [0,z]$,
is the lower envelope of the following four functions:
\[
    x_{0,0} + \lambda + \mu;~~
    x_{0,1} + \lambda + z-\mu;~~
    x_{1,0} + y-\lambda + \mu;~~
    x_{1,1} + y-\lambda + z-\mu. 
\]
Following~\cite{MeanDist23}, we call the graph of the function 
$(\lambda,\mu)\mapsto d_{\KK}(p(\lambda),q(\mu))$ a \emph{roof};
the value $\xi$ is the volume below the roof.
The minimization diagram of this function consists of convex pieces;
see \cref{fig:roof1}.
The gradient of each function 
is of the form $(\pm 1, \pm 1)$.
When the variable values are compliant, 
all four functions appear in the lower
envelope, and the minimization diagram has four regions.
(Some of them may be degenerate and contain only part of an edge of the domain.)

\begin{figure}[ht]
    \centering
    \includegraphics[page=5,width=\linewidth]{mean.pdf}
    \caption{Concrete example of the minimization diagram of  $d_\KK(p(\lambda),q(\mu))$.
        In the center, some values of $d_\KK(p(\lambda),q(\mu))$
        are shown in red;
        on the right, 3D visualization of the roofs.}
    \label{fig:roof1}
\end{figure}

\begin{lemma}
\label{lem:polynomial}
    Consider the $4$-variable linear function 
    $L(x_{0,0},x_{0,1},x_{1,0},x_{1,1}) =    
    x_{1,0} + x_{0,1} - x_{0,0} - x_{1,1}$.
    There are two $6$-variable polynomials
    $\varrho_{+}(\cdot),\varrho_{-}(\cdot)$ of degree 
    at most three with the following property:
    when $(y,z,x_{0,0},x_{0,1},x_{1,0},x_{1,1})$ 
    is compliant,
    \[
        \xi(y,z,x_{0,0},x_{0,1},x_{1,0},x_{1,1}) = \begin{cases}
        \varrho_+(y,z,x_{0,0},x_{0,1},x_{1,0},x_{1,1}), & 
        \text {if } L(x_{0,0},x_{0,1},x_{1,0},x_{1,1})\ge 0,\\
        \varrho_-(y,z,x_{0,0},x_{0,1},x_{1,0},x_{1,1}), & 
        \text {otherwise}.
        \end{cases}
    \]    
\end{lemma}
\begin{proof}
    For the compliant case, there are two possible roofs, 
    as shown in \cref{fig:roof2}.
    We conclude that the following values exist:
    \begin{align*}
        \lambda_0\in [0,y] &\text{ such that } 
                d_K(a_0,b_0)+\lambda_0 = d_K(a_1,b_0)+ (y-\lambda_0),\\
        \lambda_1\in [0,y] &\text{ such that } 
                d_K(a_0,b_1)+\lambda_1 = d_K(a_1,b_1)+ (y-\lambda_1),\\
        \mu_0\in [0,z] &\text{ such that } 
                d_K(a_0,b_0)+\mu_0 = d_K(a_0,b_1)+ (z-\mu_0),\\
        \mu_1\in [0,z] &\text{ such that } 
                d_K(a_1,b_0)+\mu_1 = d_K(a_1,b_1)+ (z-\mu_1).
   \end{align*}
   Since in the compliant case we have $x_{\alpha,\beta}=d_K(a_\alpha,b_\beta)$, the solution is
   \begin{align*}
        \lambda_0 &=\tfrac{1}{2} (y + x_{1,0} - x_{0,0}), \\
        \lambda_1 &=\tfrac{1}{2} (y + x_{1,1} - x_{0,1}), \\
        \mu_0 &= \tfrac{1}{2} (z + x_{0,1} - x_{0,0}),\\
        \mu_1 & = \tfrac{1}{2} (z + x_{1,1} - x_{1,0}).
   \end{align*}
   To know in which case we are, we compare $\lambda_0$ and $\lambda_1$.
   Thus, it suffices to consider the sign of  
   \[
     (y + x_{1,0} - x_{0,0}) - (y + x_{1,1} - x_{0,1}) = 
        x_{1,0} + x_{0,1} - x_{0,0} - x_{1,1} =
    L(x_{0,0},x_{0,1},x_{1,0},x_{1,1}).
   \]
   Note that $\lambda_0\ge \lambda_1$ if and only if $\mu_0\ge \mu_1$.
   
   \begin{figure}
        \centering
        \includegraphics[page=3,width=\textwidth]{mean.pdf}
        \caption{The cases where $L> 0$ (left), $L=0$ (center) and $L<0$ (right).}
        \label{fig:roof2}
    \end{figure}

    When we know how the minimization diagram looks like, we can
    compute the volume below the roof by breaking the minimization
    diagram into triangles. \cref{fig:roof3} shows the case when $L> 0$.
    The structure of the decomposition is the same, independently
    of the actual values of the parameters, as far as
$(y,z,x_{0,0},x_{0,1},x_{1,0},x_{1,1})$ is compliant and $L(x_{0,0},x_{0,1},x_{1,0},x_{1,1})>0$.

   \begin{figure}
        \centering
        \includegraphics[page=4]{mean.pdf}
        \caption{Breaking the domain into triangles to compute
            the volume for $L> 0$. Left: coordinates in the domain. Right: values
            $d_{\KK}(p(\lambda),q(\mu))$.}
        \label{fig:roof3}
    \end{figure}

    The area of each triangle is a linear
    function on $y,\lambda_0, \lambda_1$ multiplied
    by a linear function on $ z,\mu_0, \mu_1$.
    At the vertices of each piece, the value of
    $d_{\KK}(p(\lambda),q(\mu))$ is a linear 
    combination of $x_{0,0},y,\lambda_0, \lambda_1,z,\mu_0, \mu_1$.
    We conclude that 
    $\xi(y,z,x_{0,0},x_{0,1},x_{1,0},x_{1,1})$ is the sum
    of a finite number of cubic polynomials in the variables,
    and therefore it is itself a cubic polynomial.

    The case $L=0$ can be handled with the same formula;
    the area of six triangular regions of the decomposition vanish, but the formulas still hold.
    The case $L<0$ is handled similarly (recall \cref{fig:roof2}), but with a different polynomial.
\end{proof}

\subsection{Integral across the portals}
\label{sec:integral-across}
We reuse much of the notation and ideas from~\cref{sec:diam-across}. 
As before, $G$ is a graph with $n$ vertices and $m$ edges,
$(A,B,S)$ is a separation in $G$ with exactly $k$ portals,
and we fix an enumeration of the portals as $s_1,\dots,s_k$. 
We also fix an orientation for the edges of $G$, and consider sets $E_A\subseteq E(G[A])$  and $E_B\subseteq E(G[B])$, which are disjoint.
Our objective in this section is to compute the sum of distances between points in $E_A$ and $E_B$:
\[   \iint\limits_{{p\in \GG(E_A), q\in \GG(E_B)}} d_{\GG}(p,q)\, dp \, dq =
    \sum_{(a_0,a_1)\in E_A}~~\sum_{(b_0,b_1)\in E_B}
    ~~~~\iint\limits_{p\in a_0a_1, q\in b_0b_1} d_{\GG}(p,q)\, dp \, dq.
\]

For two fixed edges $(a_0,a_1)\in E_A$, $(b_0,b_1)\in E_B$, consider the continuous graph $\KK$ (see \cref{sebsec:mean edges}) defined by $K=K(\ell(a_0a_1), \ell(b_0b_1),
d_G(a_0,b_0), d_G(a_0,b_1), d_G(a_1,b_0), d_G(a_1,b_1))$.

Edges $a_0a_1$ and $b_0b_1$ belong to $\GG$ and $\KK$
and, for each $p\in a_0a_1$ and $q\in b_0b_1$
we have $d_\GG(p,q)=d_\KK(p,q)$.
Therefore,
\begin{align*}
    \iint\limits_{p\in a_0a_1, q\in b_0b_1} d_{\GG}(p,q) \,dp \,dq &=
    \iint\limits_{p\in a_0a_1, q\in b_0b_1} d_{\KK}(p,q) \,dp \,dq.
\end{align*}
It follows that our objective is to compute
\[ 
    \sum_{(a_0,a_1)\in E_A}~~\sum_{(b_0,b_1)\in E_B}
    \xi(\ell(a_0a_1),\ell(b_0b_1),d_G(a_0,b_0),
    d_G(a_0,b_1),d_G(a_1,b_0),d_G(a_1,b_1)).
\]

We keep using the predicates $\varphi(i;a,b)$
and $\Phi(\kappa; a_0,a_1,b_0,b_1)$ defined in~\cref{sec:diam-across},
where $a,a_0,a_1\in A$, $b,b_0,b_1\in B$,
$i\in [k]$ and $\kappa=(i_{0,0},i_{0,1},i_{1,0},i_{1,1})\in [k]^4$.

Consider the $4$-variable linear function 
    $L(x_{0,0},x_{0,1},x_{1,0},x_{1,1})$ of \cref{lem:polynomial}.
For each pair $(a_0,a_1),(b_0,b_1)\in E(G)$,
we have
\[
    (a_0,a_1,b_0,b_1) \text{ of \emph{type $1$}} ~~\Longleftrightarrow~~
    L(d_G(a_0,b_0), d_G(a_0,b_1),d_G(a_1,b_0),d_G(a_1,b_1))\ge 0.
\]
Otherwise, $(a_0,a_1,b_0,b_1)$ is of \emph{type $2$}.
For each oriented edge $(b_0,b_1)\in E(G)$ and type $\tau\in \{1,2\}$, we again consider the set $\Type_\tau(b_0,b_1)$, defined in \cref{subsec:walks}, \cref{eq:type12}. In addition, for
each $\kappa\in [k]^4$, each $(b_0,b_1)\in E_B$,
and each type $\tau\in \{ 1,2 \}$ we define
\begin{align}\label{eq:Psi1}
    \Psi_\tau(\kappa;b_0,b_1)= \sum \xi \big( \ell(a_0a_1), \ell(b_0b_1), 
        d_G(a_0,b_0), d_G(a_0,b_1),d_G(a_1,b_0),d_G(a_1,b_1) \big),
\end{align}
where the sum ranges over all oriented edges $(a_0,a_1)\in E_A$ such that
$(a_0,a_1)\in \Type_\tau(b_0,b_1)$ and $\Phi(\kappa;a_0,a_1,b_0,b_1)$ holds. 

Next, we show that we can efficiently compute  $\Psi_\tau(\kappa;b_0,b_1)$  
for all edges $(b_0,b_1)\in E_B$.

\begin{lemma}
\label{le:accross-mean}
    Consider a fixed type $\tau\in \{1,2\}$ and indices $\kappa\in [k]^4$.
    We can compute the values $\Psi_\tau(\kappa;b_0,b_1)$ for all $(b_0,b_1)\in E_B$ in $O(m 2^{4k-3} B(m,4k-3))$ time.
\end{lemma}
\begin{proof}
    We discuss the computation for $\tau=1$; the case $\tau=2$ is analogous
    and omitted. 
    
    Since the pair $(\GG,d)$ is a metric space, the triangle inequality tells us that $d_K(a_\alpha,b_\beta)=d_G(a_\alpha,b_\beta)$ for $\alpha,\beta\in \{0,1\}$. Hence, for each
    $(a_0,a_1)\in E_A$ and each $(b_0,b_1)\in E_B$, the tuple 
    $(\ell(a_0a_1),\ell(b_0b_1),
    d_G(a_0,b_0),d_G(a_0,b_1),
    d_G(a_1,b_0),d_G(a_1,b_1))$ is compliant.
    By \cref{lem:polynomial}, \cref{eq:Psi1} can then be
    written, for each $(b_0,b_1)\in E_B$, as  
    \begin{equation}\label{eq:Psi2}
        \Psi_1(\kappa;b_0,b_1) = 
        \sum \varrho_+ \big( \ell(a_0a_1), \ell(b_0b_1), 
        d_G(a_0,b_0), d_G(a_0,b_1),d_G(a_1,b_0),d_G(a_1,b_1) \big),
    \end{equation}
    where the sum ranges over all oriented edges $(a_0,a_1)\in E_A$ 
    such that $(a_0,a_1)\in \Type_1(b_0,b_1)$ and 
    $\Phi(\kappa;a_0,a_1,b_0,b_1)$ holds.
    
    Consider a fixed $\kappa=(i_{0,0},i_{0,1},i_{1,0},i_{1,1})\in [k]^4$.
    Again, we  omit  the dependency 
    on $\kappa$ when it is clear from the context.
    
    We reuse the construction in \cref{le:accross} that
    in $O(m 2^{4k-3} B(m,4k-3))$ time gives
    points $p(a_0,a_1)\in \mathbb{R}^{4k-4}$ (for each $(a_0,a_1)\in E_A$)
    and rectangles $R(b_0,b_1)\subset \mathbb{R}^{4k-4}$ such that
    \begin{align*}
        p(a_0,a_1)\in R(b_0,b_1)~~\Longleftrightarrow~~
        & \Phi(\kappa; a_0,a_1,b_0,b_1).
    \end{align*}
    For each edge $(a_0,a_1)\in E_A$, we append to $p(a_0,a_1)$
    an extra coordinate given by
    \[
        p_{\rm extra}(a_0,a_1)=d_G(a_1,s_{i_{1,0}}) + d_G(a_0,s_{i_{0,1}})
            - d_G(a_0,s_{i_{0,0}}) - d_G(a_1,s_{i_{1,1}})..
    \] 
    Let $p^+(a_0,a_1)$ be the resulting point in $\mathbb{R}^{4k-3}$.
    Similarly, we extend the rectangle $R(b_0,b_1)$ by an
    extra dimension by taking the Cartesian product with the interval
    \[
        I_{\rm extra}(b_0,b_1) = \big[d_G(s_{i_{0,0}},b_0) + d_G(s_{i_{1,1}},b_1)
    - d_G(s_{i_{1,0}},b_0) -  d_G(s_{i_{0,1}},b_1) , ~ +\infty \big).
    \]
    Let $R^+(b_0,b_1)$ be the resulting rectangle in $4k-3$ dimensions.
    Computing these points and intervals takes additional $O(m)$ time,
    once we have precomputed the distances from the portals to
    all vertices.

    We note that $p^+(a_0,a_1) \in R^+(b_0,b_1)$ if and only if 
    $\Phi(\kappa;a_0,a_1,b_0,b_1)$  
    and $p_{\rm extra}(a_0,a_1) \in I_{\rm extra}(b_0,b_1)$ holds. 
    The property $\Phi(\kappa;a_0,a_1,b_0,b_1)$
    translates into 
    \begin{quote}
        $\forall (\alpha,\beta)\in \{0,1\}^2$:~~ $i_{\alpha,\beta}$ is the smallest index $i$ such that $s_i$ lies in some shortest path
        from $a_\alpha$ to $b_\beta$. 
    \end{quote}
    The condition $p_{\rm extra}(\kappa;a_0,a_1)\in I_{\rm extra}(\kappa;b_0,b_1)$ is equivalent to 
    \begin{align*}
         d_G(a_1,s_{i_{1,0}}) + d_G(a_0,s_{i_{0,1}})
            - d_G(a_0,s_{i_{0,0}}) - d_G(a_1,s_{i_{1,1}}) \ge \\
     d_G(s_{i_{0,0}},b_0) + d_G(s_{i_{1,1}},b_1)
    - d_G(s_{i_{1,0}},b_0) -  d_G(s_{i_{0,1}},b_1),
    \end{align*}
    which is equivalent to
    \begin{align*}
         d_G(a_1,s_{i_{1,0}}) &+ 
         d_G(s_{i_{1,0}},b_0)
         + d_G(a_0,s_{i_{0,1}}) +d_G(s_{i_{0,1}},b_1)
            \ge \\
     & d_G(a_0,s_{i_{0,0}}) +d_G(s_{i_{0,0}},b_0) + d_G(a_1,s_{i_{1,1}})+d_G(s_{i_{1,1}},b_1).
    \end{align*}
    Under the assumption that 
    $\Phi(\kappa;a_0,a_1,b_0,b_1)$ holds, this condition becomes
    \[ 
        d_G(a_1,b_0) + d_G(a_0,b_1) - d_G(a_0,b_0) - d_G(a_1,b_1) \ge 0. 
    \]
    Since $L(x_{0,0}, x_{0,1}, x_{1,0}, x_{1,1})= 
        x_{1,0} + x_{0,1} - x_{0,0} - x_{1,1}$,
    this is precisely the condition for $(a_0,a_1,b_0,b_1)$ to be of type 1.
    We summarize:
    \[
        p^+(a_0,a_1) \in R^+(b_0,b_1) ~~\Longleftrightarrow ~~
            (a_0,a_1)\in \Type_1(b_0,b_1) \text{ and }
            \Phi(\kappa;a_0,a_1,b_0,b_1).
    \]
    Therefore, for each $(b_0,b_1)\in E_B$, we can rewrite \cref{eq:Psi2} as 
    \begin{equation}\label{eq:Psi3}
        \Psi_1(\kappa;b_0,b_1) = 
        \sum \varrho_+ \big( \ell(a_0a_1), \ell(b_0b_1), 
        d_G(a_0,b_0), d_G(a_0,b_1),d_G(a_1,b_0),d_G(a_1,b_1) \big),
    \end{equation}
    where the sum is over all points $p^+(a_0,a_1) \in R^+(b_0,b_1)$.

    For each $(a_0,a_1) \in E_A$,
    we compute and store the coefficients of the $5$-variable polynomial 
    \begin{align*}
        \varrho_{a_0,a_1}(z,&x'_{0,0},x'_{0,1},x'_{1,0},x'_{1,1}) =\\
        &\varrho_+\big( 
                \ell(a_0a_1),z,x'_{0,0}+\delta_{0,0}(a_0),
                x'_{0,1}+\delta_{0,1}(a_0),
                x'_{1,0}+\delta_{1,0}(a_1),
                x'_{1,1}+\delta_{1,1}(a_1) \big), 
    \end{align*}
    where we used the shorthand notation 
    $\delta_{\alpha,\beta}(a_\alpha)=d_G(a_\alpha,s_{i_{\alpha,\beta}})$.
    Note that this polynomial is cubic because it is essentially
    a shifted version of $\varrho_+(\cdot)$; one variable is also set
    to a constant value. Thus, we can compute and store 
    $\varrho_{a_0,a_1}(\cdot)$ in coefficient form in $O(1)$ time.
    
    We use orthogonal range searching to handle the sum in the right
    side of \cref{eq:Psi3}, as follows.
    We collect the set of points 
    $P_A=\{ p^+(a_0,a_1) \mid (a_0,a_1)\in E_A\}$.
    This set has $O(|E_A|)=O(m)$ points in $\mathbb{R}^{4k-3}$.
    For $P_A$ we compute $\mathcal{P}=\{ P_j \mid j\in J\}$ 
    and the data structure of \cref{thm:rangesearching}
    in $O(m k \cdot B(m,4k-3))$ time.
    
    For each $j\in J$, we compute and store the coefficients
    of the cubic, $5$-variable polynomial 
    \[
        \varrho_j (\cdot)= 
            \sum_{p^+(a_0,a_1) \in P_j} \varrho_{a_0,a_1}(\cdot).                 
    \]
    Computing and storing these polynomials in coefficient form
    over all $j\in J$ takes $O(\sum_j |P_j|)$ space, and does not 
    affect the asymptotic running time.
    This finishes the description of the construction of the data structure.

    Consider now an edge $(b_0,b_1)\in E_B$.
    We query the data structure with the rectangle $R=R^+(b_0,b_1)$
    and get the indices $J_R$ such that 
    $P_A\cap R^+(b_0,b_1) = \bigsqcup_{j\in J_R} P_j$.
    The query takes $O(2^{4k-3} B(m,4k-3))$ time
    and $|J_R|=O(2^{4k-3} B(m,4k-3))$.
    Next we compute in $O(|J_R|)=O(2^{4k-3} B(m,4k-3))$ time
    \begin{align*}
        \psi(b_0,b_1) = \sum_{j\in J_R}
            \varrho_j \big(\ell(b_0b_1),\epsilon_{0,0}(b_0),
                \epsilon_{0,1}(b_1),\epsilon_{1,0}(b_0), 
                \epsilon_{1,1}(b_1)\big),
    \end{align*}
    where we have used the shorthand notation
    $\epsilon_{\alpha,\beta}(b_\beta)=d_G(s_{i_{\alpha,\beta}},b_\beta)$.
    Note that the computation takes $O(1)$ time per $j\in J_R$
    because the polynomial $\varrho_j$ is of constant degree
    and the values where we evaluate are already available.
    
    We next argue that $\psi(b_0,b_1)=\Psi_0(\kappa;b_0,b_1)$.
    Using the definition of $\varrho_j$, we have
    \begin{align*}
        \psi(b_0,b_1) &= \sum_{j\in J_R} ~~\sum_{p^+(a_0,a_1)\in P_j} 
            \varrho_{a_0,a_1} \big(\ell(b_0b_1),\epsilon_{0,0}(b_0),
                 \epsilon_{0,1}(b_1),\epsilon_{1,0}(b_0), 
                 \epsilon_{1,1}(b_1)\big)\\
         &= \sum_{p^+(a_0,a_1)\in R^+(b_0,b_1)} 
            \varrho_{a_0,a_1} \big(\ell(b_0b_1),\epsilon_{0,0}(b_0),
                 \epsilon_{0,1}(b_1),\epsilon_{1,0}(b_0), 
                 \epsilon_{1,1}(b_1)\big).
    \end{align*}
    Because $p^+(a_0,a_1)\in R^+(b_0,b_1)$, the
    predicate $\Phi(\kappa;a_0,a_1,b_0,b_1)$ holds,
    and thus for all $\alpha,\beta\in \{0,1\}$
    \[
        \epsilon_{\alpha,\beta}(b_\delta)+\delta_{\alpha,\beta}(a_\alpha) = 
        d_G(s_{i_{\alpha,\beta}},b_\beta) + d_G(a_\alpha,s_{i_{\alpha,\beta}})
        = d_G(a_\alpha,b_\beta),
    \]
    which means that, whenever $p^+(a_0,a_1)\in R^+(b_0,b_1)$,
    we have
    \begin{align*}
        \varrho_{a_0,a_1} \big(\ell(b_0b_1),&\epsilon_{0,0}(b_0),
                 \epsilon_{0,1}(b_1),\epsilon_{1,0}(b_0), 
                 \epsilon_{1,1}(b_1)\big)=\\
        & \varrho_+ \big( \ell(a_0a_1), \ell(b_0b_1), 
        d_G(a_0,b_0), d_G(a_0,b_1),d_G(a_1,b_0),d_G(a_1,b_1) \big).
    \end{align*}
    Therefore, for the fixed $\kappa$ under consideration, 
    the value $\psi(b_0,b_1)$ we have computed is 
    \begin{align*}
        \sum_{p^+(a_0,a_1)\in R^+(b_0,b_1)}
                \varrho_+ \big( \ell(a_0a_1), \ell(b_0b_1), 
                   d_G(a_0,b_0), d_G(a_0,b_1),d_G(a_1,b_0),d_G(a_1,b_1) \big)
    \end{align*}
    which is $\Psi_1(\kappa;b_0,b_1)$ because of \cref{eq:Psi3}.    
    In total, we spend $O(2^{4k-3} B(m,4k-3))$ time per edge 
    $(b_0,b_1)\in E_B$ to compute $\Psi_0(\kappa;b_0,b_1)$.
\end{proof}

We can now efficiently compute the sum of distances between all pairs of points in, 
respectively, $\GG(E_A)$ and $\GG(E_B)$.

\begin{lemma}
\label{lem:AB-mean}
    If $E_A$ and $E_B$ are disjoint, we can compute 
    $\iint_{p\in \GG(E_A), q\in \GG(E_B)} d_{\GG}(p,q)\, dp \, dq$ 
    in $O(m 2^{4k-3} k^4 B(m,4k-3))$ time.
\end{lemma}
\begin{proof}
    We use \cref{le:accross-mean} to compute, for each type $\tau\in \{1,2\}$ and indices $\kappa\in [k]^4$, the values $\Psi_\tau(\kappa;b_0,b_1)$ for all $(b_0,b_1)\in E_B$.
    This takes $O(k^4)$ uses of \cref{le:accross-mean}, and therefore we spend 
    $O(m 2^{4k-3} k^4 B(m,4k-3))$ time.
    As discussed at the beginning of \cref{sec:integral-across}, we have that
    $\iint_{p\in \GG(E_A), q\in \GG(E_B)} d_{\GG}(p,q)\, dp \, dq$
    is 
    \[ 
        \sum_{(a_0,a_1)\in E_A}~~\sum_{(b_0,b_1)\in E_B}
        \xi(\ell(a_0a_1),\ell(b_0b_1),d_G(a_0,b_0),
        d_G(a_0,b_1),d_G(a_1,b_0),d_G(a_1,b_1)).
    \]
    Since for each $(a_0,a_1)\in E_A$ and each $(b_0,b_1)\in E_B$ there
    is exactly one $\kappa\in [k]^4$ such that $\Phi(\kappa;a_0,a_1,b_0,b_1)$,
    and because the types complement each other,
    this last sum is 
    \[
        \sum_{\kappa\in [k]^4}\sum_{\tau\in \{1,2\}} \sum_{(b_0,b_1)\in E_B} ~~ 
        \Psi_\tau(\kappa;b_0,b_1).
    \]
    With this expression, it is clear that we can obtain the result in
    $O(|E_B|\cdot k^4\cdot 2)$ additional time.
\end{proof}

\subsection{Global mean}
\label{sec:global-mean}
Let $G$ be a graph with $n$ vertices and $m$ edges defining
the continuous graph $\GG$. Let $H$ be a subgraph of $G$
defining a continuous graph $\HH\subseteq \GG$.
We use a divide-and-conquer approach to compute $\sumdist(\HH,\GG)$. 
The approach is very similar to that in \cref{sec:global-diameter}
for the diameter, and thus we only emphasize the differences.

Let $(A,B,S)$ be a separation in $G$. We use the definitions and notation introduced 
before \cref{thm:diameter-treewidth-1}: the graph $G'$ obtained after adding edges 
between portals to keep the distances in the subproblems, the continuous graphs $\GG_X$, 
$\HH_X$ and $\GG'_X$ for $X\in \{ A, B\}$, and the continuous graph $\HH_C \subseteq \HH_B$. Recall that for $X\in \{A,B\}$ and $p,q\in \GG_X$ we have $d_{\GG}(p,q)= d_{\GG'_X}(p,q)$.

Since $\HH=\HH_A\cup \HH_C$ and $\HH_A\cap \HH_C \subseteq S$, we have
\[ \HH\times \HH = (\HH_A\times \HH_A)\cup (\HH_A\times \HH_C)\cup (\HH_C\times \HH_A)\cup (\HH_C\times \HH_C)
\]
and the sets on the right side overlap in sets of measure zero.
Therefore
\begin{align*}\begin{aligned}\label{for:mean}
    \sumdist(\HH,\GG)&= ~~~
    2\cdot \iint\limits_{p\in \HH_A, q\in \HH_C} d_{\GG}(p,q)\, dp \, dq 
    + \sumdist(\HH_A,\GG)+\sumdist(\HH_C,\GG)\\
    &= ~~~
    2\cdot \iint\limits_{p\in \HH_A, q\in \HH_C} d_{\GG}(p,q)\, dp \, dq 
    + \sumdist(\HH_A,\GG'_A)+\sumdist(\HH_C,\GG'_B).
    \end{aligned}
\end{align*}

Because the edge sets defining $\HH_A$ and $\HH_C$ are disjoint, 
the first term can be computed using \cref{lem:AB-mean} in
$O(m 2^{4k-3} k^4 B(m,4k-3))$ time, where $k=|S|$.
The second and third terms can be computed recursively, and the size of the subproblems
is bounded by the size of the graphs $G[A]$ and $G[B]$.
The rest of the approach and analysis is essentially that of \cref{sec:global-diameter}.
Dividing $\sumdist(\HH,\GG)$ by $\ell(\HH)^2$, we obtain $\mean(\HH,\GG)$.

\begin{theorem}
\label{thm:mean-treewidth-1}
	Let $k\ge 2$ be an integer constant, and let $\GG$ be the continuous graph defined by a graph $G$ with $n$ vertices, treewidth at most $k$, and
    nonnegative edge-lengths.
    Let $H$ be a subgraph of $G$ and let $\HH\subseteq \GG$
    be the corresponding continuous subgraph.
	The mean distance
 $\mean(\HH,\GG)$ can be computed in 
    $O(n \log^{4k-2} n)$ time.
\end{theorem}

\begin{theorem}
\label{thm:mean-treewidth-2}
	Let $\GG$ be the continuous graph defined by a graph $G$ with $n$ vertices, treewidth at most $k$, and
    nonnegative edge-lengths.
    Let $H$ be a subgraph of $G$ and let $\HH\subseteq \GG$
    be the corresponding continuous subgraph.
	The mean distance $\mean(\HH,\GG)$ can be computed in 
    $n^{1+\eps} 2^{O(k)}$ time, for any fixed $\eps>0$.
\end{theorem}

%%%%%%%%%%%%%%%%%%%%%%%%%%%%%%%%%%%%%%%%%%%%%%%%%%%%%%%%%%%%%%%%%%%%%%%%%%%%%%%%%%%%%%%%%%%%%%%%%%%%%%%%%%%%%%%%%%%%%%%%%%%%%%%%%%%%%%%%%%%%%%%%%%%%%%%%%%%%%%%%%%%%%%%%%%%%%%%%%%%%%%%%%%%%%%%%%%%%%%%%%%%%%%%%%%%%%%%%%%%%%%%%%%%%%%%%%%%%%%%%%%%%%%%%%%%%%%%%%%%%%%%%%%%%%%%%%
\section{Planar graphs}
\label{sec:planar}
In this section, we consider a continuous graph $\GG$ defined by a planar graph $G$.
Note that, by Euler's formula, the number $F$ of faces of a planar graph is the same for any embedding of the graph,
and thus we can talk about \textit{the} number of faces of a planar graph.
Recall that the length $\ell(e)$ of an edge $e$ of $G$ is not related to the 
length of $e$ in any embedding or realization of $G$.

The main result of this section is the following:

\thmplan*

In order to prove the theorem, we fix a combinatorial embedding of $G$, obtaining a \emph{plane graph}.
A combinatorial embedding is usually described by giving the cyclic order of the edges around 
each vertex, and we will follow also this principle.
For any face $f$ of $G$, let $\GG_f$ be the continuous set of points on the boundary of $f$. 
Thus, $\GG_f$ contains precisely $\GG(e)$ over all edges $e$ defining the boundary of $f$.

Our main technical result is that, for each fixed face $f$ of $G$, one can compute the eccentricity 
and the mean distance of all points of $\GG_f$ in $O(n\log n)$ time. From this,
we immediately obtain \cref{thm:planar} by iterating through all faces.

In the following, we assume that the boundary of a face $f$ is a cycle. 
This can be achieved by cutting open some trees incident to $f$ and 
using zero-length edges. 

The rest of the section is organized as follows. First we introduce in \cref{sec:dynamicforests}
some data structures for dynamic trees. In \cref{sec:mssp} 
we explain how to maintain a shortest-path tree in $G$ as the source slides continuously 
along the boundary $\GG_f$ of a face $f$. We then address the computation of the eccentricity 
for points in $\GG_f$ in \cref{sec:eccentricity-plane}, while in \cref{sec:mean-plane}
we discuss the computation of the mean distance.

\subsection{Two data structures for dynamic forests}
\label{sec:dynamicforests}
We need two different data structures that dynamically maintain a forest with real weights. 

The first one is a \emph{vertex-weighted forest} 
that implicitly stores a real value with each vertex.
Moreover, each vertex has a boolean flag that tells whether
it is \emph{marked} or not; for some queries, only marked vertices are taken into account. 
The data structure supports the following operations:
\begin{itemize}
	\item $\Create(\lambda,b)$: Makes a new tree with a single vertex of 
		value $\lambda$. The vertex is marked or not depending on the boolean $b$.
	\item $\Cut(e)$: Removes the edge $e$ from the tree that contains it. 
	\item $\Link(u,v)$: Adds the edge $uv$.
	\item $\GetVertexValue(v)$: Returns the value of vertex $v$.
	\item $\AddTree(\Delta,v)$: Adds $\Delta$ to the  
        value of each vertex in the tree that contains $v$.
	\item $\MaxTree(v)$: Returns the maximum value
		over the \textit{marked vertices} in the tree that contains $v$.
    \item $\SumTree(v)$: Returns the sum of the values
		over all \textit{marked vertices} in the tree that contains $v$.
\end{itemize}

There are several data structures that can handle these operations in $O(\log n)$
amortized time per operation, such as Euler-tour trees~\cite{HenzingerK99,Tarjan97},
link-cut trees~\cite{GoldbergGT91,SleatorT83}, top trees~\cite{AlstrupHLT05}, 
or self-adjusting top trees~\cite{TarjanW05}.
Any of them will be suitable for our purposes.

The second data structure is an \emph{edge-weighted embedded forest} 
that implicitly stores a real value with each edge.
The forest is \emph{embedded}, meaning that the circular order of the edges around 
each vertex is prescribed. We consistently use \emph{clockwise order}
around each vertex for all vertices. 
We say that an edge $e$ is \emph{to the left} (resp.~\emph{to the right}) of a path 
$\pi=v_0v_1\ldots v_k$ if (see \cref{fig:trees1}, left):
\begin{itemize}
    \item $e$ is in the same connected component as $\pi$ but
        in not in $\pi$;
    \item the first edge $v_iu_i$ from $\pi$ towards $e$ (possibly $e=v_iu_i$) satisfies 
        that $i\neq 0$ and $i \neq k$; and 
    \item at $v_i$ the clockwise order restricted to $v_i u_i$ and 
        the two path edges incident to $v_i$ is $v_i v_{i-1}, v_i u_i, v_i v_{i+1}$ 
        (resp.,~the clockwise order is $v_i v_{i-1}, v_i v_{i+1}, v_i u_i$).
\end{itemize}
For trees, an edge cannot be both to the right and to the left of a path, because
it would introduce a cycle. By definition, if the first edge from the path $\pi$
towards the edge $e$ starts at an endpoint of $\pi$, then $e$
is neither to the left nor to the right of $\pi$.

\begin{figure}[ht]
\centering
	\includegraphics[page=1]{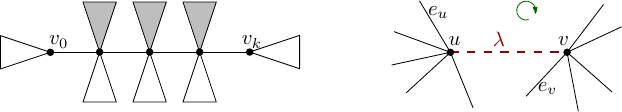}
     \caption{Left: edges in the shadow subtrees are to the left of the path from $v_0$ to $v_k$.
        Right: operation $\Link(u,v,e_u,e_v,\lambda_{uv})$ for 
		edge-weighted embedded forest.}
	\label{fig:trees1}
\end{figure}

Similarly to the previous scenario, each edge has a boolean flag that tells 
whether it is \emph{marked} or not; 
for some queries, only marked edges are taken into account. 
The data structure for edge-weighted embedded forests supports 
the following operations. 
\begin{itemize}
	\item $\Create()$: Makes a new tree with a single vertex.
	\item $\Cut(e)$: Removes the edge $e$ from the tree that contains 
        it. 
	\item $\Link(u,v,e_u,e_v,\lambda,b)$: Adds the edge $uv$ 
		with value $\lambda$; it is marked or not depending on the boolean $b$.
		The edge $uv$ is inserted such that it is clockwise after $e_u$ at $u$ 
		and clockwise after $e_v$ at $v$; see \cref{fig:trees1}, right.
        If $u$ (resp.~$v$) had degree $0$ before
        the edge insertion, the value $e_u$ (resp.~$e_v$) is empty.
	\item $\GetEdgeValue(e)$: Returns the value of edge $e$.
	\item $\AddLeftPath(\Delta,u,v)$: Adds the value $\Delta$ to each edge
		to the left of the path from $u$ to $v$; 
		recall \cref{fig:trees1}, left.
	\item $\MaxTree(u)$: Returns the  
        maximum value over the \textit{marked edges} in the tree that contains $u$.
\end{itemize}

Note that $\AddLeftPath(\Delta,u,v)$ and $\AddLeftPath(\Delta,v,u)$ give
different results. 
Calling $\AddLeftPath(\Delta,v,u)$ adds $\Delta$ to the edges to the {\em right} of the path from $u$ to $v$.

We have not been able to trace in the literature a data structure 
with an operation like $\AddLeftPath$. 
There are at least two data structures that can handle these operations in $O(\log n)$
amortized time per operation.
The first option is to use self-adjusting top trees~\cite{TarjanW05}, 
which were explicitly designed to keep track of the cyclic order of the edges
around each vertex. 
The second option is to use an adaptation of top trees~\cite{AlstrupHLT05} 
to handle embeddings, as used by Holm and Rotenberg~\cite{HolmR17} (without weights).

\begin{lemma}\label{lem:embeddedtree}
	An edge-weighted embedded forest with marked edges and 
    operations $\Create$, $\Cut$, $\Link$, $\GetEdgeValue$,
	$\AddLeftPath$ and $\MaxTree$ can be maintained in $O(\log n)$  amortized time per operation,
	where $n$ is the number of vertices in the data structure.
\end{lemma}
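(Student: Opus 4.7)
The plan is to build on standard top trees~\cite{AlstrupHLT05} augmented with the embedding-tracking machinery of Holm and Rotenberg~\cite{HolmR17}, and then add lazy propagation to handle the non-standard $\AddLeftPath$ operation. All operations except $\AddLeftPath$ are close to textbook: a top tree decomposes each tree in the forest into a hierarchy of clusters, each with at most two boundary vertices, and we store at each cluster the maximum edge weight inside it, so that $\MaxTree$ reduces to reading the root's augmented value and $\GetEdgeValue$ reduces to splitting down to a leaf cluster. $\Link$ can consult the parameters $e_u,e_v$ to splice the new edge into the correct rotational slot around $u$ and $v$, because the embedding layer tracks the cyclic order of edges around every vertex.

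The interesting ingredient is $\AddLeftPath(\Delta, u, v)$. First I would call the standard top-tree expose routine on $u$ and $v$, which reorganizes the top tree so that its root cluster $R$ is a path cluster whose cluster path $\pi_R$ is precisely the $u$-to-$v$ path, with all off-path subtrees appearing as subclusters attached to $\pi_R$. Because the embedding is tracked, each attached subcluster can be classified, at the moment it is created, as hanging to the left or to the right of $\pi_R$. I then record a lazy additive tag $\lambda_R \mathrel{+}= \Delta$ at $R$ meaning ``add $\Delta$ to every edge in every left-hanging subcluster''; the augmented root max is updated by separately maintaining maxima over left-hanging subclusters, right-hanging subclusters, and $\pi_R$ itself.

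The remaining work is to push lazy tags down correctly whenever the top tree splits or merges a cluster during subsequent operations. Each cluster $C$ will carry three lazy increments: a global $\delta_C$ added to every edge in $C$, a left tag $\lambda_C$ added to edges strictly left of $\pi_C$, and a symmetric $\rho_C$ to the right. When $C$ is broken into its two subclusters, $\lambda_C$ and $\rho_C$ are redistributed using the local embedding at their shared boundary vertex: a subcluster lying entirely on one side inherits the corresponding left/right tag, while a subcluster that itself contains part of $\pi_C$ inherits a left or right tag depending on the rotational relationship between its cluster path and $\pi_C$. Since this redistribution is $O(1)$ per split, and a top-tree operation performs $O(\log n)$ amortised splits and merges, the total cost per operation is $O(\log n)$.

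The main obstacle, and the reason the data structure is not immediate from the literature, is verifying that the left/right labelling stays consistent under every way two path clusters can be merged at a shared boundary vertex. This is precisely the information that the embedded top trees of Holm and Rotenberg are designed to expose, so once one adopts their cluster representation the argument becomes mechanical; still, the case analysis has to be checked carefully for merges where the two cluster paths turn through the shared vertex in different rotational directions, and for the boundary case where an endpoint of the exposed path lies interior to a cluster already carrying nonzero $\lambda$ or $\rho$ tags.
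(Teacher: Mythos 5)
Your overall plan coincides with the paper's: top trees with lazy additive tags split by side of the cluster path, per-cluster orientation tracking, and \AddLeftPath{} implemented by exposing the $u$-to-$v$ path and tagging the root cluster. However, the three-tag scheme you propose (global $\delta_C$, left $\lambda_C$, right $\rho_C$) is not a maintainable invariant, and the failure occurs exactly in the ``mechanical'' case analysis you defer. When a path cluster $C$ is split into two path clusters $A$ and $B$ meeting at a shared boundary vertex $w$, an off-path edge $e$ incident to $w$ lies to the left or to the right of $\pi(C)$ (since $w$ is an \emph{interior} vertex of $\pi(C)$) and so the subtree through $e$ must receive the increment $\lambda_C$ or $\rho_C$. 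But $e$ is incident to an \emph{endpoint} of both $\pi(A)$ and $\pi(B)$, hence it is neither left nor right of either child's cluster path, and none of your three tags on the children can record that increment. The paper resolves this by maintaining, for each path cluster and each of its two boundary vertices, a separate ``hanging part'' maximum and lazy tag --- i.e., a five-way partition of the cluster's edges (path, left of the path, right of the path, hanging at each boundary vertex) rather than your three-way one, with corresponding transfer rules between the parent's left/right tags and the children's hanging tags at split, and back at join.

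A second omission: the paper first ternarizes the forest (replacing each vertex of degree at least $4$ by a chain of degree-$3$ vertices, with the new edges given weight $-\infty$), so that at the shared boundary vertex $w$ there is at most one off-path edge to classify and it sits in the hanging part of exactly one child. Without bounded degree there may be several such edges at $w$, some left and some right of $\pi(C)$, and they cannot all be absorbed into a single hanging tag per child in $O(1)$ time. Neither fix is deep, but both are needed before the $O(1)$-per-split/join accounting, and hence the $O(\log n)$ amortised bound, actually goes through.
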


Since the proof of \cref{lem:embeddedtree} is long, technical, and does not relate to the 
main thread of our work, we have placed it in \cref{app:proof-data-structure}.

\subsection{Multiple-source shortest paths}
\label{sec:mssp}
We build upon previous   algorithms for multiple-source shortest-paths 
for planar graphs%
~\cite{CabelloCE13,DasKGW22,EricksonFL18,Klein05}.
In this setting, we have a fixed face $f$ in a plane graph, 
and want to encode all shortest-path trees
from {\em each} vertex incident to $f$.
For our problem in a continuous setting, it seems most convenient to follow the paradigm of 
Cabello, Chambers and Erickson~\cite{CabelloCE13}, because it is already
based on sliding the source continuously along an edge.

Let $s$ (for \emph{source}) be a vertex  
on an edge $uv$ of $f$ and let $T_s$ 
be a shortest-path tree rooted at $s$ to all vertices of $G$. 
We want to maintain $T_s$ as $s$ slides 
continuously along $uv$, say from $u$ to $v$.
In general, $T_s$ consists of the edges $su$ and $sv$, 
a subtree $T_u$ rooted at $u$, which we call the \emph{red subtree}, and 
a subtree $T_v$ rooted at $v$, which we call the \emph{blue subtree}.
It is possible that the edge $su$ does not belong to $T_s$, and in this case the red subtree is empty; similarly, if the edge $sv$ does not belong to $T_s$, the blue subtree is empty.
We further classify the edges of $G$ into three groups: 
the \emph{red edges} have both endpoints in the red tree $T_u$,
the \emph{blue edges} have both endpoints in the blue tree $T_v$,
and \emph{green edges} have one endpoint in each subtree. See \cref{fig:red-blue}.

\begin{figure}
\centering
	\includegraphics[page=1,width=\textwidth]{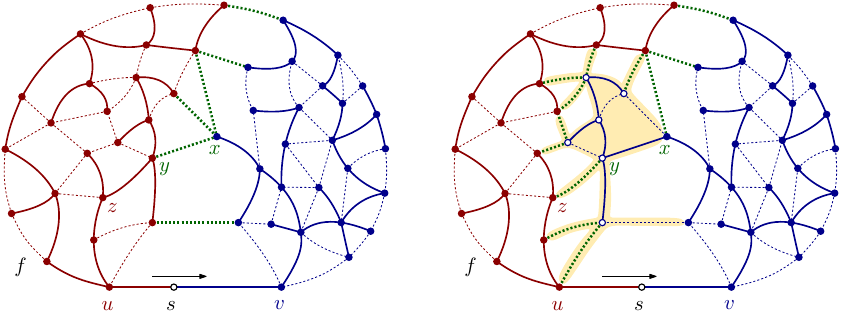}
     \caption{The red subtree, the blue subtree, and the green edges.
		Edges that do not belong to $T_s$ are dotted.
		From left to right we are making a pivot operation in $T_s$ 
		where $xy$ pivots in and $zy$ pivot out. The red
		vertices that become blue with the pivot operation are marked
		with white interior. The yellow region tells where changes occur.}
	\label{fig:red-blue}
\end{figure}

In a generic situation, as we slide $s$ towards $v$ by a {\em sufficiently small}
length $\lambda>0$, we increase the distance from $s$ to each red vertex by
$\lambda$, and decrease the distance from $s$ to each blue vertex by $\lambda$.
During the sliding of $s$, at some critical points the red and blue trees 
change and have to be updated. 
More precisely, consider a green edge $xy$ such that $x$ is blue and $y$ is red.
If at some moment the source $s$ reaches a point such that 
$d_{T_s}(s,v)+d_{T_s}(v,x)+\ell(xy)=d_{T_s}(s,u)+d_{T_s}(u,y)$,
then the edge $xy$ enters the blue tree $T_v$, and the edge connecting $y$ to 
its parent $z$ in the red tree $T_u$ has to be removed. With this operation,
the subtree of $T_u$ rooted at $y$ becomes blue, and the color of some edges
changes.
This is called a \emph{pivot operation}, where the edge $xy$ pivots in $T_s$ and 
the edge $zy$ pivots out. See \cref{fig:red-blue}.
(In non-generic cases, one has to be more careful about when an edge pivots in.)

So far, we have not used planarity in any way, and the source $s$ can slide 
along any edge. If we  restrict ourselves to {\em plane graphs} and
the edge lengths are \textit{generic} (meaning that all vertex-to-vertex shortest paths are unique) 
Cabello, Chambers and Erickson~\cite{CabelloCE13} show that
each edge pivots in and out $O(1)$ times, and at each moment there is only
one candidate pivot to perform. They also show how to remove the generic edge weight assumption by applying a randomized perturbation
scheme that works with high probability~\cite{CabelloCE13}.
Alternatively, 
Erickson, Fox and Lkhamsuren~\cite{EricksonFL18} present a {\em deterministic} 
perturbation scheme that guarantees genericity and has no asymptotic overhead. 
Note that the perturbations are made only to detect
in which order the pivot operations have to be performed, but the operations
can then be performed in the unperturbed, original setting.

To describe the places where changes occur when $s$ slides along $\GG_f$, we need a cyclic order of the points of $\GG_f$. 
For two points $s_0,s_i$ along $\GG_f$, 
the \emph{$f$-interval} $[s_0,s_i]_f$ is the set of points $\tilde s\in \GG_f$  such that the counterclockwise traversal of $f$ starting from $s_0$ passes through $\tilde s$ before passing $s_i$. 
See \cref{fig:f-intervals}, also for easier parsing of \cref{thm:pivoting}.

\begin{figure}
\centering
	\includegraphics[page=3]{planar}
     \caption{Left: example of $f$-interval.
		Right: example showing the cyclic order of $s_0,s_1,\dots,s_k$ 
		in \cref{thm:pivoting}. Note that, in general, $s_i$ 
		is not a vertex, but a point on $\GG_f$.
		For all sources $s$ in the $f$-interval $[s_i,s_{i+1}]_f$, 
		$T_s$ is the tree $T_i$.}
	\label{fig:f-intervals}
\end{figure}

\begin{theorem}[Cabello, Chambers and Erickson~\cite{CabelloCE13} 
combined with Erickson, Fox and Lkhamsuren~\cite{EricksonFL18}]
\label{thm:pivoting}
	Let $G=(V,E)$ be a plane graph with $n$ vertices and non-negative edge lengths.
	Let $f$ be any fixed face of $G$ such that the facial walk of $f$ is a cycle, 
	let $s_0$ be any vertex on the boundary of $f$ 
	and let $T_0$ be the shortest-path tree rooted at $s_0$.
	A sequence 
	of triples $(s_1,e_1,e'_1),\dots,(s_k,e_k,e'_k)\in \GG_f\times E\times E$ can be computed in $O(n\log n)$ time such that:
	\begin{itemize}
	\item $k=O(n)$;
	\item $s_i\in [s_0,s_{i+1}]$ for all $i\in [k-1]$;
	\item $T_0$ is a shortest-path tree for all sources $s\in [s_0,s_1]_f$;
	\item the tree $T_i$ (for $i\in [k]$),
		obtained inductively from $T_{i-1}$ by pivoting $e_i$ in and pivoting $e'_i$ out,
		is a shortest-path tree for all sources $s\in [s_i,s_{i+1}]_f$ \textup(where $s_{k+1}=s_0$\textup).
	\end{itemize}
\end{theorem}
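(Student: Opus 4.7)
The plan is to follow the paradigm of Cabello, Chambers and Erickson, maintaining $T_s$ as the source $s$ slides counterclockwise along the boundary cycle of $f$, starting at $s_0$ and returning to $s_0$. First I would compute $T_0$ in $O(n \log n)$ by Dijkstra, and classify every edge as red, blue, or green according to the partition of vertices into the subtrees $T_u$ and $T_v$ rooted at the endpoints $u,v$ of the edge currently containing $s$. Throughout the sweep I maintain the red and blue subtrees in a dynamic-tree data structure (e.g.\ link-cut trees or the vertex-weighted forest of \cref{sec:dynamicforests}) whose vertex labels encode $d_{T_s}(s,\cdot)$. When $s$ slides by $\lambda$ towards $v$, the distances to all red vertices increase by $\lambda$ and those to all blue vertices decrease by $\lambda$; both updates are single $\AddTree$ calls.

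Next, for each green edge $xy$ with $x$ blue and $y$ red, the critical position of $s$ at which $xy$ wants to pivot in is determined by the equation
\[
    d_{T_s}(s,v)+d_{T_s}(v,x)+\ell(xy)=d_{T_s}(s,u)+d_{T_s}(u,y),
\]
which, after accounting for the linear change of distances with $s$, yields a unique candidate time. I would store these candidate times in a priority queue keyed by the position of $s$ along $\GG_f$, and repeatedly extract the earliest event. Each pivot is of the form ``edge $e_i=xy$ pivots in, edge $e'_i=zy$ pivots out,'' together with the position $s_i$ of $s$ at which it happens; the subtree of the red tree rooted at $y$ recolours to blue, which is implemented by $\Cut$ of $zy$ and $\Link$ of $xy$, followed by a bulk $\AddTree$ update encoding the reversal of the monotonic drift on that subtree. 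Edges newly adjacent to the recoloured vertices need their pivot times recomputed and reinserted into the priority queue; crucially, planarity ensures that the amortised number of such updates per pivot is $O(1)$.

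The two main obstacles are degeneracies and the bound on the number of events. For degeneracies I would invoke the deterministic perturbation scheme of Erickson, Fox and Lkhamsuren to guarantee that at every critical moment there is a unique pivot to perform, without asymptotic overhead; the pivots are then executed on the unperturbed weights. The crucial bound $k=O(n)$ comes from the planarity argument of Cabello, Chambers and Erickson, which shows that each edge of $G$ pivots in and out only $O(1)$ times in total as $s$ makes a full counterclockwise tour of $\GG_f$. Combined with $O(\log n)$ amortised time per dynamic-tree operation and per priority-queue operation, and the fact that every event triggers $O(1)$ such operations, the overall running time is $O(n \log n)$.

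Finally, it remains to verify the output format: the triples $(s_i,e_i,e'_i)$ are exactly the records produced at the $i$-th extracted event; $s_i\in [s_0,s_{i+1}]_f$ follows because events are processed in counterclockwise order along $\GG_f$; and the tree $T_i$ obtained from $T_{i-1}$ by pivoting $e_i$ in and $e'_i$ out coincides with $T_s$ for every $s$ in the open $f$-interval between two consecutive events, because in that range no further critical equation is tight and the tree combinatorics remain fixed while only the label drifts linearly.
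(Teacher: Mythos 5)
This theorem is imported from the cited works: the paper gives no proof beyond the informal red/blue/green sliding-source description that precedes the statement, and your sketch follows that same paradigm, correctly deferring the two genuinely hard ingredients --- the bound $k=O(n)$ (each edge pivots in and out $O(1)$ times) and the degeneracy handling --- to Cabello--Chambers--Erickson and Erickson--Fox--Lkhamsuren. To that extent your outline is consistent with what the paper relies on.

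The one place where you supply your own machinery, the event queue, contains a genuine gap. You propose an explicit priority queue over green edges, with pivot times recomputed and reinserted for ``edges newly adjacent to the recoloured vertices,'' and you assert that planarity makes the amortised number of such updates per pivot $O(1)$. That assertion is unjustified and, taken literally, false: a single pivot recolours an entire subtree of the red tree, and every edge with exactly one endpoint in that subtree changes its red/blue/green status, so the green set can change by $\Theta(n)$ edges in one event. Moreover, summed over a full tour of $\GG_f$, each vertex flips from red to blue at most once per slide along a boundary edge but the colouring effectively resets each time $s$ crosses a vertex of $f$, so the total number of explicit status changes can be $\Theta(n\cdot|\partial f|)=\Theta(n^2)$ even though the number of pivots is only $O(n)$. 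An explicit per-edge queue therefore does not yield $O(n\log n)$. The cited algorithm avoids this by never materialising the colours: the green edges are exactly those whose duals lie on the $a$-to-$b$ path in the cotree $C^*_s$ of the tree--cotree decomposition, the next pivot is found by a path-minimum query on slacks, and the bulk drift of slacks under sliding is applied by path/subtree additive updates on a dynamic tree in $O(\log n)$ per event --- precisely the role of the $\AddLeftPath$-style operations the paper develops for its own data structure. With that replacement (or an equivalent implicit representation of the green set), your outline matches the cited proof.
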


\subsection{Eccentricity from a face}
\label{sec:eccentricity-plane}

We keep assuming that the boundary of $f$ is a cycle.
Consider any source $s$ on $\GG_f$ and let $T_s$ be the shortest-path tree from $s$. 
Let $C_s=E(G)\setminus E(T_s)$, that is, the set of edges 
not in the shortest-path tree $T_s$.
For each edge $xy\in C_s$, there is one point $m_s(xy)$ on edge $xy\subset \GG$
with two shortest paths from $s$, one through $x$ and one through $y$.
See \cref{fig:slides}.
For each edge $xy\in C_s$, we define the weight $\omega_s(xy)$ as
the distance from $s$ to $m_s(xy)$, thus
$\omega_s(xy) = d_{\GG}(s,m_s(xy)) = (d_{T_s}(s,x)+ \ell(xy) + d_{T_s}(s,y))/2$.

Let $H$ be a subgraph of $G$ defining the continuous subgraph $\HH$
of $\GG$. 
Recall that the eccentricity of point $p\in \GG$ with respect to $\HH$ is 
$\ecc(p,\HH,\GG)=\max_{q \in \HH} d_{\GG}(p,q)$.
We have the following characterization.

\begin{figure}
	\centering
		\includegraphics[page=4,width=\textwidth]{planar}
		 \caption{The vertices $m_s(e)$ for $e\in C_s$ and how they move.}
		\label{fig:slides}
\end{figure}
	
\begin{lemma}
\label{lem:ecc-onestep}
	For each $s\in \GG_f$, the eccentricity of $s$ in $\GG$ is 
	\[
		\ecc(s,\HH,\GG) ~=~ \max\left\{ \max_{x\in V(H)} d_{T_s}(s,x) ,~ 
			\max_{xy\in C_s\cap E(H)} \omega_s(xy) \right\}.
	\]
\end{lemma}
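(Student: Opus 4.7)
The plan is to decompose the eccentricity as a maximum over edges and analyse each edge separately:
\[
   \ecc(s,\GG) ~=~ \max_{xy\in E(G)} ~\max_{p\in \GG(xy)} d_{\GG}(s,p),
\]
distinguishing between tree edges $xy \in E(T_s)$ and non-tree edges $xy \in C_s$. For any point $p$ lying on edge $xy$ at distance $\lambda\in[0,\ell(xy)]$ from $x$, the distance from $s$ satisfies
\[
   d_{\GG}(s,p) ~=~ \min\bigl\{\, d_{\GG}(s,x)+\lambda,~ d_{\GG}(s,y)+\ell(xy)-\lambda \,\bigr\},
\]
since any shortest $sp$-walk must enter the edge through one of its endpoints.

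For a tree edge $xy\in E(T_s)$, assume without loss of generality that $y$ is the child of $x$ in $T_s$, so $d_{T_s}(s,y)=d_{T_s}(s,x)+\ell(xy)$. Substituting this into the formula above, the second term becomes $d_{T_s}(s,x)+2\ell(xy)-\lambda$, which dominates $d_{T_s}(s,x)+\lambda$ for every $\lambda\in[0,\ell(xy)]$. Hence $d_{\GG}(s,p)=d_{T_s}(s,x)+\lambda$, which is monotonically increasing in $\lambda$ and is maximised at $\lambda=\ell(xy)$, giving value $d_{T_s}(s,y)$. Thus $\max_{p\in\GG(xy)}d_{\GG}(s,p)$ is achieved at a vertex endpoint, and taking the maximum over all tree edges yields $\max_{v\in V}d_{T_s}(s,v)$. (If $s$ itself is an interior point of some edge $uv$ of $f$, we regard $s$ as a subdivision vertex; the two resulting half-edges are tree edges of $T_s$ and the same argument applies.)

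For a non-tree edge $xy\in C_s$, the function
\[
   \lambda ~\longmapsto~ \min\bigl\{\, d_{T_s}(s,x)+\lambda,~ d_{T_s}(s,y)+\ell(xy)-\lambda \,\bigr\}
\]
is a ``tent'': the first argument is strictly increasing in $\lambda$ and the second strictly decreasing, so the minimum is maximised at the unique $\lambda^{*}$ where the two arguments coincide. This is precisely the point $m_s(xy)$, and the common value is
\[
   \omega_s(xy) ~=~ \tfrac{1}{2}\bigl(\,d_{T_s}(s,x)+\ell(xy)+d_{T_s}(s,y)\,\bigr).
\]
Existence of such a $\lambda^{*}\in[0,\ell(xy)]$ follows from $xy\notin E(T_s)$: if either $d_{T_s}(s,y)\ge d_{T_s}(s,x)+\ell(xy)$ or $d_{T_s}(s,x)\ge d_{T_s}(s,y)+\ell(xy)$, the edge $xy$ would provide a shorter path and $T_s$ would not be a shortest-path tree, so $|d_{T_s}(s,x)-d_{T_s}(s,y)|\le\ell(xy)$ and the crossing point lies in the interval.

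Combining the two cases,
\[
   \ecc(s,\GG) ~=~ \max\Bigl\{\, \max_{xy\in E(T_s)}\max_{p\in\GG(xy)}d_{\GG}(s,p),~ \max_{xy\in C_s}\max_{p\in\GG(xy)}d_{\GG}(s,p)\,\Bigr\} ~=~ \max\Bigl\{\,\max_{x\in V}d_{T_s}(s,x),~\max_{xy\in C_s}\omega_s(xy)\,\Bigr\},
\]
as claimed. The reasoning is entirely local and elementary, so no genuine obstacle arises; the only point deserving care is justifying that $m_s(xy)$ actually lies in the interior of the edge, which follows from the shortest-path-tree property of $T_s$ applied to the non-tree edge $xy$.
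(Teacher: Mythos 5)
Your proof is correct and follows essentially the same route as the paper's: both reduce the eccentricity to a maximum over edges, observe that on a tree edge the farthest point is an endpoint, and that on a non-tree edge $xy$ the farthest point is $m_s(xy)$ with value $\omega_s(xy)$. Your version is somewhat more explicit (computing the tent function and verifying via the shortest-path-tree property that its peak lies within the edge), which the paper leaves implicit, but the underlying argument is the same.
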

\begin{proof}
	On the right hand side we only have distances from $s$ to points
    on $\HH$, and therefore	the eccentricity of $s$ 
    with respect to $\HH$ is at least as large as the right hand side.
	To see the other direction, consider the point $p^*\in \HH$ such
	that $d_\GG(s,p^*)=\ecc(s,\HH,\GG)$.
	It cannot be that $p^*$ lies in the interior of an edge $e$ 
    of $T_s\cap H$
	because at least one of the endpoints of $e$ is further from $s$.
	Therefore, $p^*$ lies on a vertex of $H$ or in the interior of an
	edge of $E(H)\setminus E(T_s)= C_s\cap E(H)$.
	If $p^*$ is a vertex of $H$, then $d_\GG(s,p^*)=d_{T_s}(s,p^*)$
	is considered on the right side.
	If $p^*$ lies on the interior of the edge $xy$ of $C_s\cap E(H)$,
	then $d_\GG(s,p^*)\le d_{\GG}(s,m_s(xy))= \omega_s(xy)$,
	which also appears on the right side.
\end{proof}

Consider the dual {\em plane} graph $G^*$ of $G$. Thus, $G^*$ is
another embedded (multi-)graph where vertices correspond to faces of $G$, 
and vice versa. Two vertices of $G^*$ are connected by an edge
for each edge that is shared by their corresponding faces of $G$; 
therefore, every edge $e$ in $G$ has a corresponding 
dual edge $e^*$ in $G^*$. Similarly, every vertex $v$ of $G$
corresponds to a face $v^*$ of $G^*$, and every face $f$ of $G$ 
corresponds to a vertex $f^*$ of $G^*$. Moreover, each dual 
edge $e^*$ is embedded so that it crosses the corresponding primal edge $e$ exactly once, and intersects no other primal edge. This fixes the rotation of the
dual edges incident to a vertex $f^*$ of $G^*$. 

For the set $C_s$, define $C^*_s=\{e^*\mid e\in C_s \}$.
An important insight noted in previous work is that $C^*_s$ defines a spanning
tree of the dual graph $G^*$. See \cref{fig:dual}.
The pair $(E(T_s),C_s)$ forms a so-called
\emph{tree-cotree decomposition} of $G$~\cite{Eppstein03}.
This means that we can use a data structure to store 
and manipulate $C_s$ as an {\em embedded} tree in the dual. 
We add to $C_s^*$ two artificial vertices as follows: 
let $g$ be the other face incident to the edge $uv$ that contains $s$;
we add a vertex $a$ and the edge $af^*$ such that, cyclically around $f^*$, 
its order is adjacent to $f^*g^*$; we add a vertex $b$ and
the edge $bg^*$ such that, cyclically around $g^*$,
its order is that of $f^*g^*$.
Note that an edge $e$ of $C_s$ is green if and only if $e^*$ 
lies in the $a$-to-$b$ path, it is red if and only if $e^*$ lies in a subtree to the right of the $a$-to-$b$ path, and it is blue 
if and only if $e^*$ lies in a subtree to the left of the $a$-to-$b$ path in $C^*_s$.

\begin{figure}
\centering
	\includegraphics[page=2, width=\textwidth]{planar}
     \caption{The dual tree $C_s^*$ for the examples in  \cref{fig:red-blue}. The vertices $a$ and $b$ are added artificially as segments of the dual edge $f^*g^*$.}
	\label{fig:dual}
\end{figure}

\begin{theorem}
    \label{thm:oneface}
    Let $\GG$ be the continuous graph defined by a plane graph $G$ 
    with $n$ vertices and nonnegative edge-lengths.
    Let $f$ be a face of $G$ and let $\GG_f$ be the set of points of $\GG$ on the boundary of $f$.
	Let $H$ be a subgraph of $G$ and let $\HH\subseteq \GG$ be the corresponding continuous subgraph.
    The maximum eccentricity $\ecc(s,\HH,\GG)$ over all points 
    $s$ in $\GG_f\cap \HH$ can be computed in $O(n\log n)$ time.
\end{theorem}
\begin{proof}
	We change the embedding of $G$, if needed, to assume that $f$
	is the outer face of $G$. 
	As mentioned before, we may assume that the boundary of $f$ is a cycle.
	We traverse the boundary of the face $f$ counterclockwise.
    We spend $O(n)$ time to make these changes, if needed.
     
	We compute and store the embedded dual graph $G^*$ of $G$.
	We maintain pointers between the edges in the dual and the primal.
	Since the embedding is going to be static,
	for each vertex of $G$ or $G^*$, we store explicitly 
	the cyclic order of the edges incident to it and rank them so
	that, given a sequence of three edges incident to the same vertex, we can
	identify in $O(1)$ time whether they are clockwise or counterclockwise. 
    Again, this takes linear time.
	
	For a fixed vertex $s_0$ from $f$, we compute the shortest-path tree $T_0=T_{s_0}$
	from $s_0$ to the vertices of $G$. 
	Computing the sequence $(s_1,e_1,e'_1),\dots,(s_k,e_k,e'_k)\in \GG_f\times E\times E$
	of \cref{thm:pivoting} takes $O(n \log n)$ time.
	
	For $i\in [k]$, we write $d_i$, $T_i$, $C_i$, $C_i^*$, $m_i$ and $\omega_i$
	instead of $d_{s_i}$, $T_{s_i}$, $C_{s_i}$, $C_{s_i}^*$, $m_{s_i}$ and $\omega_{s_i}$.

	We store $T_0$ using the data structure for vertex-weighted graphs of \cref{sec:dynamicforests} where,
	for each vertex $u$ of $G$, the associated value is $d_{T_0}(s_0,u)$
    and the marked vertices are those of $V(H)$.
	We construct the embedded dual tree $C^*_0$ and store it using 
	the edge-weighted embedded forest of \cref{lem:embeddedtree},
	where, for all $xy\in C_0$, the value associated with the edge $(xy)^*$ is 
	$\omega_{0}(xy)$. Here, the marked edges are those dual to $C_0\cap E(H)$. 
    This initialization of $T_0$ and $C^*_0$ takes $O(n\log n)$ time.
	
	Now, we explain how to construct $T_i$ and $C_i^*$ for any $i\in [k]$, 
    given a representation of $T_{i-1}$,
	where each vertex $x$ of $G$ has label $d_{i-1}(s_{i-1},x)$,
	and a representation of the embedded dual tree $C_{i-1}^*$ where 
	each edge $xy\in C_{i-1}$ has value $\omega_{i-1}(xy)$. 
	Let $\lambda_i$ be the length of the $f$-interval 
	$[s_{i-1},s_i]_f$. As we move the source from $s_{i-1}$ to $s_i$
    along an edge of $f$, 
    we note the following (see \cref{fig:red-blue,fig:slides}):
    
	\begin{itemize}
		\item For each vertex in the red subtree of $T_{i-1}$, the distance in $T_i$ increases 
            by $\lambda_i$.  For each vertex in the blue subtree of $T_{i-1}$, its 
            distance in $T_i$ decreases by $\lambda_i$.
		\item For each green edge $xy$, the distance from $s_{i-1}$ to $m_{i-1}(xy)$ 
            is the same as the distance from $s_{i}$ to $m_i(xy)$.
			Thus, $\omega_i(xy)=\omega_{i-1}(xy)$ for green edges $xy$.
		\item For each red edge $xy$ of $C_{i-1}$, the point $m_{i-1}(xy)$ is equal to $m_i(xy)$,
			but the distance from $s_{i}$ to $m_i(xy)$ increases by $\lambda_i$ 
            with respect to the distance from $s_{i-1}$ to $m_{i-1}(xy)$.
		\item For each blue edge $xy$ of $C_{i-1}$, the point $m_{i-1}(xy)$ is equal to $m_i(xy)$,
			but the distance from $s_{i}$ to $m_i(xy)$ decreases by $\lambda_i$ with respect to the distance from $s_{i-1}$ to $m_{i-1}(xy)$.
	\end{itemize}
	Therefore, we obtain $T_i$ from $T_{i-1}$ performing the following
	operations in the dynamic tree: if $s_{i-1}u$ belongs to $T_{i-1}$,
	we perform $\Cut(s_{i-1}u)$, $\AddTree(\lambda_i,u)$ and $\Link(s_{i-1},u)$; 
	if $s_{i-1}v$ belongs to $T_{i-1}$, we perform 
	$\Cut(s_{i-1}v)$, $\AddTree(-\lambda_i,v)$ and $\Link(s_{i-1},v)$;
	then we perform the pivot operation of cutting $e_i$ and linking $e'_i$.
    The set of marked vertices ($V(H)$) remains unchanged.
    
	To obtain $C^*_i$ from $C^*_{i-1}$, we perform the following
	operations in the dynamic tree storing $C^*_{i-1}$.
	Since the blue edges of $C_s$ are precisely those to the left of
	the path from $a$ to $b$, we perform $\AddLeftPath(-\lambda_i,a,b)$.
	Since the red edges of $C_s$ are precisely those to the right of
	the path from $a$ to $b$, we perform $\AddLeftPath(\lambda_i,b,a)$.
	The value of the green edges did not change.
	Now, we have to perform the pivot operation of deleting $(e'_i)^*$ 
	and inserting $e^*_i=(xy)^*$ in $C_i$; note the reverse order
	of operations with respect to the primal tree.
	For this, we $\Cut(e'_i)$, use the data structure for the primal $T_i$
    to obtain $\GetVertexValue(x)=d_i(s_i,x)$ and $\GetVertexValue(y)=d_i(s_i,y)$,
	and compute $\omega_i(xy)$.
	Using the precomputed circular order of $(xy)^*$, we identify
	where to insert the dual edge $(xy)^*$, and call $\Link$ to insert it.
    If $xy\in E(H)$, then the dual edge $(xy)^*$ is inserted as marked,
    and otherwise it is inserted as unmarked.
	
	This finishes the description of how to obtain $T_i$ and $C^*_i$
	from $T_{i-1}$ and $C^*_{i-1}$. At the end of each iteration,
	we can compute $\max_{x\in V(H)} d_{T_i}(s_i,x)$ by performing
	$\MaxTree(s_i)$ in the data structure storing $T_i$,
	and we can compute $\max_{e\in C_i\cap E(H)} \omega_i(e)$ using 
	$\MaxTree(f^*)$ in the data structure storing $C^*_i$.
	From this, due to \cref{lem:ecc-onestep}, we obtain $\ecc(s_i,\HH,\GG)$.

    When $s$ slides over a vertex of the face, we also need to 
    make a small update and the concept of red and blue has
    to be updated; we skip the easy details.
 
	At each $i\in [k]$, we spend $O(\log n)$ amortized time for $O(1)$
	operations in dynamic forests and $O(1)$ additional work. 
	Thus, the eccentricity of each $s_i$ with respect to $\HH$ 
    is computed in $O(n\log n)$ time. Finally, note that the eccentricity 
    has to be attained either at $s_0$ or at some point where the shortest-path 
    tree changes. Therefore 
    \[
        \max_{s\in \GG_f\cap \HH}\ecc(s,\HH,\GG)=
        \max\{ \ecc(s_i,\HH,\GG)\mid i\in [k] \text{ with } s_i\in \HH\},
    \]
    and the result follows.
\end{proof}

Applying \cref{thm:oneface} to each face of an embedding of an $n$-vertex graph $\GG$ 
with $F$ faces, we can compute the diameter $\diam(\HH,\GG)$ in $O(nF\log n)$ time 
(compare to \cref{thm:planar}).

\subsection{Mean distance from a face}
\label{sec:mean-plane}

The approach used to compute the diameter of planar graphs can be easily adapted to 
compute $\int_{s\in \GG_f\cap \HH} \int_{q\in \HH} d_{\GG}(s,q) \,dq \,ds$,
as follows.

As we move the source $s$ along $\GG_f$, with each edge $e$ we maintain 
the value 
$\nu(s,e)=\int_{q\in e} d_\GG(s,q) \, dq$.
When we slide $s$ by $\lambda>0$ \textit{without pivoting} into a new source $s'$,
the new value $\nu(s',e)$ satisfies
\[
   \nu(s',e) = \begin{cases}
                \nu(s,e)+\lambda\cdot \ell(e)&\text{ if $e$ is red,}\\
                \nu(s,e)-\lambda\cdot \ell(e)&\text{ if $e$ is blue,}\\
                \nu(s,e)                &\text{ if $e$ is green.}
                \end{cases}
\]

Consider the situation where we slide the source $s$ along the $f$-interval 
$I_i:= [s_{i-1},s_i]_f$ and let $\lambda_i$ be the length of this interval $I_i$. 
No pivoting occurs during the slide.
Then, for each red edge $e$ of $\GG$ we have 
\[
    \int_{s\in I_i}\int_{q\in e} d_\GG(s,q)  \, dq \, ds ~=~
        \int_{\lambda'\in [0,\lambda_i]} \int_{q\in e} 
        \big( d_\GG(s,q)+\lambda'\big)  \, dq \, d\lambda' ~=~
    \lambda_i \cdot \nu(s_{i-1},e) + \ell(e)\cdot \lambda_i^2/2,
\]
for each blue edge $e$ we have
\[
    \int_{s\in I_i}\int_{q\in e} d_\GG(s,q)  \, dq \, ds ~=~
    \int_{\lambda'\in [0,\lambda_i]} \int_{q\in e} 
        \big( d_\GG(s,q)-\lambda'\big) \, dq \, d\lambda' ~=~
    \lambda_i \cdot \nu(s_{i-1},e) -  \ell(e)\cdot \lambda_i^2/2,
\]
and for each green edge $e$ we have
\[
    \int_{s\in I_i}\int_{q\in e} d_\GG(s,q) \, dq \, ds ~=~
    \int_{\lambda'\in [0,\lambda_i]} \int_{q\in e} 
        d_\GG(s,q) \, dp \, d\lambda' =
    \lambda_i \cdot \nu(s_{i-1},e). 
\]
The edge $uv$ containing $I_i$ has to be treated as a special case,
and has a closed formula depending on $d_\GG(s_i,u)$, $d_\GG(s_i,v)$,
and $\lambda_i$. See \cref{sebsec:mean edges}.

Since each edge of $H-uv$ is either red, blue, or green,
we have
\begin{align*}
    \int_{s\in I_i}\int_{q\in \HH\setminus uv} &d_\GG(s,q)  \, dq \, ds
\\ &\hspace{-.5cm}=~
    \lambda_i \cdot \sum_{e\in E(H)\setminus \{uv\}}\nu(s_{i-1},e) + 
    (\lambda_i^2/2) \cdot \Big(\sum_{e\in E(H)\text{ red}}\ell(e) - 
                            \sum_{e\in E(H) \text{ blue}}\ell(e)\Big).
\end{align*}
It follows that to compute 
\[
    \int_{s\in I_i}\int_{q\in \HH} d_\GG(s,q)  \, dq \, ds
\]
it suffices to compute, for each $s\in \{s_0,\dots, s_k\}$, the values
\begin{equation}\label{eq:mean-planar}
    \sum_{e\in E(H)} \nu(s,e),~~
    \sum_{e\in E(H)\text{ red}}\ell(e),~~ 
    \sum_{e\in E(H) \text{ blue}}\ell(e).
\end{equation}
(Note that the concept of red and blue edges depends on $s$.)

As for the case of eccentricity, these values can be maintained 
using the dynamic data structures of \cref{sec:dynamicforests}.
For this, we use the decomposition of $E(G)$ into $T_s$ and $C_s$.

For the edges of $T_s$, we use a dynamic forest with marked vertices
and vertex weights. A simple option is to subdivide each
edge $e\in E(H)\cap E(T_s)$ with a vertex $v_e$ that is marked,
and keep the other vertices unmarked. To each such a new vertex $v_e$ 
we attach the weights $\nu(s,e)$ and $\ell(e)$. Using cuts and links,
we can restrict our attention to the red or to the blue subtree,
as needed, and then the operation $\SumTree$ can be used to obtain
\begin{align*}
    \sum_{e\in E(H)\cap E(T_s)}\nu(s,e),~~
    \sum_{e\in E(H)\cap E(T_s)\text{ red}}\ell(e),~~ 
    \sum_{e\in E(H)\cap E(T_s)\text{ blue}}\ell(e).
\end{align*}
The values $\ell(e)$ are never updated. The values $\nu(s,e)$
get updated through operations similar to $\AddTree$ in the red or blue
subtree. However, note that to update $\nu(s, e)$ we need to take
into account the multiplicative weights $\ell(e)$ because, for example,
we have updates of the form $\nu(s,e):=\nu(s,e)+\lambda\cdot \ell(e)$
for all red edges $e$. The multiplicative weight $\ell(e)$ associated
to $v_e$ is constant and attached to $v_e$. Usual dynamic
trees can handle this kind of multiplicatively weighted updates
to a tree.

We now turn our attention to $C_s$. We maintain the dual tree $C_s^*$
using the edge-weighted embedded forest of \cref{lem:embeddedtree},
where, for each $e\in C_s$, we maintain with $e^*$ the two values 
$\nu(s,e)$ and $\ell(e)$.
Here, the marked edges are those dual to $C_s\cap E(H)$. 
The values $\ell(e)$ do not get updated.
To update the values $\nu(s,e)$, we need an update operation in the embedded
forest, where we add the weight $\lambda\cdot \ell(e)$ to all the edges $e$ to
one side of a path. We also need an operation to obtain 
the sum of the weights $\nu(s, e)$ over all the marked edges 
$e\in E(H)\cap C_s$, and to obtain the sum of $\ell(e)$ over
all marked edges $E(H)\cap C_s$ on one side of a path. 
Here it is relevant the property that the red (resp.~blue) edges of $C_s$
are those to the right (resp.~left) of a path in $C^*_s$.
The dynamic forest of \cref{lem:embeddedtree} can be extended 
to have these operations, including the $\ell(e)$-weighted
update. With these operations, we can efficiently obtain 
\begin{align*}
    &\sum_{e\in E(H)\cap E(C_s)}\nu(s,e),~~
    \sum_{e\in E(H)\cap E(C_s)\text{ red}}\ell(e),~~ 
    \sum_{e\in E(H)\cap E(C_s)\text{ blue}}\ell(e).
\end{align*}

With this information, we can obtain the values from \eqref{eq:mean-planar}
because
\begin{align*}
    \sum_{e\in E(H)} \nu(s,e) ~&=~ \sum_{e\in E(H)\cap E(T_s)}\nu(s,e) + 
                                \sum_{e\in E(H)\cap E(C_s)}\nu(s,e),\\
    \sum_{e\in E(H)\text{ red}}\ell(e) ~&=~, 
        \sum_{e\in E(H)\cap E(T_s)\text{ red}}\ell(e) +
        \sum_{e\in E(H)\cap E(C_s)\text{ red}}\ell(e),~\text{~ and}\\ 
    \sum_{e\in E(H) \text{ blue}}\ell(e) ~&=~   
        \sum_{e\in E(H)\cap E(T_s)\text{ red}}\ell(e) + 
        \sum_{e\in E(H)\cap E(C_s)\text{ blue}}\ell(e).
\end{align*}

For each $i\in [k]$, we can obtain the values at $s_i$ from those at
$s_{i-1}$ making $O(1)$ operations in the dynamic forests, and therefore we
need $O(\log n)$ amortized time to sweep $s$ along the $f$-interval 
$I_i= [s_{i-1},s_i]_f$. It follows that we can compute
\[
    \int_{s\in I_i}\int_{q\in \HH} d_\GG(s,q)  \, dq \, ds
\]
in $O(\log n)$ amortized time for each $i\in [k]$.
Adding these values for the intervals $I_i$ contained in $\GG_f\cap \HH$, 
we obtain the following.

\begin{theorem}
    Let $G$ be a planar graph with $n$ vertices, 
    nonnegative edge-lengths, and let $\GG$ be the corresponding continuous graph.
    Let $H$ be a subgraph of $G$ and let $\HH\subseteq \GG$
    be the corresponding continuous subgraph.
    Let $f$ be a face in some given embedding of $G$.
    In  $O(n\log n)$ time, we can compute  the value
    \[
        \int_{s\in \HH\cap \GG_f} \int_{q\in \HH} 
            d_\GG(s,q) \, dq \, ds.
    \]
\end{theorem}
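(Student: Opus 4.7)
The approach is to mirror the slide-and-pivot scheme used in Theorem~\ref{thm:oneface}, but replacing the \emph{maximum} aggregates by \emph{sum/integral} aggregates throughout, along the lines sketched in the paragraph preceding the theorem. First I would invoke Theorem~\ref{thm:pivoting} to obtain in $O(n\log n)$ time the pivot sequence $s_0,s_1,\ldots,s_k$ on the boundary of $f$ together with the associated tree edits. For each edge $e\in E(H)$, define $\nu_s(e)=\int_{p\in \GG(e)} d_\GG(s,p)\,dp$; the target quantity decomposes as
\[
    \int_{s\in \HH\cap f}\int_{q\in \HH} d_\GG(s,q)\,dq\,ds
    ~=~ \sum_{i=0}^{k}\, \int_{s\in [s_i,s_{i+1}]_f\cap\HH}\, \Bigl(\sum_{e\in E(H)} \nu_s(e)\Bigr)\,ds.
\]
On each interval of length $\lambda_i$ the combinatorial structure of the red/blue/green classification of $E(G)$ induced by $T_{s_i}$ is frozen, and (both for tree edges and for non-tree edges in $C_{s_i}$) $\nu_s(e)$ varies affinely in the slide parameter: at rate $+\ell(e)$ for red edges, $-\ell(e)$ for blue edges, and $0$ for green edges. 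Hence the contribution of each interval is a degree-$2$ closed-form expression in $\lambda_i$ and the five aggregates $N^{\mathrm{red}}_i, N^{\mathrm{blue}}_i, N^{\mathrm{green}}_i$ (sums of $\nu_{s_i}(e)$ over red, blue, green edges of $H$) and $L^{\mathrm{red}}_i, L^{\mathrm{blue}}_i$ (sums of $\ell(e)$ over red, blue edges of $H$).

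The main task is to maintain these five aggregates across the $k=O(n)$ pivots in $O(\log n)$ amortised time each. For the tree edges (the red/blue part inside $T_{s_i}$), I would subdivide every edge $e$ of $G$ with a dummy vertex carrying the weights $\nu_s(e)\cdot \mathbbm{1}[e\in E(H)]$ and $\ell(e)\cdot \mathbbm{1}[e\in E(H)]$, and maintain two vertex-weighted dynamic forests from Section~\ref{sec:dynamicforests}; the slide updates use $\AddTree(\pm\lambda_i L,\cdot)$ where $L$ is provided by a parallel $\SumTree$ on the edge-length forest, and the pivot is handled by $\Cut/\Link$. The contributions $\SumTree(u)$ and $\SumTree(v)$ yield the tree parts of $N^{\mathrm{red}}_i, N^{\mathrm{blue}}_i$ and $L^{\mathrm{red}}_i, L^{\mathrm{blue}}_i$. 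For the non-tree edges, I would store the dual spanning tree $C^*_{s_i}$ (augmented with the artificial vertices $a,b$ as in the proof of Theorem~\ref{thm:oneface}) using an extension of the embedded-forest structure of Lemma~\ref{lem:embeddedtree}: the green non-tree edges correspond to the $a$-to-$b$ path, and the red and blue non-tree edges are exactly the edges in the subtrees hanging off that path on each side.

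The principal technical obstacle is extending Lemma~\ref{lem:embeddedtree} with the additional $O(\log n)$ operations \textsc{SumPath}$(u,v)$ and \textsc{SumLeftPath}$(u,v)$, returning respectively the sum of edge weights along the $u$-to-$v$ path and the sum of edge weights in subtrees to one side of that path. I would augment each top-tree cluster with counterparts $\mathrm{sumWeightPath}(C)$, $\mathrm{sumWeightLeft}(C)$, $\mathrm{sumWeightRight}(C)$, $\mathrm{sumWeightHanging}(C,v)$ of the $\mathrm{maxWeight}\cdots$ fields already present, together with the integer counts of edges in each of these regions so that the existing $\mathrm{extra}\cdots$ surcharges translate into sum updates of the form $\mathrm{extra}\cdot(\text{count})$. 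The \textsc{Split} and \textsc{Join} subroutines are then modified exactly as in the proof of Lemma~\ref{lem:embeddedtree}, with $\max$ replaced by $+$ in the aggregation formulas and with the same case analysis for swapping left/right when subclusters have opposite orientations. Once this augmented structure is available, summing the closed-form per-interval contributions (restricted to $s\in[s_i,s_{i+1}]_f\cap\HH$, which is trivial since each face edge is either entirely in $H$ or entirely out) over all $k=O(n)$ pivots yields the target integral in total $O(n\log n)$ time.
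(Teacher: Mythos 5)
Your plan follows the paper's own route essentially step for step: pivot sequence from \cref{thm:pivoting}, per-interval closed forms for $\int\nu_s(e)\,ds$, vertex-weighted dynamic forests with subdivided edges for the tree part, and an augmentation of the embedded top tree of \cref{lem:embeddedtree} in which every $\operatorname{maxWeight}\cdots$ field gets a $\operatorname{sumWeight}\cdots$ counterpart plus an edge count so that the lazy $\operatorname{extra}\cdots$ surcharges become $\operatorname{extra}\cdot(\text{count})$. That last point is exactly the content of the paper's one-line remark that ``the dynamic forest we presented can be extended to have these operations,'' and you spell it out more explicitly than the paper does, including the indicator weights needed to restrict the aggregates to $E(H)$.

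There is, however, one step that fails as stated --- and it fails in the paper's sketch in the same way. You claim that $\nu_s(e)$ changes at rate $0$ for green edges. What is invariant for a green edge $xy$ (with $x$ blue, $y$ red) is the \emph{maximum} $\omega_s(xy)=\tfrac12(d(s,x)+\ell(xy)+d(s,y))$, because $d(s,x)$ and $d(s,y)$ change by $\mp\lambda'$. The \emph{integral} does not stay put: writing $m$ for the distance from $x$ to the split point $m_s(xy)$, the blue portion of length $m$ has its distances decreasing at unit rate and the red portion of length $\ell(xy)-m$ has them increasing at unit rate, so $\tfrac{d}{d\lambda'}\nu_s(xy) = \bigl(\ell(xy)-m\bigr)-m = \ell(xy)-2m$, and $m$ itself drifts at unit rate toward $y$. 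Hence $\nu_s$ is a \emph{quadratic}, not constant (and not even affine with a fixed slope), function of the slide parameter on each pivot-free interval; a quick check with $\ell=10$, $d(s,x)=3$, $d(s,y)=5$ gives $\nu=64$ before and $\nu=61$ after a unit slide. Your five aggregates therefore do not determine the green contribution, and the stored $\nu$ values of green edges cannot be refreshed by a uniform additive surcharge along the $a$-to-$b$ path. The repair stays within your framework but needs more bookkeeping: using $\nu_s(e)=\omega_s(e)\,\ell(e)-\tfrac12\bigl(m^2+(\ell(e)-m)^2\bigr)$ with $\omega_s(e)$ and $\ell(e)$ constant during a slide, it suffices to maintain, over the green edges of $H$ (the $a$-to-$b$ path in $C^*_s$), the path aggregates $\sum \ell(e)$, $\sum m$, $\sum m^2$ and the edge count, all of which behave polynomially under the uniform shift $m\mapsto m+\lambda_i$ and can be kept in the same augmented top tree. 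You should state this correction explicitly; without it the algorithm computes the wrong value whenever some green edge has $d(s,x)\neq d(s,y)$.
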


Fix an embedding of a planar graph $G$.
By adding over all faces $f$ of the embedding the value
$\int_{s\in \HH\cap \GG_f} \int_{q\in \HH} d_\GG(s,q) \, dq \, ds$, 
we obtain $2\cdot \sumdist(\HH,\GG)$, because each edge of $H$
is twice on the boundary of a face, in $O(nF\log n)$ time.
From this, it is trivial to compute $\mean(\HH,\GG)$, and we have
shown the part of \cref{thm:planar} concerning the mean distance.

%%%%%%%%%%%%%%%%%%%%%%%%%%%%%%%%%%%%%%%%%%%%%%%%%%%%%%%%%%%%%%%%%%%%%%%%%%%%%%%%%%%%%%%%%%%%%%%%%%%%%%%%%%%%%%%%%%%%%%%%%%%%%%%%%%%%%%%%%%%%%%%%%%%%%%%%%%%%%%%%%%%%%%%%%%%%%%%%%%%%%%%%%%%%%%%%%%%%%%%%%%%%%%%%%%%%%%%%%%%%%%%%%%%%%%%%%%%%%%%%%%%%%%%%%%%%%%%%%%%%%%%%%%%%%%%%%

\section{Conclusion}
\label{sec:conclusion}
We presented the first subquadratic algorithms to compute the diameter and the mean distance in continuous graphs, for two non-trivial graph classes.
We expect that the approach for graphs parameterized
by the treewidth can be adapted for computing
other statistics defined by the distance between two points 
selected at random in a continuous subgraph $\HH\subseteq\GG$,
like a cumulative density function (CDF) and higher moments:
\begin{align*}
    &\text{for given $\delta$,  compute} 
        \CDF(\delta,\HH,\GG) =
                        \frac{1}{\ell(\HH)^2}\iint_{p,q\in \HH} 
                            \mathbbm{1}[d_{\GG}(p,q)\le \delta] \,dp \,dq ,\\
    &\text{median distance:}~~\sup\big\{ \delta\in \mathbb{R}_{\ge 0} \mid \CDF(\delta,\HH,\GG)\le 1/2 \big\},\\
    &\text{higher moments, such as }
        \frac{1}{\ell(\HH)^2}\iint_{p,q\in \HH} 
            \big(d_{\GG}(p,q)\big)^2\,dp \,dq.
\end{align*}
The main open question stemming from our work is whether the continuous mean distance
and diameter can be computed in subquadratic time for arbitrary planar graphs.
However, as already mentioned in the introduction, this requires dynamic trees with a set of suitable operations that---at the moment---seem to be out of reach.

\bibliography{bibliography}

\begin{appendix}
\section{A data structure to maintain edge-weighted embedded forest (Proof of \cref{lem:embeddedtree})}
\label{app:proof-data-structure}

In this section we prove \cref{lem:embeddedtree}.
More precisely, we present a data structure to maintain edge-weighted embedded forest with marked edges and 
    operations $\Create$, $\Cut$, $\Link$, $\GetEdgeValue$,
	$\AddLeftPath$ and $\MaxTree$, with  $O(\log n)$ amortized time each.

\begin{proof}[Proof of \cref{lem:embeddedtree}]
	We describe an adaptation of top trees.
	First, we use {\em ternarization}: each vertex of degree at least $4$ is
	replaced by a chain of degree $3$ vertices. See \cref{fig:trees2}.
	The order in the chain is given by the cyclic order of the edges.
	It is important to note that this does not change the set of edges
	that is to the left of a given path. It is cumbersome but standard that
	we have to keep a way to identify edges in the original forest
	and the transformed forest. We keep track at each
	vertex of the cyclic order of its edges.
	The new edges in the forest have weight $-\infty$, and thus
	they never define the maximum and any addition of bounded weights
	does not alter it. From now on, we only discuss how to maintain
	a forest of maximum degree~$3$.
	
	\begin{figure}[h]
	\centering
		\includegraphics[page=2]{trees}
		\caption{Left: transforming a tree to a tree of maximum degree $3$.}
		\label{fig:trees2}
	\end{figure}

	We build a top tree for each tree in the forest. In the following discussion, 
	we assume familiarity with the presentation of top trees in~\cite{AlstrupHLT05}
	and follow the paradigm used there. Recall that a \emph{path cluster} is 
    a cluster (subtree) with two boundary vertices. 
    For each path cluster $C$, we denote
	by $\pi(C)$ the \emph{cluster path} of $C$, that is, the unique path in $C$
    that connects its two boundary vertices.    
	For each path cluster $C$ and each boundary vertex $v$ of $C$
	let $\tau(C,v)$ be the connected component of $C$
	that contains $v$ after the removal of the edges of $\pi(C)$;
    we call it the \emph{hanging part} of $C$ at $v$.
	Therefore, the edge set of a path cluster $C$ can be broken into
	five parts: the edges on $\pi(C)$, the edges in the hanging part 
	at one boundary vertex, the edges in the hanging part at the other boundary vertex, 
	the edges to the left of $\pi(C)$, and the edges to the right of $\pi(C)$.
	We will keep information for each of them.

    For each cluster $C$ of the top tree, we store the following information, which essentially boils down to two values: the maximum weight among \textit{marked edges} in (part of) 
    the cluster, and an extra value that needs to be added to all edge weights 
    in (part of) the cluster. More precisely:
    
	\begin{itemize}
		\item For each non-path cluster $C$,
			$\maxWeight(C)$ is the maximum weight among the marked edges of~$C$.
		\item For each non-path cluster $C$,
			$\extra(C)$ is an extra additive weight 
			we have to add to all edges of $C$ in proper descendants of $C$.
		\item For each path cluster $C$, $\maxWeightPath(C)$ is 
			the maximum weight among the marked edges of the cluster path $\pi(C)$.
		\item For each path cluster $C$, $\extraPath(C)$ 
            is an extra additive weight we have to add to all edges
			of the path $\pi(C)$ 
			in proper descendants of $C$.
 		\item For each path cluster $C$ and 
            for each boundary vertex $v$ of $C$, $\maxWeightHanging(C,v)$ 
            is the maximum weight among	the marked edges in the hanging part $\tau(C,v)$.
 		\item For each path cluster $C$ and for each boundary vertex $v$ of $C$, 
            $\extraHanging(C,v)$ 
            is an extra additive weight we have to add to all edges
            of $\tau(C,v)$ 
			in proper descendants of $C$. 		
        \item For each path cluster $C$, at its creation we fix an orientation of its cluster path. 
        	This means that we select one boundary point as $\sstart(C)$ and the other as $\eend(C)$. 
        	The cluster path $\pi(C)$ is then oriented always from $\sstart(C)$ to $\eend(C)$.
        	We define $\Left(C)$ as the set of edges in $C$ to the left of $\pi(C)$.
        	Similarly, we define $\Right(C)$.
        	(We clarify that $\Left(C)$ and $\Right(C)$ are {\em not kept} with $C$, 
            it is notation we use to avoid repetitions.)		
		\item For each path cluster $C$, $\maxWeightLeft(C)$ is 
			the maximum weight among all the marked edges in $\Left(C)$.
		\item For each path cluster $C$, $\extraLeft(C)$ 
            is an extra additive weight we have to add to all edges
			of $\Left(C)$ 
			in proper descendants of $C$. 	
		\item For each path cluster $C$, $\maxWeightRight(C)$ is 
			the maximum weight among all the marked edges in $\Right(C)$.
		\item For each path cluster $C$, $\extraRight(C)$ 
            is an extra additive weight we have to add to all edges
			of $\Right(C)$ in proper descendants of $C$.	
	\end{itemize}
	
	\begin{figure}[]
		\centering
		\includegraphics[page=3,width=\textwidth]{trees}
		\caption{Three of the five cases that appear when a cluster
			is obtained from joining two clusters, that is, when the top node
			$C$ is the parent of $A$ and $B$ in the top tree.
			The red path is the cluster path of $C$ (left and center),
			while the thinner, blue path is a cluster path for the 
			children $A$ and $B$.
			The filled dots are boundary vertices of $C$, the empty dots
			are boundary vertices of the children of $A$ and $B$ that are
			not boundary vertices for $C$. In all cases, $w$ is
			the common boundary vertex of $A$ and $B$.}
		\label{fig:trees3}
	\end{figure}
	
	We now discuss how this information is maintained through the internal operations 
	of top trees. To jump to the interesting cases, we 
	consider the function \textsc{Split}$(C)$, where we split a path cluster $C$
	into its two forming clusters $A$ and $B$; there are two
	cases, shown in the left and center of \cref{fig:trees3}.
	For each path cluster child $D$ ($D=A$ or $D=B$ or both),
	we can detect whether $\pi(C)$ and $\pi(D)$ have the same orientation with the following
	observation:
	$\pi(C)$ and $\pi(D)$ have the same orientation if and only if
	$\sstart(C)= \sstart(D)$ or 
	$\eend(C)=\eend(D)$.
	If $\pi(D)$ and $\pi(C)$ have the same orientation, we update
	\begin{align*}
		\extraLeft(D)&=\extraLeft(D)+ \extraLeft(C),\\
		\extraRight(D)&=\extraRight(D)+ \extraRight(C),\\
		\maxWeightLeft(D)&=\maxWeightLeft(D)+ \extraLeft(C),\\
		\maxWeightRight(D)&=\maxWeightRight(D)+ \extraRight(C),\\
		\extraPath(D)&=\extraPath(D)+\extraPath(C), \text{ and }\\
		\maxWeightPath(D)&=\maxWeightPath(D)+\extraPath(C).
	\end{align*}
	If, on the other hand, $\pi(D)$ and $\pi(C)$ have opposite orientations,
	then we update 
	\begin{align*}
		\extraLeft(D)&=\extraLeft(D)+ \extraRight(C),\\
		\extraRight(D)&=\extraRight(D)+ \extraLeft(C),\\
		\maxWeightLeft(D)&=\maxWeightLeft(D)+ \extraRight(C),\\
		\maxWeightRight(D)&=\maxWeightRight(D)+ \extraLeft(C),\\
		\extraPath(D)&=\extraPath(D)+\extraPath(C), \text{ and }\\
		\maxWeightPath(D)&=\maxWeightPath(D)+\extraPath(C).
	\end{align*}
	
	If in a split operation both $A$ and $B$ are path clusters (\cref{fig:trees3}, left), 
	we have to take care of the edges outside $\pi(C)$ that are incident to $w$ 
	because they are neither to the left nor to the right of $\pi(A)$ or $\pi(B)$.
	This is the green tree incident to $w$ inside $B$ in \cref{fig:trees3}, left.
	Because $w$ has degree at most $3$, there is at most one such an edge $e$ incident
	to $w$. If $e$ is to the left of $\pi(C)$, then for each child $D$ of $C$ we set
	\begin{align*}
		\extraHanging(D,w)&=\extraHanging(D,w)+\extraLeft(C), \text{ and }\\
		\maxWeightHanging(D,w)&=\maxWeightHanging(D,w)+ \extraLeft(C).
	\end{align*}
	Otherwise $e$ is to the right of $\pi(C)$ and we perform
	\begin{align*}
		\extraHanging(D,w)&=\extraHanging(D,w)+\extraRight(C), \text{ and }\\
		\maxWeightHanging(D,w)&=\maxWeightHanging(D,w)+ \extraRight(C).
	\end{align*}
	Finally, for each boundary vertex $u$ of $C$ and each child $D\in \{A,B\}$ 
	with the same boundary vertex, we set 
	\begin{align*}
		\extraHanging(D,u)&=\extraHanging(D,u)+\extraHanging(C,u), \text{ and }\\
		\maxWeightHanging(D,u)&=\maxWeightHanging(D,u)+ \extraHanging(C,u).
	\end{align*}	
			
	Let us discuss now what else has to be done during split
	if $C$ is a path cluster, $A$ is a path cluster, and $B$ is a non-path cluster;
	see the center of \cref{fig:trees3}.
	Let $w=A\cap B$ and let $u$ be the other common boundary of $A$ and $C$.
	We then have to update
	\begin{align*}
		\extra(B)&=\extra(B)+\extraHanging(C,w),\\
		\maxWeight(B)&=\maxWeight(B,w)+\extraHanging(C,w),\\
		\extraHanging(A,u)&=\extraHanging(A,u)+\extraHanging(C,u),\\
		\extraHanging(A,w)&=\extraHanging(A,w)+\extraHanging(C,w),\\
		\maxWeightHanging(A,u)&=\maxWeightHanging(A,u)+\extraHanging(C,u), \text{ and }\\
		\maxWeightHanging(A,w)&=\maxWeightHanging(A,w)+\extraHanging(C,w).
	\end{align*}	

	The other three cases of \textsc{Split}$(C)$ are similar and simpler because
	there is no additional extra to the left or to the right to be kept track of.
	For example, in the case in the right of \cref{fig:trees3}, all the values
	associated to $A$ and $B$ get increased by $\extra(C)$.
	
	The operation \textsc{Join}$(A,B)$ is similar. 
	Let us consider for example the case shown in \cref{fig:trees3}, left,
	where $C$ is a new path cluster obtained by merging two path clusters $A$ and $B$ with $w=A\cap B$,
	$u$ the common boundary vertex of $A$ and $C$, and $v$ the common boundary 
	vertex of $B$ and $C$.
	We set $\extraLeft(C)$,$\extraRight(C)$, $\extraPath(C)$, $\extraHanging(C,u)$ and $\extraHanging(C,v)$ to $0$  and set
	\begin{align*}
		\maxWeightHanging(C,u)&= \maxWeightHanging(A,u),\\
		\maxWeightHanging(C,v)&= \maxWeightHanging(B,v), \text{ and }\\
		\maxWeightPath(C)&= \max\{ \maxWeightPath(A),~ \maxWeightPath(B) \}.
	\end{align*}	
	Then, we look into the left and right sides of $\pi(C)$.
	We can select, for example, $\sstart(C)=u$ and $\eend(C)=v$,
	and initially set $\maxWeightLeft(C)= \maxWeightRight(C)= -\infty$.
	Then, if $\sstart(A)=u$, we have to update
	\begin{align*}
		\maxWeightLeft(C)&= \max\{ \maxWeightLeft(C),~\maxWeightLeft(A)\}, \text{ and }\\
		\maxWeightRight(C)&= \max\{ \maxWeightRight(C),~\maxWeightRight(A)\}.
	\end{align*}	
	Otherwise $\sstart(A)\neq u$ and we update
	\begin{align*}
		\maxWeightLeft(C)&= \max\{ \maxWeightLeft(C),~\maxWeightRight(A)\}, \text{ and }\\
		\maxWeightRight(C)&= \max\{ \maxWeightRight(C),~\maxWeightLeft(A)\}.
	\end{align*}	
	Similarly, if $\eend(B)=v$, we apply
	\begin{align*}
		\maxWeightLeft(C)&= \max\{ \maxWeightLeft(C),~\maxWeightLeft(B)\}, \text{ and }\\
		\maxWeightRight(C)&= \max\{ \maxWeightRight(C),~\maxWeightRight(B)\},
	\end{align*}
	while, if $\eend(B)\neq v$, we set
	\begin{align*}
		\maxWeightLeft(C)&= \max\{ \maxWeightLeft(C),~\maxWeightRight(B)\}, \text{ and }\\
		\maxWeightRight(C)&= \max\{ \maxWeightRight(C),~\maxWeightLeft(B)\}.
	\end{align*}
	Finally, we must take care of the unique edge $e$ attached to $w$ outside of $\pi(C)$, 
    if it exists, and the possible subtree of $e$.
	If $e$ is to the left of $\pi(C)$, then we set $\maxWeightLeft(C)$ to
	\begin{align*}
		\max\{ \maxWeightLeft(C),~\maxWeightHanging(A,w), 
		~\maxWeightHanging(B,w)\}.
	\end{align*}
	(Only one of the last two values is not infinity, because $e$ belongs to only one of the children.)
	If $e$ exists and is not to the left of $\pi(C)$, then we set $\maxWeightRight(C)$ to
	\begin{align*}
		\max\{ \maxWeightRight(C),~\maxWeightHanging(A,w), 
		~\maxWeightHanging(B,w)\}.
	\end{align*}
	An easy way to know whether $e$ is to the left or the right of $\pi(C)$,
    is to store for each cluster the extreme edges of $\pi(C)$. The two
    edges of $\pi(C)$ incident to $w$ can then be obtained from $\pi(A)$ and $\pi(B)$.
    The other cases for join are similar or simpler.
	
	With this, we can implement the desired operations.
	$\Create$ and $\Cut$ are implemented as usual.
	The function $\Link$ is implemented by taking care of the ternarization.
	
	The function $\GetEdgeValue(uv)$ is implemented as  
	$\maxWeightPath(\textsc{Expose}(u,v))$.
	The function $\MaxTree(u)$ is implemented calling
	$\maxWeight(\textsc{Expose}(u))$
	
	The function $\AddLeftPath(\Delta,u,v)$ is implemented
	by calling $\textsc{Expose}(u,v)$, which returns the root node $R$
	of the top tree with $u$ and $v$ as boundary vertices.  
	If $u=\sstart(R)$, then we increase $\extraLeft(R)$
	and $\maxWeightLeft(R)$ by $\Delta$, 
	otherwise we increase $\extraRight(R)$ and $\maxWeightRight(R)$
	by $\Delta$.
\end{proof}

\end{appendix}

\end{document}